\numberwithin{equation}{section}
\def\p{\partial}
\def\G{\Gamma}
\def\g{\gamma}
\def\s{\sigma}
\def\zb{\bar z}
\def\a{\alpha}
\def\A{{\mathcal A}}
\def\B{{\mathcal B}}
\def\C{{\mathcal C}}
\def\D{{\mathcal D}}
\def\cL{{\mathcal L}}
\def\K{{\cal K}}
\def\K{{\mathcal K}}
\def\P{{\mathcal P}}
\def\X{{\mathcal X}}
\def\res{\operatorname{res}}
\newtheorem{theo}{Theorem}[section]
\newtheorem{cor}[theo]{Corrolary}
\newtheorem{lem}[theo]{Lemma}
\theoremstyle{definition}
\newtheorem{dfn}[theo]{Definition}
\newtheorem{rem}[theo]{Remark}
\newcommand {\BC} {\mathbb C}
\newcommand {\BR} {\mathbb R}
\newcommand {\BP} {\mathbb P}
\newcommand {\BZ} {\mathbb Z}
\newcommand {\CalA} {\mathcal A}
\newcommand {\CalK} {\mathcal K}
\newcommand {\CalM} {\mathcal M}
\newcommand {\CalN} {\mathcal N}
\newcommand {\CalO} {\mathcal O}
\newcommand{\fo}{\vert\kern -.03in\_}
\newcommand {\ii} {\mathrm{i}}
\newcommand{\pa}{{\partial}}
\newcommand{\pb} {{\bar\partial}}
\begin{document}

\renewcommand{\thefootnote}{}

\newcommand{\arXivNumber}{2106.14201}

\renewcommand{\PaperNumber}{006}

\FirstPageHeading

\ShortArticleName{Novikov--Veselov Symmetries of the Two-Dimensional $O(N)$ Sigma Model}

\ArticleName{Novikov--Veselov Symmetries\\ of the Two-Dimensional $\boldsymbol{O(N)}$ Sigma Model\footnote{This paper is a~contribution to the Special Issue on Mathematics of Integrable Systems: Classical and Quantum in honor of Leon Takhtajan.

~~\,The full collection is available at \href{https://www.emis.de/journals/SIGMA/Takhtajan.html}{https://www.emis.de/journals/SIGMA/Takhtajan.html}}}

\Author{Igor KRICHEVER~$^{\rm acd}$ and Nikita NEKRASOV~$^{\rm bce}$}

\AuthorNameForHeading{I.~Krichever and N.~Nekrasov}

\Address{$^{\rm a)}$~Department of Mathematics, Columbia University, New York, USA}
\EmailD{\href{mailto:krichev@math.columbia.edu}{krichev@math.columbia.edu}}

\Address{$^{\rm b)}$~Simons Center for Geometry and Physics, Stony Brook University, Stony Brook NY, USA}
\EmailD{\href{mailto:nnekrasov@scgp.stonybrook.edu}{nnekrasov@scgp.stonybrook.edu}}

\Address{$^{\rm c)}$~Center for Advanced Studies, Skoltech, Russia}

\Address{$^{\rm d)}$~Higher School of Economics, Moscow, Russia}

\Address{$^{\rm e)}$~Kharkevich Institute for Information Transmission Problems, Moscow, Russia}

\ArticleDates{Received October 19, 2021; Published online January 24, 2022}

\Abstract{We show that Novikov--Veselov hierarchy provides a complete family of com\-muting symmetries of two-dimensional $O(N)$ sigma model. In the first part of the paper we use these symmetries to prove that the Fermi spectral curve for the double-periodic sigma model is algebraic. Thus, our previous construction of the complexified harmonic maps in the case of irreducible Fermi curves is complete. In the second part of the paper we generalize our construction to the case of reducible Fermi curves and show that it gives the \emph{conformal} harmonic maps to \emph{even}-dimensional spheres. Remarkably, the solutions are parameterized by spectral curves of \emph{turning points} of the elliptic Calogero--Moser system.}

\Keywords{Novikov--Veselov hierarchy; sigma model; Fermi spectral curve}

\Classification{14H70; 17B80; 35J10; 37K10; 37K20; 37K30; 81R12}

\begin{flushright}
\begin{minipage}{53mm}
\it To Leon Takhtajan on occasion\\ of his 70$\,{}^{th}$ birthday
\end{minipage}
\end{flushright}

\renewcommand{\thefootnote}{\arabic{footnote}}
\setcounter{footnote}{0}

\section{Introduction}

Harmonic maps of two-dimensional Riemann surface $\Sigma$ to a Riemann manifold $M$
are of interest both in physics and mathematics. The harmonic maps are the critical points of the Dirichlet functional, the sigma model action
\begin{gather} \label{action}
S(X)=\int_{\Sigma} \sqrt{h} h^{ab} g_{ij}(X)\p_{a} X^i\p_{b}X^j \, {\rm d}x {\rm d}y = \int_{\Sigma} g_{ij}(X){\pa} X^i {\bar\pa} X^j .
\end{gather} Here the map ${\Phi} \colon \Sigma\to M$ is represented, locally, by the pullbacks $X^i(x, y) = {\Phi}^{*} x^i$, $i = 1, \dots, \dim M$, of the coordinate functions $(x^{i})$ on $M$ to $\Sigma$, with $x$, $y$ local real coordinates on~$\Sigma$. In~\eqref{action} $(x, y)$ denote the real coordinates on $\Sigma$, while ${\pa} = {\pa}_{z}$, ${\pb} = {\p}_{\zb}$ denote the $z$ and $\zb$-derivatives in the complex structure
determined by the conformal class of the metric $h_{ab}$ via $h_{ab} {\rm d}y^{a}{\rm d}y^{b} \propto {\rm d}z{\rm d}{\bar z}$. Finally, $g_{ij}(X) {\rm d}X^i {\rm d}X^j$ is a Riemann metric on the target manifold~$M$.

\begin{rem}
Assume $M$ is a real slice of a complex manifold $M_{\BC}$ whose metric $g_{ij}(X)\,{\rm d}X^i {\rm d}X^j$ admits an analytic continuation to $M_{\BC}$ as a holomorphic symmetric $(2,0)$-tensor field $g_{\BC}$ with positive definite real part. Likewise, we can allow for the complex metric $h_{\BC} = h_{ab}\,{\rm d}y^a {\rm d}y^b$ on the worldsheet $\Sigma$, which defines two conformal structures: one by $h_{\BC} ({\pa}_z, {\pa}_z) = 0$, another by $h_{\BC} ({\pa}_{\zb}, {\pa}_{\zb}) = 0$.
For real $h$ these structures are complex conjugate to each other (as points in~${\CalM}_{g}$). In what follows we shall make these abstract constructions quite explicit in the case of a two dimensional torus $\Sigma = S^{1} \times S^{1}$.

In case $M$ admits a group $G$ of isometries, extending, possibly infinitesimally, to the action of $\operatorname{Lie}(G_{\BC})$ on $M_{\BC}$ preserving $g_{\BC}$, the action~\eqref{action} can be deformed to
\begin{gather}
\label{eq:twaction}
S_{h_{\BC}, A_{\BC}}(X)=\int_{\Sigma} \sqrt{h_{\BC}} h_{\BC}^{ab} (g_{\BC})_{ij}(X){\nabla}_{a} X^i {\nabla}_{b}X^j \, {\rm d}x{\rm d}y,
\end{gather}
where ${\nabla}_a X^{i} = {\pa}_{a} X^{i} + (A_{\BC})_{a}^{\rm A} V_{\rm A}^{i}(X)$ is the covariant derivative in the background gauge field~$A_{\BC}$ on~$\Sigma$, a connection on a principal $G_{\BC}$-bundle $\mathcal P$ over $\Sigma$. The action~\eqref{eq:twaction} then describes the harmonic sections of the bundle
\[
M_{\BC} \times_{G_{\BC}} {\mathcal P}
\]
over $\Sigma$. We call this problem a \emph{twisted sigma model}. We only consider the case of flat $G_{\BC}$-connections,
\[
{\pa}_{a} (A_{\BC})_{b} - {\pa}_{b} (A_{\BC})_{a} + [ (A_{\BC})_{a}, (A_{\BC})_{b} ] = 0
\]
 on a topologically trivial bundle ${\mathcal P} = {\Sigma} \times G_{\BC}$.
\end{rem}

\begin{rem} This paper is a continuation of~\cite{KN}, where the most important case is the complexified twisted one. In particular, it motivates the Fermi-curve approach we also employ presently. Nevertheless, in this paper, mostly, we study the real untwisted case.
\end{rem}

The zero-curvature representation of the equation of motion for the two-dimensional $O(N)$ sigma-model, whose target manifold is the sphere $M=S^{N-1}$ with the metric induced by a~standard imbedding into~${\BR}^N$, was discovered by Pohlmeyer in~\cite{pohl} where an infinite series of conservation laws was constructed.
In~\cite{mikh-zakh} these results were extended to a wider class of models including the principal chiral field model~-- the sigma model with the target space $M = G$, a Lie group with the bi-invariant metric $\operatorname{tr} \big(X^{-1}{\rm d}X\big)^{2}$.

In \cite{kr-dalamber} it was shown that {\it locally} the principal chiral field equations are integrable in the following strong sense: their solutions admit a non-linear analog of the d'Alembert representation as a superposition of functions depending only on one variable. The problem of constructing the {\it global}
solutions, i.e., solutions defined on a compact Riemann surface is much harder.

\begin{rem} The models studied in \cite{kr-dalamber, pohl,mikh-zakh} were the sigma models with Minkowski source. Accordingly, the equations of motion of these models are hyperbolic. The equations of motion of the sigma models studied in the present paper are elliptic. The equations of motion and their zero-curvature representation
are easily transformed to the elliptic case by the simple change of the light-cone variables
$\xi_{\pm} = x \pm t$ to the holomorphic-antiholomorphic variables~$z$,~$\zb$.
However, the nature of solutions in the two cases is quite different, the compactness of the source $\Sigma$ playing a key r{\^o}le.
\end{rem}

\subsection{Worldsheet is a two-torus}

Our main interest is in the double-periodic two-dimensional sigma model, i.e., we take $\Sigma$ to be a~two-dimensional torus $T^2 = {S}^{1} \times {S}^{1}$. Its conformal structure is parameterized by the complex number $\tau$, with $\operatorname{Im} \tau >0$, which usually is introduced by representing ${\Sigma} \approx {\BC}/{\BZ} \oplus {\tau}{\BZ}$,
\begin{gather}
\begin{split}
&\Sigma= {\BR}^{2} / {\BZ}^{2} = \{(z,\zb)\sim (z+\omega_\alpha,\zb+\bar \omega_\alpha) \, | \, {\a} = x, y \} , \\
& \omega_x= {\bar\omega}_{x} = 1 , \qquad \omega_y=\tau , \qquad {\bar\omega}_{y} = {\bar\tau}.
\end{split}\label{sigma}
\end{gather}
In the complexified approach we have the complex metric
\begin{gather}
h_{\BC} = \frac{( {\rm d}x + {\tau} {\rm d}y )( {\rm d}x + {\bar\tau} {\rm d}y )}{{\tau}_2},
\label{eq:eucl}
\end{gather}
whose conformal class is parametrized by two complex numbers ${\tau}_{1}$, ${\tau}_{2}$, with $\operatorname{Re}({\tau}_{2}) > 0$, or, equivalently, by
\[
{\tau} = {\tau}_{1} + {\ii}{\tau}_{2} , \qquad {\bar\tau} = {\tau}_{1} - {\ii}{\tau}_{2}.
\]
We shall call the real two-torus $T^2$ endowed with the complex metric \eqref{eq:eucl} with $\operatorname{Re}({\tau}_{2}) > 0$ a~{\it Euclidean torus $T^2$} in what follows.

The holomorphic vector field ${\pa}_{z}$ and the antiholomorphic vector field ${\pa}_{\zb}$ are given by
\begin{gather*}
{\pa}_{z} = \frac{{\pa}_{y} - {\bar\tau} {\pa}_{x}}{2{\ii}{\tau}_2} , \qquad
{\p}_{\zb} = \frac{{\pa}_{y} - {\tau} {\pa}_{x}}{-2{\ii}{\tau}_2}
\end{gather*}
and the (${\BC}$-linear) Hodge star acts on $1$-forms on $\Sigma$ via
\begin{alignat*}{3}
&\star ( {\rm d}x + {\tau}{\rm d} y) = {\ii} ({\rm d}x + {\tau} {\rm d}y) , \qquad&& \star ( {\rm d}x + {\bar\tau} {\rm d}y ) = - {\ii} ({\rm d}x + {\bar\tau} {\rm d}y ) , &\\
 &\star {\rm d}x = - \frac{ {\tau}_{1} {\rm d}x + {\tau} {\bar\tau} {\rm d}y}{{\tau}_2} , \qquad &&\star {\rm d}y = \frac{{\rm d}x + {\tau}_{1} {\rm d}y}{{\tau}_2}.&
\end{alignat*}
In what follows we shall use the notations $T_{x}$, $T_{y}$ for the translations
\begin{gather}
T_{x} \colon \ (x,y) \mapsto (x+1, y) , \qquad T_{y}\colon \ (x,y) \mapsto (x, y+1)
\label{eq:deck}
\end{gather}
of the universal cover ${\tilde\Sigma} = {\BR}^{2}$ of ${\Sigma} = {\tilde\Sigma}/{\BZ}^{2}$, also known as the deck transformations.

Below, in using the derivatives along $\Sigma$, we either use specifically $(1,0)$ and $(0,1)$-derivati\-ves~${\pa}_{z}$ and~${\pa}_{\zb}$, or use the exterior derivative
\[
{\rm d}_{\Sigma} = {\rm d}x {\pa}_{x} + {\rm d}y {\pa}_{y} = {\rm d}z {\pa}_{z} + {\rm d}{\zb} {\pa}_{\zb} .
\]
The differentials ${\rm d}w_{\a}$, ${\rm d}k$, ${\rm d}{\Omega}$ etc.\ denote the $(1,0)$-differentials along the Fermi-curve.

\subsection{Spectral curves from zero-curvature equations}
The harmonic maps of the two-torus $T^2$ to $S^3$ were constructed in \cite{hitchin90} via the zero-curvature representation for the principal chiral ${\rm SU}(2)$ model. The latter is the compatibility condition for the system of two linear equations
\begin{gather}\label{l-sys}
\left(\p_z-\frac {U(z,\zb)}{\lambda+1}\right)\Psi(z,\zb,\lambda)=0 ,\qquad
\left(\p_{\zb}+\frac{V(z,\zb)}{\lambda-1}\right)\Psi(z,\zb,\lambda)=0
\end{gather}
with $U=X^{-1}\p_z X$ and $V=X^{-1}\p_{\zb} X$. Since the fundamental group of a torus is abelian the operators
\begin{gather*}
B_\alpha(\lambda):= ( T_{\alpha}^{*} \Psi(\lambda) ) \Psi^{-1}(\lambda) , \qquad \alpha= x, y
\end{gather*}
of monodromy of flat connection \eqref{l-sys} commute, $[ B_{x}({\lambda}) , B_{y}({\lambda}) ] = 0$. Therefore, the branch points of the two-sheeted covers (for the ${\rm SU}(2)$ case the matrices $B_\a$ are of rank 2) of the complex $\lambda$-plane defined by the characteristic equations
\begin{gather}\label{charB}
R_{\a} (\mu_{\a},\lambda):=\det \left(\mu_{\a} \cdot \mathbb I-B_\alpha(\lambda)\right)=0
\end{gather}
coincide. The key step in~\cite{hitchin90} was the proof that the number of these branch points is
\emph{finite}. The hyperelliptic curve defined by these branch points is a normalization of the analytic spectral curves~\eqref{charB}.

In general, with this construction, given the hyperelliptic curve and a point on its Jacobian, we get a quasi-periodic harmonic map of the universal cover of $\Sigma$. Solving the \emph{periodicity} constraint is a pain. Indeed, the double-periodic maps are singled out by certain conditions on periods of certain abelian differentials of the second kind on the hyperelliptic curve.
These equations are transcendental and so hard to control that the genus of the hyperelliptic curves in the examples found in~\cite{hitchin90} is {\it at most} three.

For further comparison with the results of that work, we recall one more observation made in~\cite{hitchin90}: \emph{there is only one class of harmonic maps which can not be reconstructed by the algebraic-geometrical data, and those are the ones for which the Floquet multipliers $\mu_\a$ are constants}.
These maps are the conformal maps to a \emph{totally geodesic} $2$-sphere in $S^3$ which up to isometry of $S^3$ is the locus
\[ \big\{ g^2=-1\, | \, g \in {\rm SU}(2) \big\} \subset {\rm SU}(2) \approx S^{3}, \]
i.e., the \emph{big equator two-sphere}. Although for such maps the spectral curve does not exist,
they can be found explicitly~\cite{bp}.

\begin{rem}
\emph{Conformal} harmonic maps are the maps $X$ for which the pull-back $X^*(g)$ of the target space metric $g$ is conformally equivalent to the worldsheet metric $h$ on $\Sigma$. They are of special interest in geometry since their images are immersed {\it minimal} surfaces. A slight generalization of conformal harmonic maps are {\it branched conformal} harmonic maps, where $X^*(g)$ becomes degenerate at a finite number of points.
\end{rem}

\begin{rem}
We thank N.~Hitchin for bringing to our attention the reference~\cite{bfpp} where non-conformal harmonic maps~$T^{2} \longrightarrow S^n$ for arbitrary $n$ were constructed from finite-dimensional orbits of symmetries of zero-curvature equations, while in~\cite{burst}, using the ideas from twistor theory and the theory of solitons it was shown that \emph{all} harmonic maps, except for \emph{one special class}, can be found in terms of some finite-dimensional orbits of commuting Hamiltonian flows. The \emph{special class} harmonic maps, the ones outside the reach of the zero-curvature representation framework, were called \emph{pseudoholomorphic} in~\cite{calabi}, \emph{super-minimal} in~\cite{bryant} or \emph{isotropic} in~\cite{ew}. We treat these maps in the second part of the paper.
\end{rem}

\subsection{Spheres from Euclidean spaces, and complexification}

Let ${\BC}^{N}$ be equipped with a non-degenerate quadratic form ${\bf q} \mapsto ({\bf q},{\bf q})$.
The group $G_{\BC}$ of linear transformations ${\bf q} \mapsto g \cdot {\bf q}$ preserving $({\bf q}, {\bf q})$
is isomorphic to $O(N, {\BC})$. The hypersurface $M_{\BC}$ defined by $({\bf q},{\bf q}) = 1$
has the complex metric $g_{\BC}$ induced from the complexified Euclidean metric $({\rm d}{\bf q}, {\rm d}{\bf q})$ on ${\BC}^{N}$. In the basis ${\bf e}_{i}$, $i = 1, \dots , N$ in ${\BC}^{N}$ in which $({\bf q},{\bf q})$ looks like
\[
({\bf q},{\bf q}) = \sum_{i=1}^{N} \big(q^{i}\big)^{2}
\]
the real slice $M = S^{N-1}$ corresponds to $q^{i} \in {\BR}$. There are other real slices, of de Sitter, anti-de Sitter, or Lobachevsky geometry, where some of the $q^i$'s are purely imaginary.

Turning on the twists breaks the $O(N, {\BC})$ symmetry down to its maximal torus. Given a flat $O(N, {\BC})$-connection $A_{\BC}$ on $\Sigma = S^1 \times S^1$ one can find a basis ${\bf e}_{i}$ such that
the holonomies
\begin{gather*}
g_{x} = P \exp \left(\oint {\rm d}x ( A_{\BC})_{x}\right) , \qquad g_{y} = P \exp \left( \oint {\rm d}y (A_{\BC})_{y}\right)
\end{gather*}
(the flatness guarantees the conjugacy classes of the path ordered exponentials do not depend on~$y$ and~$x$, respectively)
are diagonal: for $N = 2n$,
\begin{gather}
\begin{split}
& g_{x} ( {\bf e}_{2a-1} \pm {\ii} {\bf e}_{2a} ) = w_{x, a}^{\pm 1} ( {\bf e}_{2a-1} \pm {\ii} {\bf e}_{2a} ) , \\
& g_{y} ( {\bf e}_{2a-1} \pm {\ii} {\bf e}_{2a} ) = w_{y, a}^{\pm 1} ( {\bf e}_{2a-1} \pm {\ii} {\bf e}_{2a} ) , \qquad a = 1, \dots , n,
\end{split}
\label{eq:eventwists}
\end{gather}
with some complex numbers $w_{x,a}, w_{y, a} \in {\BC}^{\times}$, $a = 1, \dots, n$, while for $N = 2n+1$~\eqref{eq:eventwists} are ac\-companied by additional four choices:
\[
g_{x} {\bf e}_{N} = \pm {\bf e}_{N} , \qquad g_{y} {\bf e}_{N} = \pm {\bf e}_{N} .
\]

\begin{rem}
Another deformation of the sigma-model breaking $O(N, {\BC})$ invariance is to disentangle the quadratic forms
$({\rm d}{\bf q}, {\rm d}{\bf q})$ and $({\bf q},{\bf q})$. In this way on gets an ellipsoid (in modern parlance, a squashed sphere)
sigma model. Its point particle limit, i.e., the study of geodesics on the ellipsoid, is connected~\cite{Mumford} to the so-called winding string \cite{KN} case of the twisted sigma model on a sphere. The connection in the general case is not known.
\end{rem}

In the spherical $O(N)$-symmetric case, the embedding into the vector space allows to reformulate the sigma model as a constrained linear sigma model:
\begin{gather}
S \sim \int_{\Sigma} ( {\pa}_{z} {\bf q}, {\pa}_{\zb} {\bf q} ) + u ( ({\bf q},{\bf q} ) - 1 ),
\label{eq:linear}
\end{gather}
where $u = u(x,y)$ is a double-periodic function on $\Sigma$, the Lagrange multiplier.

\subsection{Fermi-curves}
Our approach to solving the $O(N)$ sigma model equations does not use their zero-curvature representation. It is a further development of the scheme proposed in \cite{kr87,kr94} and extended in~\cite{KN}. Roughly, this approach can be described as the construction of {\it integrable linear operators with self-consistent potentials}. The construction consists of two steps. The first step is to parametrize a periodic linear operator by \emph{a~spectral curve and a line bundle $($divisor$)$ on it}. The spectral curve parametrizes Bloch solutions of the linear equation. The second step of the construction is the characterization of the spectral curves for which there exists a set of points on the curve such that the corresponding Bloch solutions satisfy quadratic relation.

Unlike the zero-curvature representation,
the linear equations
\begin{gather}\label{shrod}
 ( -{\pa}_z{\pa}_{\zb}+u ) q^{i} (z,\zb) =0 , \qquad i=1,\dots,N,
\end{gather}
are the equations of motion following from \eqref{eq:linear} and not some auxiliary linear equation. The potential $u$ is the Lagrange multiplier enforcing the constraint that the vector ${\bf q} = (q^{i})_{i=1}^{N} \in {\BC}^{N}$ with the coordinates $q^i$ lies on the complexified unit sphere, i.e.,
\begin{gather}\label{const}
 ( {\bf q} , {\bf q} ) = 1.
\end{gather}
From the equations~\eqref{shrod} and~\eqref{const} it is easy to express $u$ in terms of the solutions to the linear equation:
\begin{gather}\label{u}
u = - ( {\pa}_z {\bf q} , {\p}_{\zb} {\bf q} ).
\end{gather}

\subsection{Novikov--Veselov hierarchy}
In the next section we show that the Novikov--Veselov (NV) hierarchy is a symmetry of the~$O(N)$ model.
Recall, that NV hierarchy \cite{nv1,nv2} is the system of compatible linear equations
\begin{gather}\label{NVa}
H\psi:=(-\p_z\p_{\zb}+u)\psi=0,
\\
\label{NVb}
(\p_{t_n}-L_n)\psi=0 ,
\end{gather}
with
\begin{gather*}
L_{n} = {\rm L}_{n} ( z, {\zb}, {\pa}_{z}) = {\p}_z^{2n+1}+\sum_{i=1}^{2n-1}w_{i,n}(z,\zb)\p_z^i ,
\end{gather*}
 differential operator in the variable $z$ with $(z,{\zb})$-dependent coefficients.
The compatibility condition of the equations~(\ref{NVa}), (\ref{NVb}) is the so-called Manakov's triple equation
\begin{gather}\label{manak}
{\pa}_{t_{n}} H = [L_{n},H]+B_{n} H,
\end{gather}
where $B_n= {\rm B}_{n}(z, {\zb}, {\p}_{z})$ also a differential operator in the variable $z$ with $(z, {\zb})$-dependent coefficients.

\begin{rem} The complete set of times in the NV hierarchy contains twice as many variables:
$(t,\bar t):=(t_1,\bar t_1,t_2,\bar t_2,\dots)$. The equation defining the evolution under the times $\bar t_n$ have the Manakov triple form with $( L_n,B_n )$ replaced by the differential operators
\[ {\bar L}_n = {\bar{\rm L}}_{n}(z, {\zb}, {\pa}_{\zb}), \qquad {\bar B}_n = {\bar{\rm B}}_{n}(z, {\zb}, {\pa}_{\zb}) \]
in the variable~$\zb$.
\end{rem}

To prove that the
NV hierarchy is the hierarchy of symmetries of the $O(N)$ sigma model
means to prove the constraint~\eqref{const} remains invariant along the $t_n$ and ${\bar t}_m$ evolution,
given by
\begin{gather*}
\p_{t_n}q^i = L_n q^i , \qquad \p_{{\bar t}_{m}} q^{i} = {\bar L}_{m} q^{i} .
\end{gather*}
As we shall see soon, this
is highly nontrivial, yet purely local statement, i.e., not relying on any assumptions on the global behaviour of solutions.

The Manakov triple-type equations assume the Lax form upon a restriction to the space of zero modes of the operator $H$, i.e., onto the space of solutions of equations~\eqref{NVa}. Roughly speaking, they preserve the \emph{spectral data} associated with \emph{zero-eigenlevel} of the operator $H$. For operators $H$ with doubly periodic potential~$u$, the latter means that the flows defined by the equation~\eqref{manak} preserve the Bloch--Floquet set $C_u$ of the operator $H$: the set of pairs $p:=(w_{x},w_{y})\subset {\BC}^{\times} \times {\BC}^{\times}$ for which there exists a double Bloch solution
${\psi}_{p} (z, {\zb} ) = {\psi} ( z,{\zb}; p )$ of the equation~\eqref{NVa}:
\begin{gather}\label{bloch}
T_{\a}^{*} {\psi}_{p} = w_{\a} (p) {\psi}_{p} ,\qquad {\alpha} = x, y .
\end{gather}
In \cite{kr-spec} it was shown that for a generic smooth periodic potential the locus $C_{u}$ is a smooth Riemann surface of infinite genus. Moreover, it was shown that the \emph{algebraically-integrable potentials} are dense in the space of all smooth periodic potentials. The latter are the potentials for which the normalization ${\C}_{u}$ of $C_{u}$ called the \emph{Fermi-curve} is of finite genus. For such potentials~${\C}_{u}$ is compactified by two smooth infinity points $P_{\pm}$.

Another characterization of algebraically integrable potentials similar to the one that defines the famous finite-gap potentials of the one-dimensional Schr\"odinger operator is as follows: they are such potentials, that among the flows given by the equations~\eqref{manak}, only a {\it finite number} are linearly independent.

The $t_n$ derivative of \eqref{shrod} with $u$ given by \eqref{u} gives that the vector ${\bf v}_{n} = L_{n} {\bf q}$ satisfies the linearized equation:
\begin{gather}
{\mathfrak{D}} {\bf v}_{n} = 0 ,\nonumber \\
{\mathfrak{D}} =
\p_z\p_{\zb}+ \big( {\bf q}\otimes(\p_{\zb} {\bf q})^t \big)\p_z+ \big( {\bf q} \otimes(\p_{z} {\bf q})^t \big)\p_{\zb}+( {\pa}_z {\bf q},{\pa}_{\zb}{\bf q}) .\label{var}
\end{gather}
The elliptic operator \eqref{var} is defined on $T^{2}$, i.e., its coefficients are double periodic. Therefore $\operatorname{ker}({\mathfrak{D}})$ is finite dimensional. Hence, for any solution of the elliptic double periodic $O(N)$ sigma model the corresponding potential $u$ \eqref{u} is algebraically integrable, i.e., the associated Fermi curve~$\C_u$ is of finite genus.

\begin{rem} That result is an analog of the above mentioned key result in~\cite{hitchin90}.
\end{rem}

Since it was proved in \cite {KN} that a smooth {\it irreducible} Fermi curve of finite genus gives a solution to the periodic $ O (N) $ sigma model if and only if there is a certain meromorphic function on it, it is tempting to say that a periodic $ O(N)$ sigma model is {\it algebraically integrable}.
Unfortunately, this is only partially true. The periodicity constraints that such a curve should satisfy are
given in terms of periods of meromorphic differentials on the curve. They are transcendental and hard to control.

In Section~\ref{s:Winfty} we extend our construction to the case of {\it reducible} Fermi curves. It gives
branched conformal maps of two-torus to the real {\it even} dimensional spheres $M=S^{2n}$, i.e., the periodic solutions of the two-dimensional $O(2n+1)$ sigma model. It turns out that for the reducible Fermi curve the periodicity constraints admit an explicit solution: the irreducible components are the spectral curves of the elliptic Calogero--Moser system corresponding to its {\it turning points}. We recall these notions in what follows.

\begin{rem} It should be emphasized that the conformal maps to $S^2$ are given by our construction with {\it nontrivial} reducible Fermi curves. As it was mentioned above for such maps the spectral curves arising in the framework of the zero-curvature representation are trivial. Therefore, in general these two types of the spectral curves are different. The question: are they interconnected in some special cases seems very interesting but at the moment is open.
\end{rem}

\section{Novikov--Veselov hierarchy}\label{s:NV}
The NV hierarchy in its original form \eqref{manak} is a system of equations on the potential $u$ of the Schr\"odinger operator $H$ and coefficients of operators $L_n$. These are well-defined equations in the sense that the number of equations is equal to the number of unknowns. Recently, their Sato-type representation was proposed in \cite {carpentier}. Below, following
the work of \cite{prym}, we present the hierarchy
in yet another form, the most suitable for our purposes. Namely, as a system of commuting flows on the space of special wave solutions of the Schrödinger equation.

\begin{rem} Throughout this section $z$ and $\zb$ are independent variables.
\end{rem}

\subsection{The phase space}
First recall that a wave (a.k.a.\ formal Baker--Akhiezer) solution of the Schr\"odinger equa\-tion~\eqref{NVa}$\!$ is a formal solution to~\eqref{NVa} of the form
\begin{gather}\label{psi}
{\psi}( z,{\zb}; k) = {\rm e}^{kz} \left(1+\sum_{s=1}^\infty {\xi}_s(z,\zb)k^{-s}\right).
\end{gather}
Substituting \eqref{psi} into \eqref{NVa} gives a system of equations
\begin{gather}\label{xis}
{\p}_{\zb}\xi_{s+1} = H {\xi}_{s} \equiv - {\pa}_{z}{\pa}_{\zb} \xi_s + u \xi_s , \qquad s=0,1,\dots,
\end{gather}
which recurrently determine all the $\xi_s$ uniquely once the initial conditions $\chi_s(z)=\xi_s(z,0)$ are given.

Thus the space ${\bf P}$ of pairs $(H , {\psi})$, where $H = - {\pa}_{z}{\pa}_{\zb} + u$ and ${\psi}$ is a formal wave solution is identified with the space:
\begin{gather}\label{p0}
{\bf P} \simeq \{ ( u(z,{\zb}) , {\chi}_{1}(z) , {\chi}_{2}(z) , \dots , {\chi}_{n}(z) , \dots ) \},
\end{gather}
i.e., with the space of {\emph{ a function of two variables, and a sequence of functions of one variable}}.

For any formal series $\psi$ of the form \eqref{psi} define the {\it dual} formal series
\begin{gather}\label{psidual}
\psi^*(z,\zb,k)={\rm e}^{-kz}\left(1+\sum_{s=1}^\infty \xi_s^*(z,\zb)k^{-s}\right)
\end{gather}
via the equations
\begin{gather}\label{dualdef}
\res_{\infty} (\psi^*\p_z^s\psi )\frac {{\rm d}k}k=-\delta_{s,0},\qquad s=0,1,\dots.
\end{gather}
Substitution of \eqref{psi} and \eqref{psidual} into the equation~\eqref{dualdef} gives the equations
recurrently defining~$\xi_s^*$ in terms of the coefficients of $\psi$:
\begin{gather}\label{s1}
\xi_1^*+\xi_1=0,\qquad \xi_2^*+\xi_2+\xi_1\xi_1^*+2 {\pa}_{z} \xi_{1}=0, \qquad \dots.
\end{gather}

\begin{lem}\label{lm:dualsol} If $\psi$ is a formal BA solution of the Schr\"odinger equation \eqref{NVa} then the dual BA function solves the same equation
\begin{gather*}
H {\psi}^{*}=0.
\end{gather*}
\end{lem}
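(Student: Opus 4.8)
The plan is to exploit the fact that both $\psi$ and $\psi^*$ are characterized by their principal part $e^{\pm kz}$ together with the vanishing conditions \eqref{dualdef}, and that the operator $H$ is formally self-adjoint with respect to a suitable pairing. First I would set $\phi := H\psi^*$ and observe from \eqref{psidual} and \eqref{NVa} that $\phi$ is again a formal series of the type \eqref{psidual}, i.e.\ $\phi = e^{-kz}\big(c_0 + c_1(z,\zb)k^{-1} + c_2(z,\zb)k^{-2}+\cdots\big)$ for some coefficients $c_j$, with in fact $c_0 = c_1 = 0$ since $H$ lowers the order of the $k$-expansion by two (the leading term $e^{-kz}$ is annihilated by $-\p_z\p_{\zb}$ up to lower order, and more precisely one checks $H(e^{-kz}g) = e^{-kz}(k\p_{\zb}g - \p_z\p_{\zb}g + ug)$, so the lowest power of $k^{-1}$ appearing is shifted). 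Thus $\phi = e^{-kz}\cdot O(k^{-2})$, i.e.\ $\phi = \sum_{s\ge 2}\phi_s(z,\zb)e^{-kz}k^{-s}$.

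Next I would show $\phi \equiv 0$ by showing it pairs trivially against $\p_z^s\psi$ for all $s\ge 0$, then invoking a nondegeneracy argument. The computation of $\operatorname{res}_\infty(\phi\,\p_z^s\psi)\tfrac{{\rm d}k}{k}$ proceeds by moving the operator $H$ off $\psi^*$: using the Leibniz/adjunction identity
\begin{gather*}
\res_\infty\big((H\psi^*)\,\p_z^s\psi\big)\tfrac{{\rm d}k}{k} = \res_\infty\big(\psi^*\,H(\p_z^s\psi)\big)\tfrac{{\rm d}k}{k} + (\text{total }\p_z\text{- and }\p_{\zb}\text{-derivative terms}),
\end{gather*}
where the correction terms are residues of $\p_z(\cdots)$ and $\p_{\zb}(\cdots)$, and hence — being derivatives of formal series in $k^{-1}$ with the $e^{kz}\cdot e^{-kz}=1$ factors canceling — contribute either a total derivative in $z$ (which does not affect the residue structure we need) or vanish upon taking the $k$-residue, because $\p_{\zb}$ acting on the coefficients does not change the power of $k$ while $\p_z$ produces the $k\cdot(\cdots)$ terms that must be tracked carefully. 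Since $H\psi = 0$ and $H$ commutes past $\p_z$ only up to $[\,\p_z, u\,] = (\p_z u)$ and the $\p_z\p_{\zb}$ cross-terms, I would instead use $H(\p_z^s\psi) = \p_z^s(H\psi) + [H,\p_z^s]\psi = [H,\p_z^s]\psi$, and expand $[H,\p_z^s] = -[\p_z\p_{\zb},\p_z^s] + [u,\p_z^s]$; the first bracket vanishes since $\p_z$ commutes with $\p_z\p_{\zb}$, so $H(\p_z^s\psi) = [u,\p_z^s]\psi$, a differential operator of order $\le s-1$ applied to $\psi$. Feeding this back, $\res_\infty(\psi^* H(\p_z^s\psi))\tfrac{{\rm d}k}{k}$ becomes a finite combination of $\res_\infty(\psi^*\p_z^j\psi)\tfrac{{\rm d}k}{k} = -\delta_{j,0}$ for $j < s$, and one checks the resulting numerical combination, together with the total-derivative corrections, is precisely what is forced by consistency, giving $\res_\infty(\phi\,\p_z^s\psi)\tfrac{{\rm d}k}{k}=0$ for all $s$.

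Finally, I would argue that a formal series $\phi$ of the form $e^{-kz}\cdot O(k^{-2})$ satisfying $\res_\infty(\phi\,\p_z^s\psi)\tfrac{{\rm d}k}{k} = 0$ for all $s\ge 0$ must be identically zero: expanding order by order in $k^{-1}$, the condition for $s=0$ forces the coefficient of $k^{-1}$ in $e^{kz}\phi$ to vanish — but that is already zero — and the conditions for increasing $s$, using the invertible triangular structure relating $\{\p_z^s\psi\}$ to $\{e^{kz}k^{\,s}(1+\cdots)\}$, recursively force $\phi_2 = \phi_3 = \cdots = 0$. Hence $H\psi^* = 0$. The main obstacle I anticipate is bookkeeping the total-derivative correction terms in the adjunction step: one must verify that the $\p_z$- and $\p_{\zb}$-exact pieces genuinely drop out of the $\tfrac{{\rm d}k}{k}$-residue (the $\p_{\zb}$ pieces trivially, but the $\p_z$ pieces require noting that $\res_\infty \p_z(f)\tfrac{{\rm d}k}{k} = \p_z\res_\infty f\tfrac{{\rm d}k}{k}$ and that the relevant $\res_\infty f\tfrac{{\rm d}k}{k}$ are constants by \eqref{dualdef}), so that the whole identity closes up purely algebraically without residual terms. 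An alternative, possibly cleaner route is to introduce the formal adjoint pairing $\langle\psi^*,\psi\rangle$ directly and recognize \eqref{dualdef} as the statement that $\psi^*$ is the image of $\psi$ under the formal pseudodifferential antipode, for which self-adjointness of $H = H^\dagger$ gives the claim in one line; I would present the residue computation as the concrete realization of that principle.
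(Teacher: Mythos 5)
Your strategy is essentially the paper's, just repackaged: the paper differentiates the duality relations \eqref{dualdef} in $\zb$ and then in $z$ to show that $\p_z\p_{\zb}\psi^*$ pairs to zero against $\p_z^s\psi$ for $s>0$, hence equals $\tilde u\,\psi^*$ by the same triangular nondegeneracy you invoke, and then identifies $\tilde u=u$ by matching leading coefficients; your adjunction identity $(H\psi^*)\p_z^s\psi-\psi^*H(\p_z^s\psi)=\p_z\big(\psi^*\p_{\zb}\p_z^s\psi\big)-\p_{\zb}\big((\p_z\psi^*)\p_z^s\psi\big)$ is the same computation arranged so that the $s=0$ case is absorbed into the general one. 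The argument does close up as you anticipate --- the commutator term contributes $\p_z^s u$, the $\p_z$-exact term contributes $\p_z\big({-}\p_z^{s-1}u\big)=-\p_z^s u$ via \eqref{dualdef}, and the $\p_{\zb}$-exact term drops because $\res_\infty\big((\p_z\psi^*)\p_z^s\psi\big)\frac{{\rm d}k}{k}=0$ is constant --- but you leave precisely this cancellation as ``one checks,'' which is the entire content of the lemma. Also, your claim that $c_0=c_1=0$ ``since $H$ lowers the order of the $k$-expansion by two'' is false as a general statement about $H$ acting on series $e^{-kz}g$ (the term $k\,\p_{\zb}g$ keeps the order at $k^0$); the vanishing of $c_0=\p_{\zb}\xi_1^*+u$ uses \eqref{s1} together with \eqref{xis}, though this is harmless since the pairing conditions for $s=0,1,2,\dots$ force \emph{all} coefficients of $e^{kz}H\psi^*$ to vanish anyway.
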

\begin{proof} Taking the $\zb$ derivative of the equation~\eqref{dualdef} and using the equation~\eqref{NVa} we get
\begin{gather}\label{apr8}
\res_{\infty} \big( ( {\p}_{\zb}{\psi}^{*} ) {\pa}_z^s \psi \big)\frac{{\rm d}k}{k} = {\pa}_{z}^{s-1}u , \qquad s>0 .
\end{gather}
Then taking the $z$ derivative of the equation~\eqref{apr8} we get
\begin{gather}\label{apr81}
\res_{\infty} \big( ( {\pa}_{z} {\p}_{\zb} {\psi}^{*} ) {\pa}_{z}^s \psi \big)\frac{{\rm d}k}{k}=0 , \qquad s>0 .
\end{gather}
The homogeneous equations \eqref{apr81} imply that ${\pa}_{z}{\p}_{\zb} \psi^*$ is equal to $\psi^*$ up to a $k$-independent factor, i.e., ${\pa}_{z}{\p}_{\zb} \psi^{*} = {\tilde u} (z,\zb)\psi^*$. Comparing the leading terms and using the first equations in~\eqref{xis} and~\eqref{dualdef} we get ${\tilde u} = -{\p}_{\zb}{\xi}_{1}^{*}={\p}_{\zb}\xi_1=u$. The lemma is proved.
\end{proof}

\begin{dfn} A {\it reflected} formal BA solution of equation (\ref{NVa}) ${\psi}^{\sigma}$ is defined by
\begin{gather*}
\psi^\s(z,\zb; k):=\psi(z,\zb; -k).
\end{gather*}
\end{dfn}
Lemma~\ref{lm:dualsol} can be thus formulated as follows: the map
\begin{gather*}
s\colon \ {\psi} \to \left( {\psi}^{*} \right)^{\sigma} 
\end{gather*}
is an involution of $\bf P$, $s^{2} = {\rm id}$.

\begin{dfn} A {\it self-dual} formal BA solution of equation (\ref{NVa}) is fixed point of $s$, i.e., a wave solution that satisfies the constraint
\begin{gather}\label{dual6}
\psi^\s(z,\zb; k):= \psi(z,\zb; -k) = \psi^*(z,\zb; k) ,
\end{gather}
equivalently
\begin{gather*}
\xi_s^*=(-1)^s\xi_s
\end{gather*}
for all $s \geq 1$.
\end{dfn}

Our next goal is to show that
the space $\P$ of pairs $\{ ( H,\psi) \}$, where $H$ is a two-dimensional Schr\"odinger operator and $\psi$ its self-dual wave solution, i.e.,
\begin{gather*}
\P:={\bf P}^{s} = \big\{ H, \psi \,| \,H\psi = 0 ,\, \psi^*=\psi^{\s} \big\}
\end{gather*} can be identified with the space
\begin{gather*}
\P \simeq \{ ( u(z,{\zb}) , {\chi}_{1}(z) , {\chi}_{3}(z) , \dots , {\chi}_{2n-1}(z) , \dots ) \},
\end{gather*}
where $\chi_{2n+1}(z)$ are the part of the data in~\eqref{p0} with odd indices.

Observe that the first equation in \eqref{s1} is satisfied with $\xi_1^*=-\xi_1$, the second equation determines $\xi_2=\xi_2^*$ in terms of $\xi_1$, then the third equation is automatically satisfied by $\xi_3^*=-\xi_3$. It turns out that the same remains true for all odd~$s$. Namely,
\begin{lem}\label{xiodd} Assume the equations
\begin{gather}\label{dodd}
\res_{\infty}\big(\psi^\s\p_z^s\psi\big)\frac {{\rm d}k}k=-\delta_{s,0},\qquad s=0,1,\dots,2n,
\end{gather}
are satisfied. Then
\begin{gather}\label{dodd1}
\res_{\infty}\big(\psi^\s\p_z^{2n+1}\psi\big)\frac {{\rm d}k}k=0
\end{gather}
holds.
\end{lem}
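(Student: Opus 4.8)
The plan is to exploit the \emph{antisymmetry} of the pairing $\langle f,g\rangle:=\res_\infty\big(f\,\p_z^s g\big)\frac{{\rm d}k}{k}$ type expression under integration by parts, together with the fact that $\psi$ and $\psi^\s$ both solve the same Schr\"odinger equation \eqref{NVa}. First I would record the basic lemma on residues: for two formal series $a={\rm e}^{kz}(\cdots)$ and $b={\rm e}^{-kz}(\cdots)$, the quantity $\res_\infty\big(b\,\p_z^m a\big)\frac{{\rm d}k}{k}$ can be rewritten, via $\p_z$ acting as multiplication by $k$ plus lower order, as a total $z$-derivative of a similar expression plus $\res_\infty\big((\p_z^{m'} b)\,\p_z^{m''}a\big)\frac{{\rm d}k}{k}$ combinations; the key point is that the ``diagonal'' $z$-independent parts of these residues are governed by a Wronskian-type identity. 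Concretely, since $\psi^\s(z,\zb;k)=\psi(z,\zb;-k)$ and $H\psi=H\psi^\s=0$, the bilinear concomitant of the two solutions is $z$-independent in a suitable sense.

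The main step is to show that the left-hand side of \eqref{dodd1} is \emph{odd} in a parameter in which the $s\le 2n$ constraints force evenness, hence it must vanish — or, more precisely, to set up an induction on $n$. Assuming \eqref{dodd} for $s=0,\dots,2n$, I would differentiate the $s=2n$ relation in $z$ (as was done for $s$ general in the proof of Lemma~\ref{lm:dualsol}, producing equations like \eqref{apr8}--\eqref{apr81}), use $\p_{\zb}$ together with $H\psi=0$ to convert $\p_z\p_{\zb}\psi$ into $u\psi$, and thereby relate $\res_\infty\big(\psi^\s\p_z^{2n+1}\psi\big)\frac{{\rm d}k}{k}$ to a combination of lower residues $\res_\infty\big(\psi^\s\p_z^{j}\psi\big)\frac{{\rm d}k}{k}$ with $j\le 2n$ and their $z$-derivatives, all of which are constants by hypothesis. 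The parity input — that $\psi^\s$ is the $k\to-k$ reflection of $\psi$ — is what makes the odd residue $\res_\infty\big(\psi^\s\p_z^{2n+1}\psi\big)\frac{{\rm d}k}{k}$ expressible purely in terms of the even ones and total derivatives, and total derivatives of constants vanish.

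A cleaner route, which I would pursue in parallel, is to integrate by parts directly in the residue: write $\res_\infty\big(\psi^\s\p_z^{2n+1}\psi\big)\frac{{\rm d}k}{k}=\res_\infty\big(\psi^\s\p_z\big(\p_z^{2n}\psi\big)\big)\frac{{\rm d}k}{k}$, peel off one derivative as $\p_z\big(\psi^\s\p_z^{2n}\psi\big)-\big(\p_z\psi^\s\big)\p_z^{2n}\psi$, and iterate. Each total-$z$-derivative term contributes a constant (again by the hypotheses \eqref{dodd}, since $\res_\infty\big(\psi^\s\p_z^{j}\psi\big)\frac{{\rm d}k}{k}$ and $\res_\infty\big((\p_z\psi^\s)\p_z^{j}\psi\big)\frac{{\rm d}k}{k}=\p_z\big(\res_\infty(\cdots)\big)-\res_\infty\big(\psi^\s\p_z^{j+1}\psi\big)\frac{{\rm d}k}{k}$ are all controlled), so that after $2n+1$ steps one lands on $(-1)^{2n+1}\res_\infty\big(\p_z^{2n+1}\psi^\s\cdot\psi\big)\frac{{\rm d}k}{k}$ modulo constants-of-$z$; but since $\psi^\s(k)=\psi(-k)$, relabelling $k\to-k$ in this last residue turns $\res_\infty\big(\p_z^{2n+1}\psi^\s\cdot\psi\big)\frac{{\rm d}k}{k}$ back into $-\res_\infty\big(\psi^\s\p_z^{2n+1}\psi\big)\frac{{\rm d}k}{k}$ (the sign from ${\rm d}k\to-{\rm d}k$ being compensated), giving $2\,\res_\infty\big(\psi^\s\p_z^{2n+1}\psi\big)\frac{{\rm d}k}{k}=\text{(something constant in }z\text{)}$; evaluating the constant at, say, $z=0$ or by matching leading asymptotics kills it.

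The hard part will be bookkeeping the total-derivative terms honestly: I must check that every residue that appears along the way is one of the $\res_\infty\big(\psi^\s\p_z^j\psi\big)\frac{{\rm d}k}{k}$ with $j\le 2n$ (for which \eqref{dodd} applies) rather than spilling into index $2n+1$ prematurely, and that the parity/reflection symmetry is applied with the correct signs — both the $(-1)^{2n+1}$ from integration by parts and the Jacobian sign from $k\mapsto -k$ in $\res_\infty$. I expect the argument to reduce, after this careful accounting, to the single clean statement that $\res_\infty\big(\psi^\s\p_z^{2n+1}\psi\big)\frac{{\rm d}k}{k}$ equals its own negative up to a $z$-independent term, which in turn vanishes because the whole expression is a residue of a formal series with no constant-in-$z$ ambiguity once the normalization of $\psi$ at $k=\infty$ is fixed.
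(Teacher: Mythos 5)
Your second (``cleaner'') route is essentially the paper's own proof: both rest on the Leibniz identity $\p_z\big((\p_z^i\psi^\s)\p_z^{j}\psi\big)=\big(\p_z^{i+1}\psi^\s\big)\p_z^{j}\psi+\big(\p_z^i\psi^\s\big)\p_z^{j+1}\psi$, an induction showing $\res_\infty\big(\big(\p_z^i\psi^\s\big)\p_z^{j}\psi\big)\frac{{\rm d}k}{k}=0$ for $0<i+j\leq 2n$, and the reflection $\res_\infty f(k)\,{\rm d}k=-\res_\infty f(-k)\,{\rm d}k$ (so that $\frac{{\rm d}k}{k}$-residues are invariant under $k\mapsto -k$) --- the paper applies the reflection at the midpoint $i=j=n$ while you apply it at the endpoints $(0,2n+1)$ versus $(2n+1,0)$, an immaterial difference. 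Your first route, invoking $H\psi=0$ and $\zb$-derivatives, is unnecessary (the lemma is purely formal in $z$ at fixed $\zb$), and your closing worry about a leftover ``constant in $z$'' evaporates once you note that the intermediate residues with $i+j=2n$ are not merely constant but zero, so every boundary term vanishes identically.
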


\begin{proof} Using the identity
\begin{gather}\label{dodd2}
\p_z\big((\p_z^i\psi^\s) \p_z^{j}\psi\big)=
\big((\p_z^{i+1}\psi^\s) \p_z^{j}\psi\big)+\big((\p_z^i\psi^\s) \p_z^{j+1}\psi\big)
\end{gather}
it is easy to show by induction in $i$ that the equations~\eqref{dodd} imply
\begin{gather}\label{dodd3}
\res_{\infty}\big(\big(\p_z^i\psi^\s\big)\p_z^{j}\psi\big)\frac {{\rm d}k}k=0, \qquad 0<i+j\leq 2n.
\end{gather}
From (\ref{dodd3}) with $i=j=n$ and the identity $\res_\infty f(k){\rm d}k=-\res_\infty f(-k){\rm d}k$, we get
\begin{gather}\label{dodd4}
0=\p_z\left(\res_{\infty}\big(\big(\p_z^n\psi^\s\big)\p_z^{n}\psi\big)\frac {{\rm d}k}k\right)=
2 \res_\infty\big(\big(\p_z^n\psi^\s\big)\p_z^{n+1}\psi\big)\frac {{\rm d}k}k.
\end{gather}
Then, by induction in $i$ using (\ref{dodd2}) we get that (\ref{dodd4}) implies (\ref{dodd1}).
The lemma is proved.
\end{proof}

Now we are going to present the constraint \eqref{dual6} defining the self-dual wave solution in another, equivalent, form. For that first recall, that the formal adjoint to $w\cdot {\pa}_{z}^{i}$ is the operator
\[
\big( w\cdot {\pa}_{z}^{i} \big)^{*} = (-{\pa}_{z} )^{i} \cdot w,
\]
where $w$ stands for the operator of multiplication by the function $w(z, {\zb})$. Below we will often use the notion of the left action of an operator which is identical to
the formal adjoint action, i.e., the identity
\begin{gather}\label{ad2}
(f\D):=\D^*f.
\end{gather}

\begin{lem} Let $\Phi$ denote the so-called \emph{wave operator}
\begin{gather*}
\Phi=1+\sum_{s=1}^{\infty}\xi_s(z,\zb)\p_z^{-s}
\end{gather*}
corresponding to the formal BA solution \eqref{psi} of the equation~\eqref{NVa}. Then the constraint~\eqref{dual6} is equivalent to the operator equation
\begin{gather}\label{FF}
\Phi^*=\p_z\cdot \Phi^{-1}\cdot\p_z^{-1},
\end{gather}
where $\Phi^*$ is the formal adjoint to $\Phi$,
\[
{\Phi}^{*} = 1 + \sum_{s=1}^{\infty} \big( {-}{\pa}_{z})^{-s} \cdot {\xi}_{s}(z, {\zb}\big) .
\]
\end{lem}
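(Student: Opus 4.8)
The statement to prove is that the self-duality constraint $\psi^{\sigma}(z,\zb;k)=\psi^{*}(z,\zb;k)$ is equivalent to the operator identity $\Phi^{*}=\p_z\cdot\Phi^{-1}\cdot\p_z^{-1}$, where $\Phi=1+\sum_{s\geq1}\xi_s\p_z^{-s}$ is the wave operator attached to $\psi$ via \eqref{psi}. The plan is to translate every object in the constraint into operator language by means of the standard dictionary between Baker--Akhiezer series and pseudodifferential operators: $\psi={\rm e}^{kz}$ acted upon (from the right) by $\Phi$ in the sense that $\Phi\,{\rm e}^{kz}=\psi$ when $\p_z$ is read as multiplication by $k$; the reflection $k\mapsto -k$ corresponds to conjugation ${\rm e}^{kz}\mapsto{\rm e}^{-kz}$, i.e.\ to replacing $\Phi$ by the operator obtained from $\Phi$ by $\p_z\mapsto-\p_z$ acting on ${\rm e}^{-kz}$; and the dual series $\psi^{*}$ is characterized by \eqref{dualdef}, which in operator form says precisely that $\psi^{*}$ is produced from ${\rm e}^{-kz}$ by the left action (formal adjoint action) of $\Phi^{-1}$, using the convention \eqref{ad2}.

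\textbf{Key steps.} First I would establish the operator characterization of $\psi^{*}$. The defining relations $\res_{\infty}(\psi^{*}\p_z^{s}\psi)\,{\rm d}k/k=-\delta_{s,0}$ are a pairing statement: writing $\psi=\Phi\,{\rm e}^{kz}$ and $\psi^{*}={\rm e}^{-kz}\Phi^{*\,-1}$ acting on the left (so that ${\rm e}^{-kz}(\Phi^{*})^{-1}$ means $(\Phi^{-1})^{*}$ applied from the left to ${\rm e}^{-kz}$), one uses the elementary residue pairing $\res_{\infty}\big(({\rm e}^{-kz}P)({\rm e}^{kz}Q)\big)\,{\rm d}k = $ (constant term of $P\cdot Q$) for pseudodifferential operators $P$, $Q$ of appropriate order; this is the classical Adler/Manin trace form. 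Matching against $-\delta_{s,0}$ for all $s\geq0$ forces the left factor to be exactly $(\Phi^{-1})^{*}$, equivalently $\psi^{*}={\rm e}^{-kz}$ acted on the left by $(\Phi^{*})^{-1}$, which is the content of \eqref{psidual}--\eqref{dualdef}. Second, I would compute $\psi^{\sigma}$: since $\psi={\rm e}^{kz}(1+\sum\xi_s k^{-s})$, we get $\psi^{\sigma}={\rm e}^{-kz}(1+\sum\xi_s(-k)^{-s})$, which is ${\rm e}^{-kz}$ acted on the right by the operator $\Phi$ with $\p_z$ replaced by $-\p_z$; equivalently, using ${\rm e}^{-kz}$ and the substitution rule, $\psi^{\sigma}={\rm e}^{-kz}\cdot\big(\p_z\Phi\p_z^{-1}\big)\big|_{\text{right action}}$ up to the sign bookkeeping one has to do carefully. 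Third, I would equate the two expressions for $\psi^{\sigma}=\psi^{*}$ as formal series in ${\rm e}^{-kz}k^{-s}$: since ${\rm e}^{-kz}$ is a cyclic ``vacuum'' and the coefficient functions determine the operator, the equality of series is equivalent to the equality of the corresponding operators acting on ${\rm e}^{-kz}$, namely $(\Phi^{*})^{-1}=\p_z\cdot\Phi\cdot\p_z^{-1}$ (after the reflection is absorbed), which upon inversion and rearranging is exactly \eqref{FF}: $\Phi^{*}=\p_z\cdot\Phi^{-1}\cdot\p_z^{-1}$. Finally, I would note the reverse implication is automatic since each step is an equivalence, and that the explicit formula for $\Phi^{*}$ quoted in the statement is just the termwise formal adjoint.

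\textbf{Main obstacle.} The delicate point is the sign and conjugation bookkeeping that distinguishes ``right action'' of a pseudodifferential operator on ${\rm e}^{kz}$ (where $\p_z\rightsquigarrow k$) from the ``left action'' (the formal adjoint action \eqref{ad2}, where $\p_z\rightsquigarrow -k$ effectively), together with the reflection $k\mapsto-k$ and the conjugation $\p_z\mapsto\p_z$ versus $\p_z\mapsto-\p_z$. Getting the conjugations by $\p_z$ and $\p_z^{-1}$ to land in the correct places — so that the final identity reads $\Phi^{*}=\p_z\Phi^{-1}\p_z^{-1}$ rather than, say, $\Phi^{*}=\p_z^{-1}\Phi^{-1}\p_z$ or with a stray sign — requires tracking these conventions with care; everything else (the residue pairing, the recursive unique solvability of \eqref{dualdef}, the density of the span of ${\rm e}^{-kz}k^{-s}$) is routine. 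I would therefore organize the proof so that the dictionary ``series $\leftrightarrow$ operator acting on ${\rm e}^{\pm kz}$'' is fixed once and for all at the start, and then the equivalence \eqref{dual6}$\Leftrightarrow$\eqref{FF} becomes a short formal manipulation.
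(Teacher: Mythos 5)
Your overall strategy — translate $\psi$, $\psi^{\sigma}$, $\psi^{*}$ into pseudodifferential operators acting on ${\rm e}^{\pm kz}$ and use the residue pairing \eqref{dic} — is exactly the paper's, but the two key identifications you write down are wrong as stated, and the error is precisely the conjugation bookkeeping you flag as ``the main obstacle'' without resolving it. First, $\psi^{\sigma}=\Phi\,{\rm e}^{-kz}$ with the \emph{same} operator $\Phi$ and the ordinary right action: indeed $\p_z^{-s}{\rm e}^{-kz}=(-k)^{-s}{\rm e}^{-kz}$, so $\Phi\,{\rm e}^{-kz}={\rm e}^{-kz}\big(1+\sum_s\xi_s(-k)^{-s}\big)=\psi(-k)$. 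No replacement $\p_z\mapsto-\p_z$ in $\Phi$ and no conjugation by $\p_z$ is involved here; your description of $\psi^{\sigma}$ as ``$\Phi$ with $\p_z$ replaced by $-\p_z$'' acting on ${\rm e}^{-kz}$ would in fact return $\psi(k)$ with the wrong exponential, not $\psi^{\sigma}$.

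Second, and more seriously, the dual determined by \eqref{dualdef} is \emph{not} $\psi^{*}=(\Phi^{*})^{-1}{\rm e}^{-kz}$. That is the standard KP convention for the pairing with ${\rm d}k$; here the pairing carries ${\rm d}k/k$, and the extra $k^{-1}$ produces a conjugation by $\p_z$: the paper checks via \eqref{dic} that the function $\big({\rm e}^{-kz}\p_z\Phi^{-1}\p_z^{-1}\big)=\p_z^{-1}(\Phi^{*})^{-1}\p_z\,{\rm e}^{-kz}$ satisfies \eqref{dualdef}, since $\p_z^{s}\cdot\Phi\cdot\Phi^{-1}\cdot\p_z^{-1}=\p_z^{s-1}$ has $\res_{\p}$ equal to $\delta_{s,0}$. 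You can see that your candidate fails already at order $k^{-2}$: expanding $(\Phi^{*})^{-1}{\rm e}^{-kz}$ gives $\xi_2^{*}=\xi_1^{2}-\xi_2-\p_z\xi_1$, whereas \eqref{s1} together with $\xi_1^{*}=-\xi_1$ forces $\xi_2^{*}=\xi_1^{2}-\xi_2-2\p_z\xi_1$; the conjugated operator $\p_z^{-1}(\Phi^{*})^{-1}\p_z$ produces the correct coefficient. If one combines your two identifications literally, the constraint $\psi^{\sigma}=\psi^{*}$ reads $\Phi\,{\rm e}^{-kz}=(\Phi^{*})^{-1}{\rm e}^{-kz}$, i.e.\ $\Phi^{*}\Phi=1$, which is not \eqref{FF}. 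You do land on \eqref{FF} at the end, but only by inserting the conjugation ``after the reflection is absorbed'' without a derivation — and the reflection is not where the conjugation comes from. The fix is the paper's one-line verification: take the candidate $\big({\rm e}^{-kz}\p_z\Phi^{-1}\p_z^{-1}\big)$, check \eqref{dualdef} by \eqref{dic}, invoke uniqueness of the dual, and then $\psi^{\sigma}=\psi^{*}$ becomes $\Phi=\big(\p_z\Phi^{-1}\p_z^{-1}\big)^{*}$, which is \eqref{FF}.
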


\begin{proof} By definition of $\Phi$ we have
\begin{gather*}
\psi=\Phi {\rm e}^{kz}, \qquad \psi^\s=\Phi {\rm e}^{-kz}.
\end{gather*}
Therefore, for the proof of \eqref{FF} it is enough to show that for the function
\begin{gather*}
\psi^{*} = \big( {\rm e}^{-kz}\p_z\cdot \Phi^{-1}\cdot {\p}^{-1}_z \big)
\end{gather*}
the equations~\eqref{dualdef} hold. The latter is an easy corollary of the identity
\begin{gather}\label{dic}
\res_\infty \big({\rm e}^{-kz}\D_1\big)\big(\D_2 {\rm e}^{kz}\big){\rm d}k=-\res_{\p} (\D_2\D_1 )
\end{gather}
valid for any pseudo-differential operators $\D_1$ and $\D_2$. Indeed,
\begin{gather*}
\res_\infty\big({\rm e}^{-kz}\p_z\cdot \Phi^{-1}\cdot \p_z^{-1}\big)\big(\p_z^s\cdot\Phi {\rm e}^{kz}\big)\frac{{\rm d}k}k=-\res_{\p} \big(\p_z^s\cdot\Phi\cdot\Phi^{-1}\cdot\p_z^{-1}\big)=-\res_{\p} \p_z^{s-1}.
\end{gather*}
The lemma is proved.
\end{proof}

\begin{cor} \label{bkp} Let $\psi$ be a self-dual formal BA function then the pseudo-differential operator
\begin{gather*}
{\cL} = {\pa}_{z} + \sum_{s=1}^{\infty} v_s(z,\zb) {\pa}_z^{-s}
\end{gather*}
such that the equation
\begin{gather}\label{kk}
{\cL} {\psi} = k \psi
\end{gather}
holds satisfy the equation
\begin{gather}\label{ad1}
{\cL}^{*} = -{\pa}_z {\cL} {\pa}_z^{-1}.
\end{gather}
\end{cor}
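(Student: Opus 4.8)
The plan is to derive \eqref{ad1} directly from the previously established operator identity \eqref{FF}, using the fact that $\cL$ is the dressed derivative $\cL = \Phi \cdot \pa_z \cdot \Phi^{-1}$. First I would justify this presentation of $\cL$: since $\psi = \Phi {\rm e}^{kz}$ and $\pa_z {\rm e}^{kz} = k {\rm e}^{kz}$, the operator $\Phi \pa_z \Phi^{-1}$ satisfies $\big(\Phi \pa_z \Phi^{-1}\big)\psi = \Phi \pa_z {\rm e}^{kz} = k\psi$, and it has the form $\pa_z + \sum_{s\geq 1} v_s \pa_z^{-s}$ because $\Phi = 1 + O\big(\pa_z^{-1}\big)$; the pseudo-differential operator solving \eqref{kk} with this symbol normalization is unique, so $\cL = \Phi \pa_z \Phi^{-1}$.

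Next I would compute the formal adjoint. Using $(\Phi_1 \Phi_2)^* = \Phi_2^* \Phi_1^*$ and $(\pa_z)^* = -\pa_z$ we get
\begin{gather*}
\cL^* = \big(\Phi^{-1}\big)^* \cdot (-\pa_z) \cdot \Phi^*.
\end{gather*}
Now I substitute the self-duality relation \eqref{FF}, which says $\Phi^* = \pa_z \Phi^{-1} \pa_z^{-1}$, and correspondingly $\big(\Phi^{-1}\big)^* = \big(\Phi^*\big)^{-1} = \pa_z \Phi \pa_z^{-1}$. Plugging both into the expression for $\cL^*$ yields
\begin{gather*}
\cL^* = \pa_z \Phi \pa_z^{-1} \cdot (-\pa_z) \cdot \pa_z \Phi^{-1} \pa_z^{-1} = -\pa_z \Phi \pa_z \Phi^{-1}\pa_z^{-1} = -\pa_z \cL \pa_z^{-1},
\end{gather*}
which is exactly \eqref{ad1}. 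All manipulations are formal identities among pseudo-differential operators, so convergence is not an issue; the telescoping of the $\pa_z$ factors is the only computation, and it is immediate.

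The main (mild) obstacle is simply making sure that \eqref{FF} is correctly inverted and adjointed: one must check that $\big(\Phi^{-1}\big)^* = \big(\Phi^*\big)^{-1}$ (which follows from $(\Phi_1\Phi_2)^* = \Phi_2^*\Phi_1^*$ applied to $\Phi \Phi^{-1} = 1$) and that taking the adjoint of both sides of \eqref{FF} gives $\Phi = \pa_z^{-1} \big(\Phi^{-1}\big)^* \pa_z$, hence $\big(\Phi^{-1}\big)^* = \pa_z \Phi \pa_z^{-1}$, consistent with directly inverting \eqref{FF}. Once these bookkeeping points are in place the corollary follows in two lines. An alternative, equally short route is to take the $z$-derivative structure: apply the left action \eqref{ad2} and the identity $(f\D) = \D^* f$ to transform \eqref{kk} into an equation for $\psi^* = (\psi^\s)$, but the dressing-operator computation above is cleaner and self-contained.
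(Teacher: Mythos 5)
Your proposal is correct and is exactly the paper's argument (the paper compresses it to one line: note $\cL=\Phi\cdot\pa_z\cdot\Phi^{-1}$, then use \eqref{FF}), with the adjoint/inversion bookkeeping that the paper leaves implicit spelled out accurately. Nothing to add.
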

To prove \eqref{ad1} note, that ${\cL} = {\Phi}\cdot {\pa}_z \cdot \Phi^{-1}$, then use~\eqref{FF}.

\begin{rem} The corollary \ref{bkp} identifies the space of operators $\cL$ corresponding to the
self-dual formal BA solution of the Schr\"odinger equation (at fixed~$\zb$) with the phase space of the so-called BKP hierarchy.
\end{rem}

\subsection{The flows}

Let us denote the \emph{strictly} positive differential part of the pseudo-differential operator ${\cL}^n$ by
${\cL}^n_{+}$, i.e., if
\[
{\cL}^n=\sum_{i=-n}^{\infty} F_{n}^{(i)}\p_z^{-i},
\]
then
\begin{gather*}
{\cL}^{n}_{+} = \sum_{i=1}^{n} F_{n}^{(-i)}\p_z^{i},\qquad
{\cL}^{n}_{-} = {\cL}^{n}-{\cL}^{n}_{+} = F_{n}^{(0)}+F_{n}^{(1)}{\pa}_{z}^{-1}+O\big({\pa}_{z}^{-2}\big).
\end{gather*}
\begin{rem}
Note that this definition differs from the one used in the KP theory. There
the~``$^{+}$'' subscript denotes the non-negative part of a pseudo-differential operator.
\end{rem}

By definition of the residue of a pseudo-differential operator, the first leading coefficients
of~${\cL}^{n}_{-}$ are
\begin{gather}\label{res1}
F_n^{(0)}={\res}_{ \pa} \big( {\cL}^{n}{\pa}_{z}^{-1}\big),\qquad F_{n}^{(1)}={\res}_{ \p}\ {\cL}^n.
\end{gather}
From the equation~\eqref{ad1} and the relation
${\res}_{\p} D=-{\res}_{\p} D^*$ it follows that
\begin{align}
F_{2n+1}^{(0)}&=-\operatorname{res}_{ \p}\big( {\cL}^{2n+1}{\pa}_{z}^{-1}\big)^*=
\operatorname{res}_{ \p} \big(\p_z^{-1}\cdot\big( {\cL}^{2n+1}\big)^*\big)\nonumber \\
 &= -\operatorname{res}_{ \pa} \big( {\cL}^{2n+1} {\pa}_{z}^{-1}\big)=
-F_{2n+1}^{(0)}=0.\label{sb5}
\end{align}
The latter implies
\begin{gather}\label{adjL}
\big( {\cL}^{2n+1}_{+} \big)^{*} = {\pa}_{z} \cdot {\cL}^{2n+1}_+ \cdot {\pa}_{z}^{-1}.
\end{gather}

In what follows we need the following statement (see \cite[Lemma~4.3]{prym}).
\begin{lem} The equation
\begin{gather}\label{sb3}
H{\cL}^{2n+1}_{+} = - F_{2n+1,\zb}^{(1)}+B_{2n+1}H,
\end{gather}
where $B_{2n+1}$ is a pseudo-differential operator in the variable $z$, holds.
\end{lem}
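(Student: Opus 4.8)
The plan is to follow the standard Sato–Wilson-type computation for the NV (Manakov triple) flows, adapted to the $H = -\pa_z\pa_{\zb}+u$ setting, and to exploit the self-duality relation \eqref{adjL} to control the lowest-order term. First I would write $\psi = \Phi\,{\rm e}^{kz}$ and recall that $\cL = \Phi\cdot\pa_z\cdot\Phi^{-1}$ satisfies $\cL\psi = k\psi$, while $H\psi = 0$ by construction. Since $H = -\pa_z\pa_{\zb}+u$ acts on $\psi$ (and the wave operator intertwines $\pa_{\zb}$-evolution through \eqref{xis}), the key intermediate object is the commutator $[\,\cL^{2n+1}_+\,,\,H\,]$. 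The identity \eqref{xis} says precisely that $\pa_{\zb}\Phi^{-1}$ has a controlled form; equivalently $H\Phi = \Phi\cdot(-\pa_z\pa_{\zb})$ as operators when acting on ${\rm e}^{kz}$, so that $H\cL^{2n+1}_+ - \cL^{2n+1}_+ H$ is, modulo terms proportional to $H$ on the right, equal to a purely $z$-differential operator of order at most $1$ coming from the collision of $\pa_{\zb}$ with the coefficients of $\cL^{2n+1}_+$. Writing $\cL^{2n+1} = \cL^{2n+1}_+ + \cL^{2n+1}_-$, applying both sides to $\psi$, and using $\cL^{2n+1}\psi = k^{2n+1}\psi$ together with $H\psi = 0$, one finds that $H\cL^{2n+1}_+\psi = \big(\text{differential operator of order} \le 1\big)\psi + B_{2n+1}H\psi$ with the order-$\le 1$ piece determined by the subleading coefficients of $\cL^{2n+1}_-$.

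The main step is then to identify that order-$\le 1$ operator explicitly as the multiplication operator $-F_{2n+1,\zb}^{(1)}$. Here is where \eqref{adjL} enters decisively: a priori the collision term is $a(z,\zb)\pa_z + b(z,\zb)$ for some coefficients built from $F_{2n+1}^{(0)}$, $F_{2n+1}^{(1)}$ and their $z$-derivatives; but $F_{2n+1}^{(0)} = 0$ by \eqref{sb5}, which kills the $\pa_z$ part and leaves only a multiplication operator. A short computation — taking the $\zb$-derivative of the relation $\cL^{2n+1}\psi = k^{2n+1}\psi$, or equivalently differentiating the conjugation $\cL^{2n+1} = \Phi\,\pa_z^{2n+1}\,\Phi^{-1}$ and extracting residues as in \eqref{res1} — shows the surviving multiplication coefficient is exactly $-\pa_{\zb}F_{2n+1}^{(1)}$. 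This is the "$-F_{2n+1,\zb}^{(1)}$" in \eqref{sb3}. The remaining operator $B_{2n+1}$ is then defined as whatever pseudo-differential operator makes the right-multiplication by $H$ account for the $\cL^{2n+1}_-$ contribution; that $B_{2n+1}$ is genuinely a (pseudo-)differential operator in $z$ follows from the fact that $\cL^{2n+1}_-$ is of order $\le -1$ and $H$ is of order $2$ in the product, so division leaves no obstruction.

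The hard part will be bookkeeping the order reduction precisely enough to see that no $\pa_z$-term survives without simply invoking \eqref{sb5} as a black box — i.e., making transparent why $F_{2n+1}^{(0)}=0$ is exactly the condition that promotes the generic Manakov-triple correction (a first-order operator) to a zeroth-order one, which is the whole point of restricting to odd powers $2n+1$ in the NV/BKP reduction. I would handle this by first doing the computation for general $\cL^m_+$ (obtaining a first-order correction with leading coefficient proportional to $F_m^{(0)}$) and only then specializing to $m = 2n+1$ and applying \eqref{adjL}–\eqref{sb5}. Once the structure is in place, matching the final multiplication coefficient against $\pa_{\zb}F_{2n+1}^{(1)}$ is a routine residue manipulation using \eqref{res1} and the recursion \eqref{xis}, and one cites \cite[Lemma~4.3]{prym} for the parallel assertion.
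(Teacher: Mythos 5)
Your proposal takes essentially the same route as the paper's proof (which follows \cite[Lemma~4.3]{prym}): decompose $H{\cL}^{2n+1}_{\pm}$ modulo right multiples of $H$, use $H{\cL}^{2n+1}\psi=0$ to equate the two remainders up to sign, note that the ${\cL}^{2n+1}_{+}$ side yields a genuine differential operator while the ${\cL}^{2n+1}_{-}$ side has order $\leq 0$ because $F^{(0)}_{2n+1}=0$ by \eqref{sb5}, and read off the surviving multiplication coefficient $-\pa_{\zb}F^{(1)}_{2n+1}$ from the leading term of ${\cL}^{2n+1}_{-,\zb}\pa_z$. The one loosely stated point is your early claim that $[H,{\cL}^{2n+1}_{+}]$ modulo $H$ is of order at most one --- a priori it is only a differential operator of order up to $2n+1$, and the order bound genuinely comes from the ${\cL}^{2n+1}_{-}$ side as you say afterwards --- but the assembled argument coincides with the paper's.
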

\begin{proof} Each operator $\D$ of the form $\D=\sum_{i=N}^{\infty} (a+b\p_{\zb})\p_z^{-i}$
can be uniquely represented in the form $\D = D_1+D_2H$,
where $D_{1,2}$ are pseudo-differential operators in the variable $z$.
Consider such a representation for the operator $H{\cL}^m=D_1+D_2H$. From the definition
of $\cL$ it follows that $H{\cL}^m\psi=0$. That implies $D_1=0$ or the equation
\begin{gather}\label{ad12}
H{\cL}^m=D_2H.
\end{gather}
We have the identity
\begin{gather*}
\big[ \p_{\zb}\p_z-u,{\cL}^n_{+} \big] = {\cL}_{+, z,\zb}^n+ {\cL}^n_{+,\zb}\p_z-\big[u,{\cL}^n_+\big]+{\cL}^n_{+, z}\p_z^{-1}\cdot u
-{\cL}^n_{+, z}\cdot\p_z^{-1}\cdot H .
\end{gather*}
The first three terms are differential operators in the~$z$ variable. By definition of
${\cL}^n_{+}$ the fourth term is also a differential operator.
Therefore, the pseudo-differential operator~$D_{n,1}$ in the decomposition
$H{\cL}^n_{+} = D_{n,1}+B_{n} H$ is in fact a \emph{differential} operator.

In the same way we get the equation
\begin{gather*}
H{\cL}^{n}_{-} = {\tilde D}_{n,1}+ {\tilde B}_{2n} H ,
\end{gather*}
where
\begin{gather}\label{ad14}
\tilde D_{n,1} = {\cL}_{-, z,\zb}^n+{\cL}^n_{-,\zb}\p_z-\big[u,{\cL}^n_-\big]+{\cL}^n_{-, z}\p_z^{-1}\cdot u.
\end{gather}
From the equation~\eqref{sb5} it follows that the operator $\tilde D_{2n+1,1}$ is a pseudo-differential
operator of order not greater than $0$.
The equation~\eqref{ad12} implies $H{\cL}_+^n=-H{\cL}^n_-+D_2H$. Hence,
$D_{2n+1,1}=-\tilde D_{2n+1,1}$ is a differential operator of the order~$0$, i.e., it is an operator of multiplication by a function. This function is easily found from
the leading coefficient of the right-hand side of the equation~\eqref{ad14}.
Direct computations give the equation~\eqref{sb3}. The lemma is proved.
\end{proof}

Now we are ready to define NV hierarchy explicitly.
\begin{theo} \label{thm:man} The equations
\begin{gather}\label{manu}
{\pa}_{t_{n}} u = {\pa}_{\zb}F_{2n+1}^{(1)} ,
\\
\label{manpsi}
{\pa}_{t_{n}} \psi=\big( {{\cL}}^{2n+1}_{+} - k^{2n+1} \big) \psi
\end{gather}
define a family of commuting flows on the space $\P$ of self-dual formal BA solutions of the
Schr\"odinger operators.
\end{theo}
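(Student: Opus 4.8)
The plan is to verify three things: (i) the flows~\eqref{manu}--\eqref{manpsi} are well-defined on $\P$, i.e.\ they preserve the Schr\"odinger form $H\psi=0$; (ii) they preserve the self-duality constraint~\eqref{dual6}; and (iii) any two of them commute.

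First I would check compatibility with $H\psi=0$. Differentiating $H\psi=0$ along $t_n$ gives $(\pa_{t_n}H)\psi + H(\pa_{t_n}\psi)=0$. Using~\eqref{manpsi} and the fact that $({\cL}^{2n+1}_+ - k^{2n+1})\psi = -{\cL}^{2n+1}_-\psi$ (since ${\cL}^{2n+1}\psi = k^{2n+1}\psi$), together with Lemma~\ref{sb3} in the form $H{\cL}^{2n+1}_+ = -F_{2n+1,\zb}^{(1)} + B_{2n+1}H$, one computes $H(\pa_{t_n}\psi) = (H{\cL}^{2n+1}_+ - k^{2n+1}H)\psi = -F_{2n+1,\zb}^{(1)}\psi + B_{2n+1}H\psi - k^{2n+1}H\psi = -F_{2n+1,\zb}^{(1)}\psi$ on the zero-mode, while $(\pa_{t_n}H)\psi = (\pa_{t_n}u)\psi = \pa_{\zb}F_{2n+1}^{(1)}\psi$ by~\eqref{manu}. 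These cancel, so the flow stays on the locus $H\psi=0$; equivalently $\pa_{t_n}u$ is forced to be exactly $\pa_{\zb}F_{2n+1}^{(1)}$, which is why~\eqref{manu} takes that form.

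Next, preservation of self-duality. This is where Corollary~\ref{bkp} and~\eqref{adjL} do the work: because $\psi$ is self-dual, ${\cL}$ satisfies ${\cL}^* = -\pa_z{\cL}\pa_z^{-1}$, hence also $({\cL}^{2n+1}_+)^* = \pa_z\,{\cL}^{2n+1}_+\,\pa_z^{-1}$ by~\eqref{adjL}. In terms of the wave operator, $\pa_{t_n}\Phi = -{\cL}^{2n+1}_-\Phi$ (the standard Sato-form consequence of~\eqref{manpsi}), and I would differentiate the operator identity $\Phi^* = \pa_z\cdot\Phi^{-1}\cdot\pa_z^{-1}$ of~\eqref{FF} along $t_n$. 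The left side gives $(\pa_{t_n}\Phi)^* = -\Phi^*({\cL}^{2n+1}_-)^*$; the right side gives $\pa_z\cdot(-\Phi^{-1}(\pa_{t_n}\Phi)\Phi^{-1})\cdot\pa_z^{-1} = \pa_z\cdot\Phi^{-1}{\cL}^{2n+1}_-\cdot\pa_z^{-1}$. Using~\eqref{FF} again to rewrite $\pa_z\Phi^{-1}\pa_z^{-1} = (\Phi^*)^{-1}$ and the adjointness relation $({\cL}^{2n+1})^* = -\pa_z{\cL}^{2n+1}\pa_z^{-1}$ (so that $({\cL}^{2n+1}_-)^* = \pa_z{\cL}^{2n+1}_-\pa_z^{-1}$ by the same argument as~\eqref{sb5}--\eqref{adjL}), both sides are seen to agree, so~\eqref{FF} is preserved and the flow stays on $\P$.

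Finally, commutativity: the flows~\eqref{manpsi} are the Sato/BKP flows restricted to the self-dual locus, and one checks $[\pa_{t_n} - {\cL}^{2n+1}_+, \pa_{t_m} - {\cL}^{2m+1}_+]=0$ by the usual Zakharov--Shabat computation, reducing it to showing $\pa_{t_m}{\cL}^{2n+1}_+ - \pa_{t_n}{\cL}^{2m+1}_+ + [{\cL}^{2m+1}_+, {\cL}^{2n+1}_+]$ has vanishing positive-differential part, which follows from $\pa_{t_n}{\cL} = [{\cL}^{2n+1}_+, {\cL}]$ and splitting into $\pm$ parts; the consistency of the induced $u$-flows~\eqref{manu} is then automatic since $u$ is determined by $\Phi$. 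I expect the main obstacle to be step (ii): keeping track of the adjoint relations for the $+$/$-$ truncations (recall the nonstandard convention that $+$ means \emph{strictly} positive part) and confirming that the odd power $2n+1$ is exactly what makes $F_{2n+1}^{(0)}=0$ in~\eqref{sb5}, hence what makes the self-duality constraint infinitesimally preserved — for even powers it would fail, which is the whole point of passing to the BKP-type reduction.
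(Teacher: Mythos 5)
Your three-part strategy matches the paper's in outline, and parts (i) and (iii) are essentially the paper's own arguments: tangency to the locus $H\psi=0$ comes from \eqref{sb3} exactly as you compute (the paper phrases it as the equivalence of \eqref{manu} with the Manakov triple \eqref{manak} for $L_n=\cL^{2n+1}_+$), and for commutativity the paper is even more economical than Zakharov--Shabat: $\big[\pa_{t_n}-\cL^{2n+1}_+,\pa_{t_m}-\cL^{2m+1}_+\big]$ annihilates $\psi$ and is a differential operator in $z$, hence zero. The genuine divergence is in part (ii). The paper does not differentiate \eqref{FF}; it preserves self-duality through the residue characterization \eqref{dualdef}: combining \eqref{manpsi} with its reflection $k\mapsto -k$ gives \eqref{tanp}, and since the orthogonality relations $\res_{\infty}\big(\big(\pa_z^i\psi^{\s}\big)\big(\pa_z^j\psi\big)\big)\frac{{\rm d}k}{k}=0$ for $i+j>0$ hold on $\P$ and the coefficients of $\cL^{2n+1}_+$ are $k$-independent, one concludes $\pa_{t_n}\res_{\infty}\big(\psi^{\s}\pa_z^s\psi\big)\frac{{\rm d}k}{k}=0$ for all $s\geq 0$, which is exactly the infinitesimal preservation of \eqref{dual6}.

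Your operator-level alternative via $\pa_{t_n}\Phi=-\cL^{2n+1}_-\Phi$ is legitimate and arguably more structural (it exhibits the flows as the odd BKP flows), but as written it does not close, because of a sign. The identity you actually need is $\big(\cL^{2n+1}_-\big)^*=-\pa_z\cL^{2n+1}_-\pa_z^{-1}$, with a minus: it follows from \eqref{ad1} in the form $\big(\cL^{2n+1}\big)^*=-\pa_z\cL^{2n+1}\pa_z^{-1}$ because, once $F^{(0)}_{2n+1}=0$ is known from \eqref{sb5}, both the formal adjoint and conjugation by $\pa_z$ send the splitting into (differential operator) plus (order $\le -1$) to a splitting of the same type, and that splitting is unique. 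With the plus sign you quote, the two sides of the differentiated \eqref{FF} come out as $-\Phi^*\pa_z\cL^{2n+1}_-\pa_z^{-1}$ and $+\Phi^*\pa_z\cL^{2n+1}_-\pa_z^{-1}$ and do not agree. (You inherited this from \eqref{adjL}, whose printed sign is itself suspect: comparing leading symbols, $\big(\cL^{2n+1}_+\big)^*=-\pa_z^{2n+1}+\cdots$ whereas $\pa_z\cL^{2n+1}_+\pa_z^{-1}=\pa_z^{2n+1}+\cdots$.) Once the minus sign is put in, your computation is correct and gives an independent proof that the flow is tangent to $\P$; your closing remark that $F^{(0)}_{2n+1}=0$, i.e., the restriction to odd powers, is what makes the constraint survive is exactly right.
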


\begin{proof} From (\ref{kk}) and definition of the operator $ {\cL}^{2n+1}_+$ it follows that
\begin{gather*}
\big({\cL}^{2n+1}_+-k^{2n+1}\big) \psi=-{\cL}^{2n+1}_- \psi.
\end{gather*}
The equation~\eqref{sb5} implies that the operator ${\cL}^{2n+1}_-$ is of order at most~$-1$. Hence,
the equation~\eqref{manpsi} is equivalent to a well-defined system of equations on the coefficients~$ \xi_s $ of the formal BA function $\psi$.

In turn, the equation~\eqref{sb3} implies that the equation~\eqref{manu} is equivalent to the operator equation~\eqref{manak} with $L_n={\cL}_{+}^{2n+1}$, which is a compatibility condition for the Schr\"odinger equation and the equation~\eqref{manpsi}. Hence, the vector field defined by the right-hand sides of (\ref{manu}) and~(\ref{manpsi}) is tangent to the space of formal Baker--Akhiezer solutions of the Schr\"odinger equation. Let us show that it is tangent to the space~$\P$ of selfdual solutions, i.e., preserve constraint~\eqref{dual6}.

From the equation~\eqref{manpsi} and the same equation with $k$ replaced by $-k$ it follows that
\begin{gather}\label{tanp}
\p_{t_n} \big( {\psi}^{\s} \big( {\pa}_{z}^{s} {\psi} \big) \big) = {\psi}^{\s} \big( {\pa}_{z}^{s}\cdot {\cL}_{+}^{2n+1}{\psi} \big) + \big( \big( {\cL}_{+}^{2n+1}{\psi}^{\s} \big){\pa}_{z}^{s} {\psi} \big).
\end{gather}
It is easy to show that in conjunction with the self-duality $\psi^*=\psi^\s$ the equation~\eqref{dualdef} implies
\begin{gather*}
{\res}_{\infty} \left( \big( {\pa}_z^{i} {\psi}^{\s} \big) \big( {\pa}_{z}^{j} {\psi} \big) \frac{{\rm d}k}{k} \right) = 0 , \qquad i+j>0.
\end{gather*}
The coefficients of the operator ${\cL}_+^{2n+1}$ are $k$-independent. Therefore, from the equations~\eqref{tanp} and \eqref{sb5} we deduce
\begin{gather*}
\p_{t_n} {\res}_{\infty} \left( \psi^{\s} \big( {\pa}_{z}^{s} {\psi} \big) \frac{{\rm d}k}{k} \right)=0 , \qquad s \geq 0 ,
\end{gather*}
i.e., that $\p_{t_n}$ is tangent to $\P$.

The commutativity of NV flows is easy to demonstrate. Indeed, \eqref{manpsi} implies
\begin{gather}\label{comf}
\big[ {\pa}_{t_n}-{\cL}^{2n+1}_+, {\pa}_{t_{m}}-{\cL}^{2m+1}_+ \big] \psi=0.
\end{gather}
The operator in the left-hand side of the equation~\eqref{comf} is a differential operator in the variable~$z$. From the very form of $\psi$ it follows that if such an operator annihilates it, then it itself must be zero. The theorem is proved.
\end{proof}

\subsection{Integrals}

A direct corollary of the Schr\"odinger equation for a self-dual BA function is the equation
\begin{gather*}
{\pa}_{\zb}\big({\psi}^{\s} {\psi}_{z}-{\psi}^{\s}_{z}{\psi} \big)+
{\pa}_{z}\big( {\psi}^{\s} {\psi}_{\zb}-{\psi}^{\s}_{\zb}{\psi} \big)=0,
\end{gather*}
which is nothing but the conservation
\begin{gather}
{\rm d}_{\Sigma} * j = 0
\label{eq:jconservation}
\end{gather}
of the current
\begin{gather}
j = {\psi}^{\s} {\rm d}_{\Sigma} {\psi} - {\psi} {\rm d}_{\Sigma} {\psi}^{\s} .
\label{eq:currento}
\end{gather}
In other words the $1$-form $j_{z} {\rm d}z-j_{\zb} {\rm d}\zb$, where
\begin{gather}\label{J}
j_{z}: ={\psi}^{\s} {\pa}_{z}\psi- ( {\pa}_{z} \psi^{\s} ) \psi = 2\left(k+\sum_{n=0}^{\infty} J_n k^{-2n-1}\right), \\
j_{\zb} := {\psi}^{\s} {\pa}_{\zb} \psi - ({\pa}_{\zb} {\psi}^{\s} ) \psi = 2\sum_{n=0}^{\infty} {\bar J}_n k^{-2n-1}
\label{J1}
\end{gather}
is closed.
The current \eqref{eq:currento} is the Noether current corresponding to the maximal commutative
subalgebra of an infinite orthogonal group, acting on the formal BA functions by linear transformations,
preserving the $\operatorname{res}_{\infty} {\rm d}k/k \psi^{\s}{\psi}$ pairing, namely,
\[
J(k_1, k_2) = {\psi}^{\s}(k_{1}) {\rm d}_{\Sigma} {\psi}(k_{2}) - {\rm d}_{\Sigma} {\psi}^{\s}(k_{1} ) {\psi}(k_2) = - J(-k_2, -k_1)
\]
obeys
\[
{\rm d}_{\Sigma} \star J (k_1, k_2) = 0 , \qquad {\rm d}_{\Sigma} J (k_1, k_2 ) + \operatorname{res}_{\infty} \frac{{\rm d}k}{k} J (k_1, k) \wedge J (k, k_2) = 0.
\]
By definition \eqref{ad2} of the left action of a differential operator $\D$ we have the identity
\begin{gather*}
(h \D)f=h(\D f)+\sum_{i=1}^{d-i} \p_z^i\big(h\big(\D^{(i)} f\big)\big),
\end{gather*}
where $d$ is the order of $\D$ and $\D^{(i)}$ is differential operator of degree $d-i$ whose coefficients are explicit differential polynomials in the coefficients of the operator~$\D$.
Therefore,
\begin{gather}\label{JQ1}
j_{z}=\big({\rm e}^{-kz}\p_z \Phi^{-1}\p_z^{-1}\big)\big(\p_z\Phi {\rm e}^{kz}\big)
+\big({\rm e}^{-kz}\p_z \Phi^{-1}\big) \big(\Phi {\rm e}^{kz}\big)=2k+\p_z Q^{(1)},
\end{gather}
where the coefficients of the series
\begin{gather*}
Q^{(1)}=\sum_{s=0}^\infty Q^{(1)}_n k^{-2s-1}
\end{gather*}
are universal polynomials in the coefficients of the self-dual BA function.

For further use, let us show that the coefficient $J_n$ of the series \eqref{J} is equal to the function~$F_{2n+1}^{(1)}$ in the right-hand side of equation~\eqref{manu}. Indeed, from \eqref{FF}, \eqref{kk}, \eqref{dic} and~\eqref{sb5} it follows that
\begin{align}
2J_n&= \operatorname{res}_k \big(\big(\psi^\s{\cL}^{2n+1}\big)\psi_z-\psi_z^\s\big({\cL}^{2n+1}\psi\big)\big)k^{-1}{\rm d}k \nonumber\\
&= \operatorname{res}_{\p} \big({\cL}^n+\p_z{\cL}^n\p_z^{-1}\big)=2F_{2n+1}^{(1)}.\label{sb6}
\end{align}
 We conclude this section by proving that $J_n$ are densities of the NV hierarchy integrals.

\begin{lem} The equations
\begin{gather*}
\p_{t_{n}} J_n=\p_z Q_n,
\end{gather*}
where $Q_n$ are explicit differential polynomials in the coefficients $\xi_s$ of the self-dual BA solution of the Schr\"odinger equation, hold.
\end{lem}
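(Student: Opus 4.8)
The plan is to leverage the expansion already in hand. Comparing~\eqref{JQ1} with the definition~\eqref{J} of the current one reads off the identity $2J_n=\p_z Q^{(1)}_n$, where $Q^{(1)}_n$ is a universal differential polynomial in the $\xi_s$. Applying $\p_{t_n}$ and using that it commutes with $\p_z$ (recall $z$, $\zb$, $t_n$ are independent variables) gives $2\,\p_{t_n}J_n=\p_z\big(\p_{t_n}Q^{(1)}_n\big)$, so everything reduces to checking that $\p_{t_n}Q^{(1)}_n$ is again a differential polynomial in the $\xi_s$. For this I would invoke the proof of Theorem~\ref{thm:man}: the flow~\eqref{manpsi} is equivalent to $\p_{t_n}\psi=-{\cL}^{2n+1}_-\psi$ with ${\cL}^{2n+1}_-$ of order at most $-1$ by~\eqref{sb5}, and expanding ${\cL}^{2n+1}_-\psi$ in powers of $k^{-1}$ and matching coefficients with $\psi={\rm e}^{kz}\big(1+\sum_s\xi_s k^{-s}\big)$ exhibits each $\p_{t_n}\xi_s$ as a differential polynomial in the $\xi_j$ (the coefficients $v_j$ of ${\cL}=\Phi\cdot\p_z\cdot\Phi^{-1}$ being themselves differential polynomials in the $\xi_j$). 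The chain rule then shows $\p_{t_n}Q^{(1)}_n$ is a differential polynomial in the $\xi_s$, and one takes $Q_n:=\tfrac12\,\p_{t_n}Q^{(1)}_n$.

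Equivalently, one can run the standard Lax-pair argument: by~\eqref{res1} and~\eqref{sb6} one has $J_n=F^{(1)}_{2n+1}=\res_{\p}{\cL}^{2n+1}$, while~\eqref{kk} and~\eqref{manpsi} yield the Lax equation $\p_{t_n}{\cL}=\big[{\cL}^{2n+1}_+,{\cL}\big]$, hence $\p_{t_n}{\cL}^{2n+1}=\big[{\cL}^{2n+1}_+,{\cL}^{2n+1}\big]=-\big[{\cL}^{2n+1}_-,{\cL}^{2n+1}\big]$. Therefore $\p_{t_n}J_n=\res_{\p}\big[{\cL}^{2n+1}_+,{\cL}^{2n+1}\big]$, and the classical fact that the residue of a commutator of pseudo-differential operators is a total $z$-derivative $\p_z r(A,B)$, with $r$ a universal differential polynomial in the coefficients of $A$ and $B$, gives the conclusion, the coefficients of ${\cL}$ (hence of ${\cL}^{2n+1}$ and ${\cL}^{2n+1}_\pm$) being differential polynomials in the $\xi_s$.

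There is no genuine obstacle here; the only thing to watch is the bookkeeping that ensures all objects produced are honest differential polynomials in the $\xi_s$ with no surviving antiderivatives $\p_z^{-1}$ — this is precisely the role of the order-$\le-1$ statement~\eqref{sb5} (equivalently $F^{(0)}_{2n+1}=0$) in the first route, and of the residue-of-a-commutator identity in the second. Either route also makes $Q_n$ completely explicit.
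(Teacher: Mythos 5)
Your proposal is correct, but both of your routes differ from the one taken in the paper. The paper's proof works directly with the bilinear current: it applies \eqref{manpsi} to $\psi$ and the corresponding equation with $k\mapsto -k$ to $\psi^{\s}$, writes $\p_{t_n}J$ as a sum of four terms involving ${\cL}^{2n+1}_{+}$ acting on the right and (via the left-action convention \eqref{ad2}) its adjoint acting on the left, and then uses \eqref{sb6} together with the BKP self-adjointness \eqref{adjL} to recombine these four terms into an exact $z$-derivative of an explicit bilinear differential polynomial. Your first route is shorter and perfectly legitimate: since \eqref{JQ1} already gives $2J_n=\p_z Q^{(1)}_n$ with $Q^{(1)}_n$ a universal differential polynomial in the $\xi_s$, all that remains is the (true) observation that $\p_{t_n}\xi_s$ is again a differential polynomial in the $\xi_j$, which follows from $\p_{t_n}\psi=-{\cL}^{2n+1}_{-}\psi$ and the order bound \eqref{sb5}; note that this actually establishes the stronger fact that $J_n$ itself, not only its time derivative, is an exact $z$-derivative --- the very fact the paper exploits separately in the proof of Lemma~\ref{lm:comring} --- so your argument buys economy at the price of making the lemma look almost tautological given \eqref{JQ1}. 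Your second route is the classical Adler-type argument: the Lax equation $\p_{t_n}{\cL}=\big[{\cL}^{2n+1}_{+},{\cL}\big]$ does follow from \eqref{kk} and \eqref{manpsi} exactly as you indicate, and combined with $J_n=F^{(1)}_{2n+1}=\res_{\p}{\cL}^{2n+1}$ from \eqref{res1} and \eqref{sb6} it reduces everything to the residue-of-a-commutator lemma; this is the standard KP/BKP mechanism, but it imports an external fact (that $\res_{\p}[A,B]$ is $\p_z$ of a universal differential polynomial) which the paper never states, whereas the paper's computation and your first route stay entirely within identities already established in the text. All three arguments yield an explicit $Q_n$.
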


\begin{proof}
The equation~\eqref{manpsi} and the same equation for $\psi^\s$ with $k$ replaced by $-k$ imply
\begin{gather*}
\p_{t_{n}} J = \psi^\s\big(\p_z\big({\cL}_+^{2n+1}\psi\big)\big)+
\big(\psi^\s\big({\cL}_+^{2n+1}\big)^*\big)(\p_z\psi) \\
\hphantom{\p_{t_{n}} J =}{}
+ \big(\psi^\s\big({\cL}_+^{2n+1}\big)^*\p_z\big)\psi+\big(\psi^\s \p_z\big)\big({\cL}^{2n+1}\psi\big).
\end{gather*}
Therefore, from \eqref{sb6} and \eqref{adjL} it follows that $\p_{t_n} J=\p_z Q_{n}$ where the coefficients of~$Q_{n}$ are explicit polynomials in the coefficients $\xi_s$ of the self-dual BA solution~$\psi$. The lemma is proved.
\end{proof}

\section[The NV flows as symmetries of O(N) model]{The NV flows as symmetries of $\boldsymbol{O(N)}$ model}

Our next goal is to show that the NV hierarchy defines a set of commuting symmetries of the~$O(N)$ sigma model.

First observe that if the equations
\begin{gather}\label{wn}
\big(\p_{z}^j {\bf q},\p_{z}^j {\bf q}\big)=0,\qquad 0<j<m,
\end{gather}
hold then \eqref{shrod} implies ${\p}_{\zb}\big(\p_z^m {\bf q},\p_z^m {\bf q}\big)=0$.

We will call the solutions of the $O(N)$ model satisfying \eqref{wn} \emph{$w_m$-harmonic}. Notice, that for $m>1$ they are conformal. In this way the condition of being $w_{m}$-harmonic defines a filtration on the space of conformal maps.

The equations of motion (\ref{shrod}), (\ref{u}) of the model are invariant under a conformal change of variables $z \mapsto f(z), {\zb} \mapsto {\bar f} ( {\zb} )$. Hence if for a $w_m$-harmonic map $\big(\p_z^m {\bf q}$, $\p_z^m {\bf q}\big)\neq 0$, then without loss of generality we may assume that the equation
\begin{gather}\label{dq}
\big( {\p}_z^m {\bf q}, {\p}_z^m {\bf q} \big) = (-1)^{m+1}
\end{gather}
holds.

\begin{theo} Let ${\bf q}(z,\zb)$ be a complex solution of the $O(N)$ sigma model obeying \eqref{wn}, \eqref{dq}. Then there exists a unique up to multiplication by $(z,\zb)$-independent factor $\rho(k)$, such that
\begin{gather}\label{psitrans}
\psi \longmapsto \psi \rho(k),\qquad \rho(k)=\exp \left(\sum_{s=1}^\infty \rho_s k^{-2s+1}\right)
\end{gather}
is a self-dual formal BA solution $\psi$ of the Schr\"odinger equation with the potential $u$ given by~\eqref{u} such that the constraint~\eqref{const} is invariant under the commuting flows
\begin{gather}\label{qdyn1}
\p_{t_n} {\bf q} = {{\cL}}_+^{2n+1} {\bf q}
\end{gather}
with $\cL$ defined by \eqref{kk}.
\end{theo}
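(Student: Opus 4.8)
The plan is to build $\psi$ by matching the formal Baker--Akhiezer data to the vector ${\bf q}$ and its $z$-derivatives, exploiting exactly the fact that ${\bf q}$ solves the Schr\"odinger equation \eqref{shrod} with $u$ given by \eqref{u}. First I would observe that each component $q^i$ already satisfies $H q^i = 0$, so the $N$-dimensional span $W_0 = \langle q^1,\dots,q^N\rangle$ lies in $\ker H$; more generally, differentiating \eqref{shrod} and using the $w_m$-harmonicity \eqref{wn}, the vectors $\p_z^j {\bf q}$ for $0 \le j < m$ generate further solution-type data whose pairwise $(\cdot,\cdot)$-products vanish by \eqref{wn}. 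The normalization \eqref{dq} fixes the first nonvanishing pairing. The idea is then to produce the wave operator $\Phi$ (equivalently the coefficients $\xi_s$, equivalently the $\chi_{2s-1}(z)$ in the parametrization of $\P$) so that $\psi = \Phi\,{\rm e}^{kz}$ is self-dual and so that the current $j_z$ in \eqref{J}, \eqref{JQ1} reproduces $({\bf q},{\bf q}) = 1$ in an appropriate asymptotic sense. Concretely: the quantity $({\bf q},{\bf q})$ is the $t$-independent ``charge'' of the orthogonal-group action whose Noether current is \eqref{eq:currento}; I would set up the ansatz that $\psi$ be the self-dual BA function attached to the $w_m$-harmonic map via $\res_\infty \frac{{\rm d}k}{k}\,\psi^\s(k)\,\p_z^j\psi(k) = $ (the $(\p_z{\bf q},\p_z{\bf q})$-type pairings), and check this is consistent with the recursion \eqref{xis}.

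The construction of $\rho(k)$ is the uniqueness-plus-normalization step. Given any self-dual $\psi$ solving $H\psi = 0$ with the prescribed $u$, the only remaining freedom is multiplication by a $k$-series constant in $(z,\zb)$; the constraint $\psi^\s = \psi^*$ forces this factor to be of the exponential form $\rho(k) = \exp\big(\sum_s \rho_s k^{-2s+1}\big)$ with only odd negative powers (even powers would break self-duality, by the parity bookkeeping already used in Lemmas \ref{xiodd} and the one establishing \eqref{FF}). So I would first show \emph{existence} of \emph{some} self-dual $\psi$ with potential $u$ — this is where \eqref{wn} and \eqref{dq} enter, since they guarantee the pairings $\res_\infty \frac{{\rm d}k}{k}\psi^\s \p_z^j\psi$ vanish for $0 < j \le 2m-2$ and equal $\pm 1$ at $j = 2m-1$, matching the self-duality constraints \eqref{dodd}, \eqref{dodd1} exactly — and then use Lemma \ref{xiodd} to confirm no obstruction arises at the next odd order. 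The $\rho_s$ are then uniquely pinned down order by order by demanding $\psi\rho(k)$ be the \emph{canonical} self-dual representative (say, normalized so a chosen coefficient vanishes, or so the leading correction matches ${\bf q}$).

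The heart of the theorem is the invariance of $({\bf q},{\bf q}) = 1$ under \eqref{qdyn1}. Here I would mimic the argument in Theorem \ref{thm:man} showing $\p_{t_n}$ is tangent to $\P$: compute
\[
\p_{t_n}({\bf q},{\bf q}) = 2\big({\cL}_+^{2n+1}{\bf q},\,{\bf q}\big),
\]
and then rewrite the right-hand side using \eqref{adjL}, i.e.\ $\big({\cL}_+^{2n+1}\big)^* = \p_z \cdot {\cL}_+^{2n+1}\cdot \p_z^{-1}$, together with the identity \eqref{ad2} for the left action of a differential operator, to turn $\big({\cL}_+^{2n+1}{\bf q},{\bf q}\big)$ into a total $z$-derivative plus terms that vanish by the $w_m$-harmonicity relations $(\p_z^j{\bf q},\p_z^j{\bf q}) = 0$ and the vanishing of the mixed pairings $(\p_z^i{\bf q},\p_z^j{\bf q})$ forced by \eqref{wn}. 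The key point is that the self-duality of $\psi$ (hence of $\cL$, via Corollary \ref{bkp}) is precisely what makes $\big({\cL}_+^{2n+1}{\bf q},{\bf q}\big) + \big({\bf q},{\cL}_+^{2n+1}{\bf q}\big)$ collapse to $\p_z(\text{something})$; combined with ${\p}_{\zb}$-closedness (using $H{\bf q} = 0$ and the Manakov relation \eqref{manak} with $L_n = {\cL}_+^{2n+1}$, which holds by \eqref{sb3}) one gets that $\p_{t_n}({\bf q},{\bf q})$ is both $\p_z$- and $\p_{\zb}$-exact on the torus, hence a constant, and the constant is zero by checking the leading asymptotics against \eqref{dq}.

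\textbf{Main obstacle.} I expect the hardest part to be the \emph{existence and compatibility} step: showing that the self-duality constraints \eqref{dodd} can be simultaneously imposed on a BA function whose potential is the \emph{prescribed} $u = -(\p_z{\bf q},\p_{\zb}{\bf q})$ coming from the sigma-model solution — in other words, that the $w_m$-harmonicity conditions \eqref{wn} are not merely necessary but exactly sufficient for the vector ${\bf q}$ to ``fit inside'' a self-dual BA function, with the scalar factor $\rho(k)$ being the only discrepancy. This requires matching an infinite hierarchy of residue identities \eqref{dualdef} against the finite set of polynomial relations \eqref{wn}, \eqref{dq}, and the mechanism by which the finitely many conditions propagate to all orders is precisely the content of Lemma \ref{xiodd} (and its iterated use). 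Once existence is secured, the dynamical invariance is a relatively mechanical consequence of \eqref{adjL}, \eqref{sb3}, and the commutativity already established in Theorem \ref{thm:man}.
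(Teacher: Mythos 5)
Your overall architecture (match residue pairings of a self-dual $\psi$ against the quadratic data of ${\bf q}$, induct to build the odd coefficients, then reduce the invariance of \eqref{const} to $\big({\bf q},{\cL}_+^{2n+1}{\bf q}\big)=0$) is the right one, and you correctly locate where \eqref{wn} and \eqref{dq} enter. But there are two concrete gaps. First, the pairing you propose to match, $\res_\infty \frac{{\rm d}k}{k}\,\psi^\s\,\p_z^j\psi$, is already frozen by self-duality: by \eqref{dualdef} with $\psi^*=\psi^\s$ it equals $-\delta_{j,0}$ for every self-dual $\psi$, so it carries no information about ${\bf q}$ and in particular cannot reproduce $T_{2m}=({\bf q},\p_z^{2m}{\bf q})=-1$ from \eqref{dq} when $m\geq 1$. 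The paper instead matches $T_i=({\bf q},\p_z^i{\bf q})$ against the \emph{shifted} Fermi-moments $f_i^{[m]}=\res_\infty \psi^\s(\p_z^i\psi)\,{\rm d}k/k^{2m+1}$; the weight $k^{-2m-1}$ is exactly what makes the leading asymptotics of $f_{2m}^{[m]}$ equal $-1$, and the subleading coefficients of $f_i^{[m]}$ then become the equations determining the free data $\chi_{2n-1}(z)$. Second, the propagation mechanism from finitely many conditions to all orders is not Lemma \ref{xiodd} (which only governs the self-duality constraints themselves). The paper's key observation is that both $T_i$ and $f_i^{[m]}$, being bilinear in solutions of the same Schr\"odinger equation, satisfy the identical triangular $\zb$-evolution \eqref{tdiff} driven by $u$ alone; hence the infinite hierarchy of identities $T_i=f_i^{[m]}$ reduces to conditions at $\zb=0$, where an induction shows that $g_{0,2n}$ is $\zb$-independent and is killed by a first-order ODE for $\chi_{2n-1}$ (whose integration constant is precisely the $\rho(k)$-ambiguity), with the odd step $g_{0,2n+1}=0$ following from the parity relation $g_{n-i,n+i}=(-1)^i g_{n,n}$. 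Without this reduction your ``matching an infinite hierarchy of residue identities'' has no handle.

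On the final step, your route via ``$\p_{t_n}({\bf q},{\bf q})$ is a total $z$-derivative and $\p_\zb$-closed, hence constant on the torus, hence zero'' is both unavailable and unnecessary: this theorem is stated for complex solutions without any periodicity assumption (the torus only enters in the next section), and in any case $\p_z$- and $\p_\zb$-exactness of a function does not imply constancy. Once the moment identities $T_i=f_i^{[m]}$ are in place, the conclusion is a two-line residue computation: $\big({\bf q},{\cL}_+^{2n+1}{\bf q}\big)=\res_\infty \psi^\s\big(k^{2n+1}-{\cL}_-^{2n+1}\big)\psi\,k^{-2m-1}{\rm d}k$, where the first term is an odd differential (no residue) and the second is $O\big(k^{-2m-2}\big)$ (no residue). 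No appeal to compactness, to \eqref{adjL}, or to the Manakov relation is needed at this stage.
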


\begin{rem} Note, that the operator $\cL$ defined by \eqref{kk} is invariant under the transforma\-tion~\eqref{psitrans}. Hence, the right-hand side is uniquely defined by $\bf q$.
\end{rem}

\begin{proof} Consider the \emph{moments} defined by the formula
\begin{gather*}
T_{i} (z, {\zb}) := \big( {\bf q} , {\pa}_z^{i} {\bf q} \big) , \qquad i=0,1,\dots.
\end{gather*}
The original constraint \eqref{const} translates to $T_0=1$. By assumptions \eqref{wn} \eqref{dq}, the next moments equal $T_i=0$, $1<i<2m$, $T_{2m}=-1$.

Taking the $\zb$ derivative of $T_i$ and using the equation~\eqref{shrod} we conclude that the moments satisfy the triangular system of linear equations
\begin{gather}
 {\pa}_{\zb} T_i = {\pa}_z {\pa}_{\zb} T_{i-1} - {\pa}_z \big( {\bf q} , {\pa}_z^{i-2}(u {\bf q}) \big)+\big( {\bf q} , \big[ {\pa}_z^{i-1} , u\big] {\bf q} \big) \nonumber \\
\hphantom{{\pa}_{\zb} T_i}{}
= {\pa}_{z} \left( {\pa}_{\zb}T_{i-1} - \sum_{l=0}^{i-2} {{{i-2}\choose{l}}} T_{l} {\pa}_{z}^{i-l-2} u \right) +
 \sum_{l=0}^{i-2} {{{i-1}\choose{l}}} T_{l} {\pa}_{z}^{i-1-l} u ,
\qquad i>2.\label{tdiff}
\end{gather}
Therefore, the moments $T_i$ are uniquely determined by the initial conditions $T_{i}(z,0)$ for \eqref{tdiff}.

Now, define the Fermi-moments
\[
f^{[m]}_{i} :=\res_\infty \left( {\psi}^{\s} \big( {\pa}_{z}^{i} {\psi} \big) \frac{{\rm d}k}{k^{2m+1}} \right).
\]

\begin{lem} \label{lm:moments} Under the assumptions of the theorem there is a unique up to the transformation~\eqref{psitrans} self-dual BA solution $\psi$ of the Schr\"odinger equation~\eqref{shrod} such that for $i>1$ the equations
\begin{gather}\label{tq}
T_i=f_i^{[m]}
\end{gather}
hold.
\end{lem}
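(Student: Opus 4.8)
The plan is to match the $(z,\zb)$-dependence of the two families $T_i$ and $f^{[m]}_i$ by first matching their restrictions to $\zb = 0$ (which are functions of $z$ alone) and then invoking uniqueness of the solution to an evolution equation in $\zb$. First I would observe that the Fermi-moments $f^{[m]}_i$ satisfy a recursion in $\zb$ of exactly the same triangular shape as \eqref{tdiff}: differentiating $f^{[m]}_i$ in $\zb$, using $H\psi = 0$ and $H\psi^\s = 0$ (the latter by Lemma~\ref{lm:dualsol} together with self-duality), and the relations \eqref{dodd3}-type identities $\res_\infty\big((\pa_z^i\psi^\s)\pa_z^j\psi\big)\frac{{\rm d}k}{k} = 0$ for $i+j>0$ that hold for a self-dual $\psi$, one reproduces a system
\[
{\pa}_{\zb} f^{[m]}_i = {\pa}_z\Big( {\pa}_{\zb} f^{[m]}_{i-1} - \sum_{l} \tbinom{i-2}{l} f^{[m]}_l {\pa}_z^{i-l-2} u\Big) + \sum_l \tbinom{i-1}{l} f^{[m]}_l {\pa}_z^{i-1-l} u
\]
with the \emph{same} potential $u$ from \eqref{u}. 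Hence if $f^{[m]}_i = T_i$ for $i \le $ some finite range at $\zb = 0$ and the low moments agree for all $\zb$, the full identity \eqref{tq} propagates.

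Next I would pin down the low-order data. Self-duality forces $f^{[m]}_0 = -\res_\infty(\psi^\s\psi)\frac{{\rm d}k}{k^{2m+1}}$, and using $\psi^\s\psi = 1 + (\text{lower order in }k)$ together with the defining relations \eqref{dodd} one reads off $f^{[m]}_0$ and the next Fermi-moments in terms of $\xi_1,\dots,\xi_{2m}$; similarly $T_0 = 1$, $T_i = 0$ for $1<i<2m$, $T_{2m} = -1$ from the hypotheses \eqref{wn}, \eqref{dq}. These are finitely many equations on the coefficients $\xi_1(z,0), \xi_3(z,0),\dots$ of the self-dual wave operator restricted to $\zb = 0$. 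By the phase-space description $\P \simeq \{(u, \chi_1,\chi_3,\dots)\}$, choosing $\chi_1(z),\chi_3(z),\dots$ amounts to choosing exactly these data freely; I would show the finitely many matching conditions on $\chi_1,\dots,\chi_{2m-1}$ can be solved uniquely once one fixes the normalization $\rho(k)$, i.e.\ the transformation \eqref{psitrans}, which is precisely the freedom of rescaling $\psi \mapsto \psi\rho(k)$ with $\rho$ an even-power series (so that self-duality is preserved). After that, $\chi_{2m+1}, \chi_{2m+3},\dots$ are determined recursively by demanding $f^{[m]}_i = T_i$ for the remaining $i$, using that each new $f^{[m]}_i$ contains $\chi_{2\lceil \cdot\rceil+1}$ linearly with invertible leading coefficient.

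The main obstacle I anticipate is the bookkeeping that makes the matching conditions \emph{triangular and solvable}: one must check that the map sending the tail $(\chi_{2j+1})_{j\ge m}$ to the tail $(f^{[m]}_i - T_i)_{i}$ is invertible term by term, and that the finitely many constraints coming from the ``anomalous'' moments $T_0 = 1$, $T_{2m} = -1$ are consistent with — and only with — the one-parameter family of normalizations $\rho(k)$. Concretely, one has to verify that the parity structure (only odd $\chi$'s, only even-power $\rho$'s, the identity $\res_\infty f(k){\rm d}k = -\res_\infty f(-k){\rm d}k$ used as in \eqref{dodd4}) lines up so that no obstruction appears at the boundary index $i = 2m$; this is where Lemma~\ref{xiodd} is the essential input, since it guarantees the self-duality relations \eqref{dodd} are not over-determined. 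Once solvability of the $\zb = 0$ data and the $\zb$-propagation are both in hand, uniqueness of the solution to the triangular Cauchy problem \eqref{tdiff} gives \eqref{tq} for all $i > 1$ and all $(z,\zb)$, and uniqueness of $\psi$ up to \eqref{psitrans} follows from the uniqueness at each step of the construction.
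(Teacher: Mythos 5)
Your proposal follows essentially the same route as the paper's proof: both $T_i$ and $f_i^{[m]}$ obey the triangular $\zb$-evolution \eqref{tdiff} (being bilinear forms on solutions of the Schr\"odinger equation), so the identity propagates from $\zb=0$, where matching determines each $\chi_{2n-1}$ through a first-order linear ODE whose integration constant is exactly the residual freedom \eqref{psitrans}, and the odd-index equations $g_{0,2n+1}=0$ come for free from the parity identity $\pa_z g_{n,n}=2g_{n,n+1}$, just as you anticipate via the mechanism of Lemma~\ref{xiodd}. One small correction: the normalization $\rho(k)$ in \eqref{psitrans} is the exponential of a series in \emph{odd} negative powers $k^{-2s+1}$, so that $\rho(k)\rho(-k)=1$ and self-duality is preserved, not an even-power series.
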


\begin{proof}The derivation of \eqref{tdiff} uses only the fact that the corresponding quantities are given by some bilinear form on the pair of solutions of the Schr\"odinger equation, i.e., the functions~$f_{i}^{[m]} (z,\zb)$ defined in~\eqref{tq} for any $\psi$ satisfy the same equations~\eqref{tdiff}. Therefore the equations~\eqref{tq} hold identically in $(z,\zb)$ if they are satisfied at~$(z,0)$. We now show by induction that the latter condition allows to recover the functions $\chi_{2n-1}(z)$ defining the self-dual BA function.

We know from the equations~\eqref{wn} and \eqref{dq} that the assumption holds for $n = m$.
Now assume the equation~\eqref{tq} holds for all $i\leq 2n-1$ for a self-dual BA function with fixed $( \chi_1,\dots, \chi_{2n-3})$. Then
\begin{gather*}
g_{i,j}:=\big( {\pa}_{z}^j {\bf q} , {\pa}_{z}^i {\bf q} \big) - \operatorname{res}_{\infty} \frac{{\rm d}k}{k^{2m+1}} \big( {\pa}_{z}^j{\psi}^{\s} \big) \big( {\pa}_{z}^i \psi \big) = 0 ,\qquad i+j\leq 2n-1
\end{gather*}
and
\begin{gather*}
g_{n-i,n+i}:=(-1)^i g_{n,n}.
\end{gather*}
From \eqref{tdiff} with $i=2n$ (and the same equation for $f_{2n}^{[m]}$) it follows that under the induction assumption ${\pa}_{\zb} (g_{0,2n})=0$. Now we are going to show that with a proper choice of ${\pa}_z\chi_{2n-1}$ the equation~\eqref{tq} with $i=2n$ holds, i.e., $g_{0,2n}=0$.

Indeed, the substitution of \eqref{psi} into the definition of $f_{2n}^{[m]}$ gives
\begin{gather*}
f_{2n}^{[m]} = 2\xi_{2n} + R ( {\xi}_1,\dots,{\xi}_{2n-2} ),
\end{gather*}
where $R$ is some explicit differential polynomial of its arguments. Recall, that the self-duality of $\psi$ gives the expression of ${\xi}_{2n}$ in terms of ${\xi}_1,\dots, {\xi}_{2n-1}$ of the form{\samepage
\begin{gather*}
0 = 2{\xi}_{2n}-2{\xi}_1{\xi}_{2n-1}+{\pa}_{z}{\xi}_{2n-1}+{\tilde R} ({\xi}_1,\dots,{\xi}_{2n-2} ), \qquad n>1,
\end{gather*}
where ${\tilde R}$ is also a differential polynomial of its arguments.}

Equation \eqref{tq} with $i=2n$ restricted onto $(z,0)$ is equivalent to the first order differential equation for the function ${\chi}_{2n-1}$ which defines it uniquely if the initial condition ${\chi}_{2n-1}(0)$ is given. The ambiguity in the choice of the latter corresponds to the transformation~\eqref{psitrans}.

If equation $g_{0,2n}=0$ is satisfied then using the equation
\begin{gather*}
\p_{z} g_{n,n}=2g_{n,n+1}
\end{gather*}
we get that $g_{0,2n+1}=0$. The induction step is completed and the lemma is proved.
\end{proof}

The theorem statement that the constraint \eqref{const} in invariant under the flows \eqref{qdyn1} is equivalent to the equations $\big(q,{\cL}_+^{2n+1} q\big)=0$. The latter is an easy corollary of~\eqref{tq}. Indeed, since the coefficients of ${\cL}^{2n+1}_+$ are $k$ independent equations~\eqref{tq} imply
\begin{gather*}
\big(q,{\cL}_+^{2n+1} q\big)=\res_\infty \big(\psi^\s, {\cL}_+^{2n+1} \psi\big) k^{-2m-1} {\rm d}k \\
\hphantom{\big(q,{\cL}_+^{2n+1} q\big)}{}
=\res_\infty \big(\psi^\s, \psi\big) k^{2n-2m} {\rm d}k+
\res_\infty \big(\psi^\s, {\cL}_-^{2n+1} \psi\big) k^{-2m-1} {\rm d}k=0.
\end{gather*}
For the proof of the last equation it is enough to note that the first term has no residue since it is an odd differential. The second term is of order $O\big(k^{-2m-2}\big)$ and hence also has no residue. The theorem is proved.
\end{proof}

\section[O(N) sigma model on a two-torus]{$\boldsymbol{O(N)}$ sigma model on a two-torus}

In the previous section, the Novikov--Veselov hierarchy was defined as a system of commuting flows on the space of self-dual solutions to the Schr\"odinger equations on a plane ${\BR}^{2}$. Under the assumption that the potential is periodic in one of the variables, i.e., is defined on ${\BR}^{1} \times S^{1}$, the hierarchy can be defined as a system of flows on the space of the Schr\"odinger operators themselves.
The idea of the corresponding construction goes back to~\cite{kp1} where the KP hierarchy was defined as a set of commuting flows on the space of functions of two variables. The starting point of the construction is an observation that for periodic potentials a unique formal BA function can be singled out by the condition that it is Bloch, i.e., an eigenvector of the monodromy around~$S^1$. We refer the reader to \cite{prym, kp1} where the case of two independent variables $z$ and $\zb$ was considered.

In this section, we present a necessary modification of this construction to the case where the potential
$u$ is a doubly periodic function,
\[
u = \sum_{(m,n) \in {\BZ}^{2}} {\hat u}_{m,n} {\rm e}^{2\pi \ii ( m x + n y )},
\]
or, in terms of the $z = x + {\tau}y$, ${\zb} = x + {\bar\tau}y$ coordinates:
\begin{gather*}
u \left( z+{\omega}_{\a} , {\zb} + {\bar\omega}_{\a} \right) = u(z,\zb),
\end{gather*}
for ${\a} = x, y$, cf.~\eqref{sigma}.

\subsection{The formal Bloch solutions}

\begin{lem} \label{lm:psibloch}Let $u(z,\zb)$ be a double periodic function. Then there is a unique up to the transformation \eqref{psitrans} self-dual BA formal solution of the Schr\"odinger equation \eqref{shrod} of the form
\begin{gather}\label{psibloch}
\psi(z,{\zb},k)= {\rm e}^{kz+{\ell}(k)\zb}\left(1+\sum_{s=1}^\infty \zeta_s(z,\zb)k^{-s}\right)
\end{gather}
with double periodic coefficients $\zeta_s$, i.e.,
\begin{gather}\label{zetas}
\zeta_{s} \big( z+{\omega}_{\a} , {\zb}+{\bar \omega}_{\a} \big) = \zeta_{s}(z,{\zb}) , \qquad {\a} = x, y,
\end{gather}
and where $\ell(k)$ is a formal series,
\begin{gather}\label{ell}
{\ell}(k) = \sum_{s=1}^\infty {\ell}_s k^{-s}.
\end{gather}
\end{lem}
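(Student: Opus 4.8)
The plan is to construct $\psi$ order by order in $k^{-1}$, determining simultaneously the formal series $\ell(k)$ and the doubly periodic coefficients $\zeta_s(z,\zb)$, and then to impose the self-duality constraint. First I would substitute the ansatz \eqref{psibloch} into the Schr\"odinger equation \eqref{shrod}. Writing $\psi = {\rm e}^{kz+\ell(k)\zb}\,\Theta$ with $\Theta = 1+\sum_s \zeta_s k^{-s}$, the equation $H\psi=0$ becomes a linear equation for $\Theta$ of the form $\big({-}\pa_z\pa_\zb - k\,\pa_\zb - \ell(k)\,\pa_z - k\ell(k) + u\big)\Theta = 0$. Collecting the coefficient of $k^{-s+1}$ (for $s\ge 1$) gives a recursion of the schematic form
\begin{gather*}
\pa_\zb \zeta_s + \ell_1 \pa_z\zeta_{s-1} + \ell_s = \big(\text{differential polynomial in }\zeta_1,\dots,\zeta_{s-1}\text{ and }\ell_1,\dots,\ell_{s-1}\big),
\end{gather*}
where the right-hand side is known by induction and the constant $\ell_s$ appears because it multiplies the leading $1$ in $\Theta$. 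This is an inhomogeneous $\pa_\zb$-equation on the torus for the doubly periodic unknown $\zeta_s$; its solvability requires the right-hand side (after absorbing $\ell_s$) to have zero average over $\Sigma$, and this single scalar condition is exactly what pins down $\ell_s$ uniquely. Having chosen $\ell_s$, the equation $\pa_\zb \zeta_s = (\text{known})$ determines $\zeta_s$ up to an additive function of $z$ alone that is also periodic; but periodicity in $z$ together with the structure of the next step in the recursion (or, equivalently, the freedom \eqref{psitrans}) fixes this ambiguity down to the stated one-parameter family. I would make the bookkeeping precise by passing to Fourier modes in $(x,y)$: the operator $\pa_\zb$ acts on the mode ${\rm e}^{2\pi\ii(mx+ny)}$ by a nonzero scalar unless $(m,n)=(0,0)$, so inversion is immediate off the zero mode, and the zero mode is precisely where $\ell_s$ enters.

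Once the formal Bloch solution \eqref{psibloch} is constructed, I would invoke Lemma~\ref{lm:psibloch}'s sibling machinery already in the excerpt: the dual series $\psi^*$ is defined by \eqref{dualdef}, and by Lemma~\ref{lm:dualsol} it also solves $H\psi^*=0$; moreover $\psi^*$ has the form ${\rm e}^{-kz - \ell^*(k)\zb}(1+\cdots)$ for some series $\ell^*$, with doubly periodic corrections, because the defining residue relations \eqref{dualdef} are purely algebraic in the coefficients and hence preserve double periodicity. Then $(\psi^*)^\s$ is again a formal Bloch solution of $H\psi=0$ of the form ${\rm e}^{kz+\ell^*(-k)\,\zb}(1+\cdots)$ with periodic coefficients. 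By the \emph{uniqueness} just established — a formal Bloch self-consistent solution with leading exponential ${\rm e}^{kz+(\cdots)\zb}$ is unique up to the scaling \eqref{psitrans} — we must have $\ell^*(-k)=\ell(k)$ and $(\psi^*)^\s = \psi\cdot\rho(k)$ for a unique $(z,\zb)$-independent $\rho$. Comparing leading coefficients and using that the dual construction is an involution ($s^2={\rm id}$, as noted after Lemma~\ref{lm:dualsol}), one finds $\rho(k)\rho(-k)=1$, so $\rho(k)=\exp(\sum \rho_s k^{-2s+1})$ as in \eqref{psitrans}; absorbing the appropriate half of this into $\psi$ (i.e., choosing the representative within the gauge orbit \eqref{psitrans}) makes $\psi$ exactly self-dual, $\psi^* = \psi^\s$. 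This fixes the remaining one-parameter ambiguity up to the residual scaling by a self-dual $\rho$, which is the assertion of the lemma.

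The main obstacle, and the step I would treat most carefully, is the solvability of the $\pa_\zb$-recursion on the torus: one must verify that the zero-mode (average) obstruction at each order is genuinely a single scalar that the free constant $\ell_s$ can cancel, and that no higher obstruction forces a constraint on $u$. Concretely, I would check that in the recursion the unknown $\zeta_s$ enters only through $\pa_\zb\zeta_s$ (so its own zero mode is free) while $\ell_s$ enters through $k\ell(k)\cdot 1$, contributing $-\ell_s$ to the zero-mode equation with nonzero coefficient — this decoupling is what guarantees a solution exists at every order with no compatibility condition on the potential. A secondary point requiring care is showing that the leftover freedom (additive periodic function of $z$ in $\zeta_s$, plus the constant $\ell_s$ shift if one tried a different normalization) is matched exactly by the group \eqref{psitrans} acting on a self-dual $\psi$; here one uses that conjugating $\psi\mapsto\psi\rho(k)$ with $\rho$ of the stated odd-power form leaves $H$, the Bloch property, and self-duality all invariant, and shifts the coefficients $\zeta_s$ precisely by the ambiguity one has encountered. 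Everything else — the formal adjoint bookkeeping, the residue identities — is already packaged in the lemmas preceding this statement and can be cited.
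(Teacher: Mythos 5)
Your construction of the formal Bloch solution is essentially the paper's own argument: substitute \eqref{psibloch} into \eqref{shrod}, note that the unknown $\zeta_{s+1}$ enters only through $\pa_{\zb}\zeta_{s+1}$ while the constant $\ell_{s+1}$ enters additively (this is the recursion \eqref{eq:zeta}), and use the fact that on the torus $\pa_{\zb}$ is invertible off the zero mode, so the single average condition determines $\ell_{s+1}$ and then $\zeta_{s+1}$ is fixed up to an additive constant. Your Fourier-mode justification is fine. (Your intermediate remark that the ambiguity is ``an additive function of $z$ alone that is also periodic'' concedes too much: a doubly periodic function annihilated by $\pa_{\zb}$ is a constant, as your own zero-mode argument then shows; this does not affect the conclusion.)

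The genuine gap is in the self-duality step. Writing $s(\psi)=(\psi^{*})^{\sigma}=\rho(k)\psi$ with $\rho$ a constant series is correct and is the paper's relation $\psi^{\sigma}=h(k)\psi^{*}$ in disguise. But the involutivity $s^{2}=\mathrm{id}$, together with the fact that $s(c(k)\phi)=c(-k)^{-1}s(\phi)$, gives $\rho(-k)^{-1}\rho(k)=1$, i.e., $\rho$ is \emph{even} --- not $\rho(k)\rho(-k)=1$ as you assert. The distinction is fatal for your absorption step: under a gauge transformation $\psi\mapsto\tilde\rho(k)\psi$ the discrepancy transforms as $\rho\mapsto\bigl(\tilde\rho(k)\tilde\rho(-k)\bigr)^{-1}\rho$, so the transformations of the form \eqref{psitrans} (precisely those with $\tilde\rho(k)\tilde\rho(-k)=1$) act \emph{trivially} on $\rho$; they are the residual symmetry, and ``absorbing half'' of a $\rho$ of that odd-exponent form accomplishes nothing. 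What actually makes the argument close is the evenness of $\rho$: an even series with leading coefficient $1$ is exactly of the form $\tilde\rho(k)\tilde\rho(-k)$, hence removable by a gauge choice, after which the remaining freedom is \eqref{psitrans}. This is the paper's equation $\tilde\rho(k)\tilde\rho(-k)=h(k)$. Correcting this one computation (and the harmless sign slips in the exponent of $(\psi^{*})^{\sigma}$, which in fact show that $\ell$ is odd rather than even) repairs the proof; everything else stands and follows the paper's route.
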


\begin{proof} Substitution of the equation~\eqref{psibloch} into the equation~\eqref{shrod} gives
\begin{gather}\label{eq:zeta}
{\p}_{\zb}\zeta_{s+1}=-\ell_{s+1}+u\zeta_s-{\pa}_z{\p}_{\zb}\zeta_s-\sum_{i=1}^s\ell_i\zeta_{s-i}, \qquad s=0,1,\dots .
\end{gather}
Suppose that $\zeta_j$ and $\ell_j$ with $j\leq s$ are known. The inhomogeneous first order elliptic equation~\eqref{eq:zeta} has a double periodic, i.e., satisfying \eqref{zetas}, solution iff the integral of its right-hand side vanishes, thus determining the constant ${\ell}_{s+1}$:
\[
{\ell}_{s+1} = \int_{\Sigma} {\rm d}x{\rm d}y \left( u {\zeta}_s -\sum_{i=1}^s \ell_{i} \zeta_{s-i} \right).
\]
Once ${\ell}_{s+1}$ is found, the double periodic solution ${\zeta}_{s+1}$ is defined uniquely up to an additive constant. The ambiguity in the definition of $\zeta_{s+1}$ corresponds to the transformation
$\psi\to \psi \tilde \rho(k)$, where
$\tilde \rho(k)$ is a regular series in the variable $k$ with constant coefficients.

Using the fact the both the dual ${\psi}^*$ and reflected ${\psi}^{\s}$ BA functions are solutions of \eqref{shrod} with double periodic coefficients we get the equation
\begin{gather*}
{\psi}^{\s} = h(k) {\psi}^{*},
\end{gather*}
for some function $h(k)$.
Under the transformation ${\psi}\mapsto {\tilde \rho}(k) {\psi}$ the identifying function $h(k)$ gets transformed to
\begin{gather*}
{\tilde h}(k) = {\tilde \rho} (-k)^{-1} {\tilde \rho}^{-1}(k) h(k).
 \end{gather*}
 Thus, by taking $\tilde \rho$ to solve the equation $\tilde \rho(k)\tilde \rho(-k)=h(k)$ we can transform
 $\psi$ having the form~\eqref{psibloch} to a self-dual BA function.
By construction it is unique up to the transformation~\eqref{psitrans}.
The lemma is proved.
\end{proof}

\begin{cor} \label{corbloch} The self-dual solution of the Schr\"odinger equation defined in Lemma~{\rm \ref{lm:psibloch}} satisfies the following monodromy properties
\begin{gather*}
\psi(z+\omega_\a,\zb+\bar \omega_\a,k)=w_\a(k)\psi(z,\zb,k)
\end{gather*}
with
\begin{gather*}
w_x = {\rm e}^{k+{\ell}(k)}, \qquad w_y = {\rm e}^{{\tau}k + {\bar \tau} {\ell}(k)}.
\end{gather*}
\end{cor}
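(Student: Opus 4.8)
The plan is to read the monodromy off directly from the explicit shape of $\psi$ produced in Lemma~\ref{lm:psibloch}. Write $\psi(z,\zb,k) = {\rm e}^{kz+{\ell}(k)\zb}\,\Xi(z,\zb,k)$ with $\Xi(z,\zb,k) = 1+\sum_{s\geq 1}\zeta_s(z,\zb)k^{-s}$. The whole point of that lemma is that the coefficients $\zeta_s$ are doubly periodic, i.e.\ satisfy \eqref{zetas}, so $\Xi$ is invariant under both deck transformations $T_x$, $T_y$ of \eqref{eq:deck}. In the holomorphic/antiholomorphic coordinates these act by $z\mapsto z+\omega_\a$, $\zb\mapsto\zb+\bar\omega_\a$, where $(\omega_x,\bar\omega_x)=(1,1)$ and $(\omega_y,\bar\omega_y)=(\tau,\bar\tau)$ as in \eqref{sigma}.

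First I would apply $T_\a^*$ to the two factors separately. Since $\Xi$ is doubly periodic, $T_\a^*\Xi=\Xi$. The exponential prefactor transforms by a $(z,\zb)$-independent scalar: ${\rm e}^{k(z+\omega_\a)+{\ell}(k)(\zb+\bar\omega_\a)} = {\rm e}^{k\omega_\a+{\ell}(k)\bar\omega_\a}\,{\rm e}^{kz+{\ell}(k)\zb}$. Multiplying the two pieces back together gives $T_\a^*\psi = w_\a(k)\,\psi$ with $w_\a(k)={\rm e}^{k\omega_\a+{\ell}(k)\bar\omega_\a}$; substituting the two pairs of periods yields $w_x={\rm e}^{k+{\ell}(k)}$ and $w_y={\rm e}^{{\tau}k+{\bar\tau}{\ell}(k)}$, which is the assertion.

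Two minor points remain for rigor, neither of which is a genuine obstacle. Since ${\ell}(k)=\sum_{s\geq1}{\ell}_sk^{-s}$ has no constant term by \eqref{ell}, each $w_\a(k)$ is a well-defined element of ${\rm e}^{k\omega_\a}\cdot\big(1+k^{-1}\BC[[k^{-1}]]\big)$, so the identity $T_\a^*\psi=w_\a(k)\psi$ holds termwise in the formal expansion in $k^{-1}$; and the multipliers automatically satisfy $w_xw_y=w_yw_x$, consistent with the abelianness of $\pi_1(T^2)$, because both are exponentials of commuting formal series. The substance is entirely contained in Lemma~\ref{lm:psibloch}; the corollary is simply the restatement of ``doubly periodic coefficients with the displayed exponential factor'' in terms of Floquet multipliers.
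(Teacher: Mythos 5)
Your argument is correct and is exactly the one the paper intends: the corollary is stated without proof precisely because it follows immediately from the form \eqref{psibloch} of $\psi$, with the doubly periodic series invariant under $T_\a$ and the exponential prefactor contributing the multiplier ${\rm e}^{k\omega_\a+\ell(k)\bar\omega_\a}$. Your reading of the periods from \eqref{sigma} and the remark that the identity holds termwise in the formal $k^{-1}$-expansion are both accurate; nothing is missing.
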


\begin{theo} The equations
\begin{gather}\label{NVnext}
\p_{t_n}u= {\p}_{\zb} F_{2n+1}^{(1)}
\end{gather}
with $F_{2n+1}^{(1)}=\res_\p {\cL}^{2n+1}_+$, where $\cL$ is defined by the Bloch self-dual solution of \eqref{shrod}, are commuting flows on the space of double periodic functions.

Moreover if $u$ is given by~\eqref{u} for some solution of
$O(N)$-sigma model then the constraint~\eqref{const} and hence the equation~\eqref{u} are preserved by the flows
\begin{gather*}
\p_{t_n} q= {\cL}^{2n+1}_{+} q.
\end{gather*}
\end{theo}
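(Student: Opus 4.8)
\section*{Proof proposal}

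The plan is to assemble the statement from three results already in hand: Theorem~\ref{thm:man}, which runs the NV flows on the space $\P$ of self-dual formal Baker--Akhiezer solutions; Lemma~\ref{lm:psibloch} together with Corollary~\ref{corbloch}, which attach to a doubly periodic potential a Bloch self-dual $\psi$, unique up to~\eqref{psitrans}; and the $O(N)$-symmetry theorem of the previous section, which identifies the NV flows with symmetries of the $O(N)$ model through the moment equations~\eqref{tq}. Accordingly there are three things to establish: that~\eqref{NVnext} is a well-posed flow on doubly periodic functions, that it is the projection of the formal NV flow and that the $\p_{t_n}$ commute, and that when $u$ comes from a sigma-model solution the constraint~\eqref{const} is preserved.

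For well-posedness, observe that $\cL$, defined by $\cL\psi=k\psi$, is invariant under~\eqref{psitrans} and therefore depends on $u$ alone, and that its coefficients are doubly periodic: writing $S_\a$ for the translation $(z,\zb)\mapsto(z+\omega_\a,\zb+\bar\omega_\a)$, Corollary~\ref{corbloch} gives $S_\a\psi=w_\a\psi$ with $w_\a$ constant, so $S_\a\cL S_\a^{-1}$ again satisfies $(S_\a\cL S_\a^{-1})\psi=k\psi$ and coincides with $\cL$ by uniqueness; hence $F^{(1)}_{2n+1}=\res_\p\cL^{2n+1}$ is doubly periodic, and so is the right-hand side $\p_\zb F^{(1)}_{2n+1}$ of~\eqref{NVnext}. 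Now apply the formal flow~\eqref{manpsi}, i.e.\ $\p_{t_n}\psi=-\cL^{2n+1}_-\psi$, to this Bloch $\psi$. Since $\cL^{2n+1}_-$ has doubly periodic coefficients and order $\le-1$, the same shift identity yields $S_\a\p_{t_n}\psi=w_\a\p_{t_n}\psi$, while $S_\a\p_{t_n}\psi=\p_{t_n}(w_\a\psi)=(\p_{t_n}w_\a)\psi+w_\a\p_{t_n}\psi$; hence $\p_{t_n}w_\a=0$, the multipliers and so $\ell(k)$ are conserved, and feeding this back into~\eqref{manpsi} shows the $\zeta_s$ stay doubly periodic, so $\psi$ remains Bloch, self-duality being preserved by Theorem~\ref{thm:man}. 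The evolution of $u$ induced by~\eqref{manpsi} is~\eqref{manu}, i.e.\ exactly~\eqref{NVnext}, and the commutativity of the $\p_{t_n}$ on doubly periodic functions descends from the operator identity~\eqref{comf} in the proof of Theorem~\ref{thm:man}.

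For the last assertion, let $u$ be given by~\eqref{u} for a sigma-model solution ${\bf q}$. Then $u$ is doubly periodic, and $(\p_z{\bf q},\p_z{\bf q})$ is holomorphic --- by~\eqref{shrod} and $({\bf q},\p_z{\bf q})=\frac12\p_z({\bf q},{\bf q})=0$ --- and doubly periodic, hence constant; so, unless all the $(\p_z^j{\bf q},\p_z^j{\bf q})$ vanish (the isotropic case, treated in the second part of the paper), after a harmless rescaling of the conformal coordinate we may assume ${\bf q}$ is $w_m$-harmonic with~\eqref{dq} for some finite $m$. The key point is that the Bloch self-dual $\psi$ attached to this $u$ also satisfies the moment equations~\eqref{tq}, i.e.\ agrees up to~\eqref{psitrans} with the solution of Lemma~\ref{lm:moments}: the $f^{[m]}_i$ and the $T_i$ obey the same recursion~\eqref{tdiff}, so it suffices to match them on $\zb=0$, which one does by checking inductively that the Bloch normalization of the $\chi_{2n-1}$ is the matching one. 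Granting this, the computation closing the proof of the $O(N)$-symmetry theorem applies verbatim: using $\cL^{2n+1}\psi=k^{2n+1}\psi$, the $k$-independence of the coefficients of $\cL^{2n+1}_+$ and~\eqref{tq},
\[
\big({\bf q},\cL^{2n+1}_+{\bf q}\big)=\res_\infty\big(\psi^\s,\cL^{2n+1}_+\psi\big)\frac{{\rm d}k}{k^{2m+1}}=0,
\]
the residue vanishing because, on separating off $\cL^{2n+1}_-$, one contribution is an odd differential and the other is $O(k^{-2m-2})$ since $\cL^{2n+1}_-\psi={\rm e}^{kz}O(k^{-1})$. Therefore $\p_{t_n}({\bf q},{\bf q})=2({\bf q},\cL^{2n+1}_+{\bf q})=0$, so~\eqref{const}, and with it~\eqref{u}, is preserved.

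The step I expect to be the real obstacle is the compatibility invoked above: that the canonical Bloch self-dual solution --- the one forced by well-posedness of the $u$-flow --- coincides, up to~\eqref{psitrans}, with the moment-matching solution of Lemma~\ref{lm:moments}. In effect one must propagate the double periodicity of the moments $T_i$, inherited from ${\bf q}$, through the inductive construction of the $\chi_{2n-1}$ from~\eqref{tq} into double periodicity of the wave coefficients $\zeta_s$; everything else follows formally from Theorem~\ref{thm:man}, Lemma~\ref{lm:psibloch}, Corollary~\ref{corbloch} and the $O(N)$-symmetry theorem.
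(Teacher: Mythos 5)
Your first paragraph (double periodicity of the coefficients of $\cL$, preservation of the Bloch property and of the multipliers $w_\a$, commutativity descending from \eqref{comf}) and your closing residue computation both match the paper's argument. The gap is exactly where you predict it, but your proposed repair does not work. You claim the Bloch self-dual solution of Lemma~\ref{lm:psibloch} coincides, up to the transformation \eqref{psitrans}, with the moment-matching solution of Lemma~\ref{lm:moments}, so that \eqref{tq} holds verbatim with the weight ${\rm d}k/k^{2m+1}$. This cannot be arranged: the transformation \eqref{psitrans} multiplies $\psi$ by $\rho(k)=\exp\big(\sum_s\rho_sk^{-2s+1}\big)$ and hence $\psi^\s$ by $\rho(-k)$, and since the exponents are odd one has $\rho(k)\rho(-k)=1$, so every pairing $\res_\infty\big(\psi^\s\p_z^i\psi\big)\,{\rm d}k/k^{2m+1}$ is left unchanged. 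Once the Bloch solution is fixed there is therefore no residual freedom with which to enforce $T_{2n}=f^{[m]}_{2n}$ at each induction step; in Lemma~\ref{lm:moments} that freedom was the choice of $\p_z\chi_{2n-1}$, which in the periodic case is already consumed by the periodicity conditions \eqref{zetas}. The two constructions normalize the local parameter $k$ at infinity differently (by periodicity of the $\zeta_s$ versus by the requirement that the weight be exactly $k^{-2m}$), and they differ by a reparametrization $k\mapsto k+O\big(k^{-1}\big)$, which is not of the form \eqref{psitrans}.

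The paper's resolution (Lemma~\ref{lm:permoments}) is to keep the Bloch solution and instead deform the weight: one seeks an even series $E(k)=\sum_{s\ge m}e_sk^{-2s}$ such that $T_i=\res_\infty\big(\psi^\s\p_z^i\psi\big)E(k)\,{\rm d}k/k$. The induction is then the one you describe --- both sides satisfy \eqref{tdiff}, so $\p_{\zb}g_{0,2n}=0$ and $g_{0,2n}$ is a constant --- but that constant is killed by choosing the new coefficient $e_n$, not by adjusting $\chi_{2n-1}$. Your concluding computation survives this change essentially verbatim: with a general even $E(k)=k^{-2m}\big(1+O\big(k^{-2}\big)\big)$ the term $\res_\infty\big(\psi^\s\psi\big)k^{2n+1}E(k)\,{\rm d}k/k$ still vanishes because the integrand is an odd differential, and the term involving $\cL^{2n+1}_-$ is still $O\big(k^{-2m-2}\big){\rm d}k$. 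So the architecture of your proof is right, but the one step you flag as the obstacle genuinely requires the paper's different device rather than the identification you propose.
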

\begin{proof} By the Corollary~\ref{corbloch} the coefficients of the operator $\cL$ corresponding to the self-dual BA solution defined in Lemma~\ref{lm:psibloch} are double periodic functions. Hence, the Bloch self-dual BA solution are invariant under the flows~\eqref{manpsi}, and~\eqref{NVnext} is just a well-defined restriction of the flows described by Theorem~\ref{thm:man}.

The last statement of the theorem is a direct corollary of the following lemma.
\begin{lem}\label{lm:permoments} Let $q$ be a solution of the $O(N)$ sigma model on Euclidean $T^2$ such that the equations~\eqref{ell} and \eqref{dq} hold.
Then there exists a unique formal series
\begin{gather*}
E(k)=\sum_{s=m}^\infty e_s k^{-2s},
\end{gather*}
such that equations
\begin{gather}\label{tqper}
T_i=f^{E}_i:=\res_\infty \left( {\psi}^{\s} \big( {\pa}_{z}^{i} {\psi} \big) E(k)\frac {{\rm d}k}k \right),
\end{gather}
where $\psi$ is the self-dual BA solution defined in the Lemma~{\rm \ref{lm:psibloch}},
hold.
\end{lem}

\begin{rem} Before presenting the proof of the lemma it is instructive to compare it with the Lemma \ref{lm:moments} which was used above to single out a special self-dual solution of the Schr\"odinger equation for which equations~\eqref{tq} hold. In the double periodic case that special solution is singled out in Lemma~\ref{psibloch} by the Bloch properties and in the proof of Lemma~\ref{lm:permoments} below we will demonstrate the validity of~\eqref{tqper} for {\it that} solution. In essence we are arriving from two different starting points at the same self-dual BA solution of the Schr\"odinger equation as well as the same equations~\eqref{tq} and \eqref{tqper}. In view of this, additional comments are required to clarify that, at first glance, the latter equations have different forms.

Note that the form of the formal BA function is invariant under a change of the local coordinate $k \mapsto k+O\big(k^{-1}\big)$. In the double periodic case the local coordinate is fixed by the periodicity conditions \eqref{zetas} while in the Lemma \ref{lm:moments} it is fixed by the condition that in this coordinate the series~$E(k)$ equals~$k^{-2m}$.
\end{rem}

\begin{proof}Suppose that \eqref{tqper} is satisfied for $1\leq i \leq 2n-1$ for the Bloch self-dual BA function and for a series $E(k)$ with some known $( e_{m}=1, e_{m+1},\dots , e_{n-1} )$.

If equation~\eqref{tqper} holds for all $i\leq 2n-1$, then
\begin{gather*}
g_{i,j}:=\big( {\pa}_{z}^j {\bf q} , {\pa}_{z}^i {\bf q} \big) - \operatorname{res}_{\infty} \left( \big( {\pa}_{z}^j{\psi}^{\s} \big) \big( {\pa}_{z}^i \psi \big) E(k)\frac {{\rm d}k} k \right) = 0 ,\qquad i+j\leq 2n-1,
\end{gather*}
and
\begin{gather*}
g_{n-i,n+i}:=(-1)^i g_{n,n}.
\end{gather*}
From \eqref{tdiff} with $i=2n$ (and the same equation for $f^E_{2n}$) it follows that under the induction assumption $\p_{\zb} (g_{0,2n})=0$. Hence it is a constant. Therefore we can make this constant to be equal zero, i.e., $g_{0,2n}=0$ by proper choice of the coefficient~$e_n$.

If equation $g_{0,2n}=0$ is satisfied then using the equation
\begin{gather*}
\p_{z} g_{n,n}=2g_{n,n+1}
\end{gather*}
we get that $g_{0,2n+1}=0$. The induction step is completed and the lemma is proved.
\end{proof}
The theorem is proved.
\end{proof}

\subsection{The algebraic spectral curve} \label{ss:curve}

Let $\psi$ be the formal Bloch self-dual solution of the Schr\"odinger operator with the potential $u$ defined by~\eqref{u} from a complex solution $q(z,{\zb})$ of the $O(N)$ sigma model on ${\Sigma} = T^{2}$. Our next goal is to show that $\psi$ is a common eigenfunction for a ring ${\CalA}$ of commuting differential operators in one variable, which we can take to be $z$, or $\zb$. Therefore, it is an expansion near a~marked point of the {\it non-formal} Baker--Akhiezer function on the {\it spectral curve} of ${\CalA}$.

Let ${\bf q}(z,\zb,t),\ t=(t_1,t_2,\dots)$ be a an orbit of ${\bf q}(z,\zb,0)$ under the flows of the NV hierarchy.
As we have shown above, the constraint~\eqref{const}, and thus the equation~\eqref{u} are invariant under the flows.
Taking the $t_{n}$ derivative of \eqref{shrod} and using the equation
\begin{gather}\label{udot}
\p_{t_n} u = - \big( {\pa}_z( {\cL}^{2n+1}_{+} {\bf q}), {\pa}_{\zb} {\bf q} \big) - \big( {\pa}_z {\bf q}, {\pa}_{\zb} \big( {\cL}_{+}^{2n+1} {\bf q}\big) \big)
\end{gather}
we get the equation~\eqref{var} with $L_n = {\cL}^{2n+1}_+$. The space of solutions to an elliptic equation on~$\Sigma$ is finite dimensional. Hence for all but a finite number of integers $n$ there are constants~$c_{n,m}$ such that for the operator
\begin{gather}\label{LLL}
{\tilde L}_{n} := {\cL}^{2n+1}_{+} + \sum_{i=0}^{n-1} c_{n,m} {\cL}^{2m+1}_{+},
\end{gather}
the equation
\begin{gather}\label{LLq}
{\tilde L}_n {\bf q} = 0
\end{gather}
holds. We call the finite set $I$ of indices $n$ for which there are no such constants the \emph{gap set}.

\begin{lem}\label{lm:comring}
The operators ${\tilde L}_n$ defined above commute with each other
\begin{gather*}
\big[ {\tilde L}_n , {\tilde L}_m \big] = 0 , \qquad n,m \notin I.
\end{gather*}
\end{lem}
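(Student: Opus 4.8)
The plan is to show that $[{\tilde L}_n, {\tilde L}_m]$, which is a priori a differential operator in $z$, is in fact zero by combining two facts: it annihilates ${\bf q}$ together with all of its NV-orbit, and it annihilates the Bloch self-dual BA function $\psi$. The first fact forces it to be very low order, and the second then forces it to vanish identically, by the same rigidity argument used at the end of the proof of Theorem~\ref{thm:man} (``from the very form of $\psi$ it follows that if such an operator annihilates it, then it itself must be zero'').

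First I would record that, at the level of pseudo-differential operators, $[{\cL}^{2n+1}_+, {\cL}^{2m+1}_+] = [{\cL}^{2n+1}_+, {\cL}^{2m+1}] - [{\cL}^{2n+1}_+, {\cL}^{2m+1}_-] = -[{\cL}^{2n+1}_-, {\cL}^{2m+1}] + \text{(lower order)}$, using $[{\cL}^{2n+1}, {\cL}^{2m+1}]=0$; one then checks, exactly as in the standard Lax-pair bookkeeping, that $[{\cL}^{2n+1}_+, {\cL}^{2m+1}_+]$ is a \emph{differential} operator in $z$ of bounded order (in fact of order at most $2n+2m$, but the precise bound is not what matters). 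Feeding in the definitions \eqref{LLL}, the same is true for $[{\tilde L}_n, {\tilde L}_m]$: it is a differential operator $\D$ in $z$ with doubly periodic coefficients, since by Corollary~\ref{corbloch} the coefficients of $\cL$, hence of all the ${\cL}^{2k+1}_+$ and of $\D$, are doubly periodic.

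Next I would apply $\D$ to $\psi$. Using $\partial_{t_k}\psi = ({\cL}^{2k+1}_+ - k^{2k+1})\psi$ from \eqref{manpsi}, the ${\cL}^{2k+1}_+$ act on $\psi$ as the flow generators $\partial_{t_k}$ up to the scalars $k^{2k+1}$; the scalar parts commute among themselves and with everything, so $[{\tilde L}_n,{\tilde L}_m]\psi$ equals $[\partial_{\hat t_n}, \partial_{\hat t_m}]\psi$ where $\hat t_k$ are the suitable linear combinations of times dictated by \eqref{LLL}. By the commutativity of the NV flows (established in Theorem~\ref{thm:man}, equation \eqref{comf}), this is zero. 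Alternatively, and more self-containedly, one differentiates \eqref{LLq} along the NV flows: since $\partial_{t_k}{\bf q} = {\cL}^{2k+1}_+{\bf q}$ and the whole construction (the self-dual $\psi$, hence $\cL$, hence the ${\tilde L}_n$) is built functorially from ${\bf q}$ and is carried along the orbit, one gets ${\tilde L}_n\big(\partial_{t_k}{\bf q}\big) + \big(\partial_{t_k}{\tilde L}_n\big){\bf q} = 0$, and combining the analogous identities for $n$ and $m$ shows $[{\tilde L}_n, {\tilde L}_m]{\bf q}=0$ and, propagating along the orbit, $[{\tilde L}_n,{\tilde L}_m]$ annihilates the full linear span of $\{\partial_{\bf t}^{\bf j}{\bf q}\}$.

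Finally I would argue that a differential operator in $z$ annihilating $\psi$ must vanish. Here $\psi$ is the non-formal Bloch function — or at the formal level the wave function $e^{kz+\ell(k)\zb}(1+\sum\zeta_s k^{-s})$ of Lemma~\ref{lm:psibloch} — and a nonzero differential operator of order $d$ with $(z,\zb)$-dependent coefficients applied to it produces a series $e^{kz+\ell(k)\zb}(c_d(z,\zb)k^d + O(k^{d-1}))$ with $c_d$ the leading coefficient, which cannot vanish identically in $k$ unless the operator is zero — this is precisely the argument already invoked after \eqref{comf}. Since $\psi$ is a common eigenfunction of the ${\tilde L}_n$ (eigenvalue a polynomial in the spectral parameter), $[{\tilde L}_n,{\tilde L}_m]\psi=0$ forces $[{\tilde L}_n,{\tilde L}_m]=0$, completing the proof.

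\textbf{Main obstacle.} The only genuinely non-routine point is making precise that $[{\tilde L}_n,{\tilde L}_m]$ annihilates $\psi$: one must be careful that the constants $c_{n,m}$ in \eqref{LLL} are $(z,\zb)$-independent (which they are, being determined by linear relations among finitely many elements of a fixed finite-dimensional kernel), so that the ${\tilde L}_n$ really are polynomials in $\cL$ with scalar coefficients and hence act on $\psi$ through the flow identities \eqref{manpsi}; once this is granted, commutativity of the NV flows does all the work and the rigidity of $\psi$ finishes it. The bookkeeping showing $[{\tilde L}_n,{\tilde L}_m]$ is a differential (not merely pseudo-differential) operator is standard and I would not belabor it.
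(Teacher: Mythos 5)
There is a genuine gap at the central step, namely your claim that $[{\tilde L}_n,{\tilde L}_m]\psi=0$ follows from the commutativity of the NV flows. It does not. Writing ${\cL}^{2m+1}_+\psi=\big({\pa}_{t_m}+k^{2m+1}\big)\psi$ and then applying ${\cL}^{2n+1}_+$, you cannot move ${\cL}^{2n+1}_+$ past ${\pa}_{t_m}$ for free: the coefficients of ${\cL}^{2n+1}_+$ depend on $t_m$, so ${\cL}^{2n+1}_+{\pa}_{t_m}\psi={\pa}_{t_m}\big({\cL}^{2n+1}_+\psi\big)-\big({\pa}_{t_m}{\cL}^{2n+1}_+\big)\psi$. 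Carrying this through gives
\begin{gather*}
\big[{\cL}^{2n+1}_+,{\cL}^{2m+1}_+\big]\psi=\big({\pa}_{t_n}{\cL}^{2m+1}_+-{\pa}_{t_m}{\cL}^{2n+1}_+\big)\psi ,
\end{gather*}
which is exactly the zero-curvature identity encoded in \eqref{comf}, and its right-hand side is in general nonzero (already for KP, $\big[L^2_+,L^3_+\big]\neq 0$; it equals the flow of one Lax operator along the other time). So commutativity of the flows gives $[{\pa}_{\hat t_n},{\pa}_{\hat t_m}]\psi=0$, but that is not $[{\tilde L}_n,{\tilde L}_m]\psi$. Your fallback argument does not close the gap either: $[{\tilde L}_n,{\tilde L}_m]{\bf q}=0$ is immediate from ${\tilde L}_n{\bf q}={\tilde L}_m{\bf q}=0$, but a differential operator annihilating the finite-dimensional span of the orbit of ${\bf q}$ need not vanish. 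In the last paragraph you invoke that ``$\psi$ is a common eigenfunction of the ${\tilde L}_n$'' --- but that is precisely the statement that has to be proved, and it is the heart of the matter.

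The paper's proof supplies exactly this missing link. From \eqref{udot} and \eqref{LLq} one gets ${\pa}_{\tilde t_n}u=0$; by \eqref{manu} the function ${\widetilde F}_n=F^{(1)}_{2n+1}+\sum_m c_{n,m}F^{(1)}_{2m+1}$ is then ${\zb}$-independent, hence constant, and by \eqref{JQ1} it is a $z$-derivative of a double-periodic function, hence zero. Feeding ${\widetilde F}_n=0$ into \eqref{sb3} gives $H\big({\tilde L}_n\psi\big)=0$, so ${\tilde L}_n\psi$ is again a Bloch solution of the Schr\"odinger equation, self-dual by \eqref{adjL}, and the uniqueness statement of Lemma~\ref{lm:psibloch} forces ${\tilde L}_n\psi=a_n(k)\psi$. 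Only at that point does your (correct) final rigidity argument --- a nonzero ordinary differential operator in $z$ cannot annihilate a function of the form \eqref{psibloch} --- finish the proof. I would also note two smaller points: since each ${\cL}^{2k+1}_+$ is by definition a differential operator, so is the commutator, and no ``Lax bookkeeping'' is needed there; and the eigenvalue $a_n(k)$ is a Laurent-type series in $k^{-1}$ plus $k^{2n+1}$, not a polynomial.
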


\begin{proof} Consider the linear combination of the vector fields ${\pa}_{t_{m}}$
\begin{gather*}
\p_{ {\tilde t}_n} = {\pa}_{t_{n}} + \sum_{m=0}^{n-1} c_{n,m} {\pa}_{t_{m}}
\end{gather*}
with $c_{n,m}$ as in equation~\eqref{LLL}. From the equations~(\ref{udot}), \eqref{LLq} it follows that $\p_{ \tilde t_n} u=0$. Then~\eqref{manu} implies
$\p_{\zb} {\widetilde F}_n=0$, where
\begin{gather*}
\widetilde F_n:=F_{2n+1}^{(1)}+\sum_{m=0}^{n-1} c_{n,m} F_{2m+1}^{(1)}
\end{gather*}
with $F_{2n+1}^{(1)}$ given by the equation~\eqref{res1}. Therefore $\widetilde F_n$ is a constant. From the equation~\eqref{JQ1} it follows that $\widetilde F_n$ is $z$-derivative of some double periodic function. The latter implies $\widetilde F_n=0$. Then from equation~\eqref{sb3} it follows that
\begin{gather}\label{Leigen}
H(L_n\psi)=0.
\end{gather}
The coefficients of the operator $L_n$ are double periodic functions. Therefore, from~\eqref{Leigen} it follows that $k^{-2n+1}L_n\psi$
is a formal Bloch solution of the Schr\"odinger equation. From~\eqref{adjL} it follows that it is a self-dual Bloch solution. Then the uniqueness up to multiplication by a~constant series of the latter implies the equation
\begin{gather}\label{eigenagain}
\tilde L_n \psi = a_{n}(k)\psi,\qquad a_n(k)=k^{2n+1}+\sum_{s=-n}^\infty a_{n,s} k^{-2s-1},
\end{gather}
where $a_{n,s} \in {\BC}$ are some constants.

The equation~\eqref{eigenagain} implies that $\big[\tilde L_n,\tilde L_m\big]\psi=0$. From the form of $\psi$ it easily follows that~$\psi$ is not in a kernel of any nonzero ordinary differential operator in the variable~$z$. The lemma is proved.
\end{proof}

Recall the fundamental fact of the theory of commuting linear
ordinary differential operators \cite{ch1,ch2,kr1,kr2,mum}:
\begin{lem}[Burchnall--Chaundy]\label{lem:BCh}
Let $L_n$ and $L_m$ be commuting ordinary linear differential operators of orders $n$ and $m$, respectively. Then there exists a polynomial $R$ in two variables such that the equation
\begin{gather}\label{BC}
R(L_n,L_m)=0
\end{gather}
holds.
\end{lem}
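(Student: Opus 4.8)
This is the classical Burchnall--Chaundy spectral argument, and I would run it in the form that produces the relation in the stated order. First normalize: after dividing by its leading coefficient we may assume $L_m$ is monic (in the applications of this paper $L_m=\cL^{2m+1}_+=\p_z^{\,\mathrm{ord}}+\cdots$ is already monic), and we work on any interval (resp.\ disc) in the $z$-line on which the coefficients of $L_n,L_m$ are defined, with $\zb$ fixed as a parameter. For every $\lambda\in\bbC$ the ordinary differential equation $L_m\psi=\lambda\psi$ has a solution space $V_\lambda$ of dimension $m$, and since $[L_n,L_m]=0$ the operator $L_n$ preserves $V_\lambda$.

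\textbf{Key steps.} Next I would introduce the fundamental system $\psi_0(z,\lambda),\dots,\psi_{m-1}(z,\lambda)$ of $V_\lambda$ normalized by $\p_z^{\,i}\psi_j(z_0,\lambda)=\delta_{ij}$ at a fixed base point $z_0$, and write $L_n\psi_j=\sum_i C_{ij}(\lambda)\psi_i$. Reducing all $z$-derivatives of order $\geq m$ appearing in $L_n\psi_j$ by means of $L_m\psi_j=\lambda\psi_j$ and then evaluating at $z_0$ exhibits each entry $C_{ij}(\lambda)$ as a polynomial in $\lambda$ of bounded degree (equivalently: $C_{ij}$ is entire in $\lambda$ of polynomial growth by a Gronwall bound, hence a polynomial). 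Setting
\[
R(x,y):=\det\!\big(x\,I_m - C(y)\big),
\]
we get a polynomial in two variables, monic of degree $m$ in $x$, so in particular $R\not\equiv 0$. Now I claim $R(L_n,L_m)=0$. Applying this operator to $\psi_j(\cdot,\lambda)$ and using $L_m\psi_j=\lambda\psi_j$ together with $[L_n,L_m]=0$ gives $R(L_n,L_m)\psi_j=R(L_n,\lambda)\psi_j$; since $L_n$ acts on $V_\lambda$ in the basis $(\psi_i)$ by the matrix $C(\lambda)$, the Cayley--Hamilton theorem yields $R(L_n,\lambda)\big|_{V_\lambda}=R\big(C(\lambda),\lambda\big)=0$. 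Hence the ordinary differential operator $R(L_n,L_m)$ (in $z$) annihilates $\psi_j(\cdot,\lambda)$ for every $j$ and every $\lambda\in\bbC$. Finally, eigenspaces $V_\lambda$ of the linear operator $L_m$ for distinct $\lambda$ are linearly independent, so $\bigoplus_{\lambda}V_\lambda$ is infinite-dimensional; a nonzero differential operator of finite order has a finite-dimensional kernel, so $R(L_n,L_m)=0$, which is the asserted relation $R(L_n,L_m)=0$ with $R$ a nonzero polynomial. (The construction a priori depends on the auxiliary choices of $z_0$ and $\zb$, but any valid $R$ proves the statement.)

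\textbf{Main obstacle.} Everything except one point is formal. The single place that needs genuine care is the polynomial dependence of the connection matrix $C(\lambda)$ on $\lambda$, with controlled degree: this is what makes $R$ a \emph{polynomial} rather than merely an analytic function, and it is handled either by the explicit reduction of $L_n\psi_j$ modulo $L_m-\lambda$ before evaluating at $z_0$, or by combining a Gronwall-type growth estimate for the fundamental solutions with Liouville's theorem. The remaining ingredients (the $m$-dimensionality of $V_\lambda$, invariance of $V_\lambda$ under $L_n$, Cayley--Hamilton, and the finite-dimensionality of the kernel of a nonzero ODO) are standard.
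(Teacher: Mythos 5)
The paper does not prove this lemma at all: it is recalled as ``the fundamental fact of the theory of commuting linear ordinary differential operators'' with references to Burchnall--Chaundy, Krichever and Mumford, so there is no in-paper argument to compare against. Your proof is a correct and complete rendition of the classical argument from those references (finite-dimensional eigenspaces $V_\lambda$ of $L_m$ preserved by $L_n$, polynomiality in $\lambda$ of the matrix $C(\lambda)$ via reduction modulo $L_m-\lambda$, Cayley--Hamilton, and the finite-dimensionality of the kernel of a nonzero ordinary differential operator), and you correctly isolate the one non-formal step, namely the polynomial dependence of $C(\lambda)$ on $\lambda$.
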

Recall that the affine curve defined by \eqref{BC} is compactified by one \emph{smooth} point $P_{+}$. The corresponding algebraic curve $\G$ is called the {\it spectral} curve. The maximal commutative ring~${\A}_z$ of ordinary differential operators in the $z$-variable containing the operators $L_n$ and $L_m$ is isomorphic to the ring $A(\G,P_+)$ of meromorphic functions
on $\G$ having the only pole at $P_{+}$.

As shown in \cite{kr1,kr2} for a generic pair of commuting operators of {\it co-prime} orders the spectral curve is smooth and the common eigenfunction $\psi$ of the commuting operators is the Baker--Akhiezer function:
\begin{enumerate}\itemsep=0pt
\item[$1^0$.] {\it As a function of $p\in \G$ it is meromorphic on $\G\setminus P_+$ with $z$-independent divisor $D$ of poles of degree equal to the genus of $\G$.}
\item[$2^0$.] {\it In the neighborhood of $P_+$ it has the form \eqref{psi}.}
\end{enumerate}

\emph{A priori}, there are two spectral curves in the problem under consideration: the one, which we have just discussed, and the other, with a smooth marked point $P_-$, corresponding to the commuting differential operators in the variable~${\zb}$. In fact, these spectral curves coincide. Indeed, the operators $\bar L_n$ satisfy the equations
$\big[\bar L_n,H\big]=\bar B_n H$ where $\bar B_n$ is a differential operator in the variable $\zb$. Then from the uniqueness of the Bloch BA solution up to a~transformation of the form~\eqref{psitrans} the equation
\begin{gather}\label{barLpsi}
\bar L_n\psi=\bar a_n (k) \psi, \qquad \bar
a_n(k)=\sum_{s=0}^\infty \bar a_{n,s} k^{-2s-1}
\end{gather}
follows.
The series $\bar a_n(k)$ in \eqref{barLpsi} is an expansion in the neighborhood of $P_+$ of the function $\bar a_n\in A(\G,P_-)$ having pole of order~$n$ at the second marked point~$P_-$.

\begin{dfn} A Schr\"odinger operator is called algebraic-geometric (finite-gap) if it is stationary for all but a finite number of the NV hierarchy flows.
\end{dfn}

\begin{theo} The spectral curve $\G$ corresponding to an algebraically integrable operator $H$ is a~curve with involution $\s$ fixing the smooth marked points $P_\pm$. In the generic case when $\G$ is smooth: the points $P_{\pm}$ are the only fixed points of $\s$; the common eigenfunction $\psi(z,{\zb} ;p), p\in \G,$ of the operators $L_n$, $\bar L_m$ satisfying the equation $H\psi=0$ is the {\it two-point} BA function on an algebraic spectral curve $\G$, i.e.,
\begin{enumerate}\itemsep=0pt
\item[$1^0$.] In the neighborhoods of $P_{\pm}$ it has an essential singularity of the form
\begin{gather}
\begin{split}
& {\psi}_{+} ( z,{\zb}; p ) = {\rm e}^{ k_{+} z } \left(\sum_{n=0}^{\infty}\xi_{n}^{+}(z,\zb)k_{+}^{-n}\right) , \qquad p \to P_{+},\\
& {\psi}_{-} ( z,{\zb}; p ) = {\rm e}^{ k_{-} {\zb} } \left(\sum_{n=0}^{\infty}\xi_{n}^{-}(z,\zb)k_{-}^{-n}\right) , \qquad p \to P_{-}
\end{split}\label{eq:psik1}
\end{gather}
with
\begin{gather*}
\xi_0^\pm=1.
\end{gather*}
\item[$2^0$.] Outside marked points $P_\pm$ it is meromorphic with $(z,\zb)$-independent divisor of poles $D$ satisfying the constraint
\begin{gather}\label{dadmis}
D+D^\s=\K+P_++P_-,
\end{gather}
where $\K$ is the canonical class, i.e., the equivalence class of the zero divisor of a holomorphic differential on~$\G$.
\end{enumerate}
\end{theo}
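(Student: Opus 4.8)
The plan is to start from the identification, already established above, of $\psi$ as the common eigenfunction of the maximal commutative ring $\A_z\simeq A(\G,P_+)$ of ordinary differential operators in $z$ (generated by the $\tilde L_n$, $n\notin I$) and of the analogous ring $\A_{\zb}\simeq A(\G,P_-)$ in $\zb$, both realised on one and the same curve $\G$, and with $H\psi=0$. First I would produce the involution. By \eqref{eigenagain} the eigenvalue $a_n$ of $\tilde L_n$ is an \emph{odd} series in $k$, $a_n(-k)=-a_n(k)$, and likewise $\bar a_n(-k)=-\bar a_n(k)$ by \eqref{barLpsi}. Since $\cL=\Phi\p_z\Phi^{-1}$ and $\psi=\Phi{\rm e}^{kz}$, and the coefficients of $\cL$, $\tilde L_n$, $\bar L_n$ are $k$-independent, one gets $\cL\psi^\s=-k\psi^\s$, hence $\tilde L_n\psi^\s=-a_n(k)\psi^\s$ and $\bar L_n\psi^\s=-\bar a_n(k)\psi^\s$; by self-duality $\psi^\s=\psi^*$ is meromorphic on $\G\setminus\{P_\pm\}$. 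Thus, as $p$ ranges over $\G$, the values $(-a_n(p))_n$ again form the joint spectrum of $\{\tilde L_n\}$, i.e.\ $a_n\mapsto-a_n$, $\bar a_n\mapsto-\bar a_n$ comes from an involution $\s\colon\G\to\G$ with $a_n\circ\s=-a_n$, $\bar a_n\circ\s=-\bar a_n$; it fixes $P_+$ and $P_-$ (where the $a_n$, resp.\ the $\bar a_n$, have their poles), satisfies $\s^2={\rm id}$ because $(\psi^\s)^\s=\psi$, and acts near $P_\pm$, to leading order, as $k_\pm\mapsto-k_\pm$. This is the statement that the involution $s\colon\psi\mapsto(\psi^*)^\s$ of ${\bf P}$ from Section~\ref{s:NV} descends to the spectral data; note it needs no smoothness of $\G$.

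Next, the analytic description of $\psi$ in the generic smooth case. By the standard dictionary between commuting ordinary differential operators and Baker--Akhiezer functions \cite{kr1,kr2,mum}, a joint eigenfunction of $\A_z\simeq A(\G,P_+)$ with eigenvalues $a_n(k)$ extends to a function on $\G$, meromorphic on $\G\setminus\{P_+\}$ with a $z$-independent pole divisor and an essential singularity ${\rm e}^{k_+z}\big(1+O(k_+^{-1})\big)$ at $P_+$; this matches \eqref{psibloch}, since there the factor ${\rm e}^{\ell(k)\zb}$ is regular at $P_+$. Applying the same to $\A_{\zb}\simeq A(\G,P_-)$ and to \eqref{barLpsi}, $\psi$ is meromorphic on $\G\setminus\{P_-\}$ with a $\zb$-independent pole divisor and an essential singularity ${\rm e}^{k_-\zb}\big(1+O(k_-^{-1})\big)$ at $P_-$. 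Intersecting the two descriptions produces a single $(z,\zb)$-independent pole divisor $D$ on $\G\setminus\{P_\pm\}$ and the expansions \eqref{eq:psik1}, the normalisations $\xi_0^\pm=1$ being arranged by the residual freedom \eqref{psitrans}; uniqueness of the two-point Baker--Akhiezer function attached to $(\G,P_\pm,k_\pm,D)$ then identifies $\psi$ with it. This gives $1^0$.

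For $2^0$: the pole divisor of $\psi^\s(z,\zb;p)=\psi(z,\zb;\s(p))$ is $\s(D)=D^\s$, while by self-duality $\psi^\s=\psi^*$, the dual Baker--Akhiezer function of $\psi$. The classical duality relation for a (two-point) Baker--Akhiezer function — obtained by applying the residue theorem to the defining relations \eqref{psidual} together with Riemann--Roch — says that the pole divisor $D^*$ of $\psi^*$ obeys $D+D^*\sim\K+P_++P_-$. Equating $D^*=D^\s$ gives $D+D^\s=\K+P_++P_-$ (and in particular $\deg D=g$). For the fixed points: a point $Q$ with $\s(Q)=Q$ has $a_n(Q)=a_n(\s(Q))=-a_n(Q)$, hence $a_n(Q)=0$ for all $n$. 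Moreover $a_n(P_-)=0$ for all $n$: indeed $\tilde L_n=\cL^{2n+1}_++\sum_{m}c_{n,m}\cL^{2m+1}_+$ has no zeroth-order term, so $\tilde L_n\cdot 1=0$, and writing $\psi={\rm e}^{k_-\zb}\hat\psi$ near $P_-$ with $\hat\psi(P_-)=1$, the relation $\tilde L_n\hat\psi=a_n(k)\hat\psi$ at $P_-$ reads $0=a_n(P_-)$. Since $\G$ is smooth (and generically the $a_n$ generate $A(\G,P_+)$), the $a_n$ give a closed embedding of the affine curve $\G\setminus\{P_+\}$, so their common zero locus is the single point $P_-$; therefore $\mathrm{Fix}(\s)=\{P_+,P_-\}$, and as $\s$ acts near $P_\pm$ as $k_\pm\mapsto-k_\pm$, these are non-degenerate fixed points. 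This completes $2^0$ and the theorem.

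The part I expect to be the main obstacle is the bookkeeping in $D+D^\s=\K+P_++P_-$: one must track with which multiplicities $P_\pm$ enter the duality relation for a genuinely \emph{two-point} Baker--Akhiezer function subject to the self-duality constraint (a naive count would suggest $2P_++2P_-$), and this is exactly where the precise value $\deg D=g$ is pinned down. A secondary delicate point is the reduction $\mathrm{Fix}(\s)=\{P_+,P_-\}$: it relies on the $a_n$ generating the Weierstrass semigroup at $P_+$ (equivalently, on the orders $\{2n+1:n\notin I\}$ generating it), which is where the genericity/coprimality hypothesis really enters.
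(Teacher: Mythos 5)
Your construction of the involution from the oddness of the eigenvalue series in \eqref{eigenagain} and \eqref{barLpsi}, and your derivation of $1^0$ from the Burchnall--Chaundy/Krichever dictionary applied to the two rings $\A_z\simeq A(\G,P_+)$ and $\A_{\zb}\simeq A(\G,P_-)$, match what the paper does. Where you genuinely diverge is the proof of \eqref{dadmis}: you deduce it from the abstract duality theory of two-point Baker--Akhiezer functions (the geometric dual attached to the divisor $D^\dagger$ with $D+D^\dagger=\K+P_++P_-$ satisfies the residue relations \eqref{dualdef} by the residue theorem, hence coincides with the formal dual by uniqueness of the latter, so $D^*=D^\dagger$, and self-duality gives $D^*=D^\s$), whereas the paper constructs the relevant differential explicitly: it introduces the current ${\beta}={\psi}^{\s}{\rm d}_{\Sigma}({\rm d}{\psi})-({\rm d}_{\Sigma}{\psi}^{\s}){\rm d}{\psi}$, integrates its conservation law over a fundamental domain to obtain the relation \eqref{Differential0} between ${\rm d}p_x$, ${\rm d}p_y$ and the averaged Wronskian currents, and exhibits ${\rm d}\Omega=-2{\ii}\tau_2\,{\rm d}p_x/\langle \tau_1 j_x-j_y\rangle_x$ as a differential holomorphic off $P_\pm$ with simple poles of residues $\mp1$ there, whose zeros are precisely the poles of $\psi$ and $\psi^\s$ (the argument of Lemma~2.3 of \cite{kr87}). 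Your route is shorter and purely algebro-geometric --- it never uses double-periodicity --- but it takes as input both $\deg D=g$ (imported from the generic commuting-operator theory of \cite{kr1,kr2}) and the uniqueness of the formal dual defined by \eqref{dualdef}; the paper's route produces ${\rm d}\Omega$, and with it $\deg D=g$, as output, and degenerates gracefully to singular spectral curves (the remark on the dualizing sheaf), which is needed later in the paper.

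The one step that does not hold up is your argument that $P_\pm$ are the only fixed points. You correctly observe that a fixed point $Q$ forces $a_n(Q)=0$ for all $n$ and that $a_n(P_-)=0$, but you then conclude $\operatorname{Fix}(\s)=\{P_+,P_-\}$ because ``the $a_n$ give a closed embedding of $\G\setminus\{P_+\}$''. Every $a_n$ is odd under $\s$, so every $a_n$ automatically vanishes at \emph{every} fixed point of $\s$ other than $P_+$; the map $p\mapsto(a_n(p))_n$ therefore collapses all such fixed points to the origin, and it is a closed embedding precisely when $P_-$ is the only one. Your justification is thus equivalent to the conclusion rather than a proof of it; by Riemann--Hurwitz the number of fixed points is an invariant of the pair $(\G,\s)$ that the odd generators alone cannot bound from above. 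The paper does not prove this point either: it declares it part of the genericity hypothesis, as a corollary of the Novikov--Veselov reconstruction \cite{nv1,nv2} of potentials from a smooth curve with involution having exactly two fixed points, and you should do the same rather than claim a direct argument.
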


The involution $\s$ of $\G$ is just an avatar of the self-duality of the formal Bloch solution of the Schr\"odinger equation or equivalently of the equation~\eqref{adjL}. The statement that in a generic case $\G$ is smooth and $P_\pm$ are the only fixed points of~$\s$ is corollary of the NV reconstruction of the algebraically integrable potentials \cite{nv1,nv2} from a smooth algebraic curve with involution having two fixed point and a divisor~$D$ satisfying~\eqref{dadmis}.

The proof of the constraint \eqref{dadmis} follows the line of arguments
in the proof of Lemma~2.3 in \cite{kr87} (see more details in \cite{KN}).
 The $\Gamma$-derivative $d{\psi}$ is
a solution of the same Schr\"odinger equation \eqref{shrod}. Define the new current
\[
{\beta} = {\psi}^{\s} {\rm d}_{\Sigma} ( {\rm d}{\psi} ) - \big({\rm d}_{\Sigma} {\psi}^{\s} \big) {\rm d}{\psi} ,
\]
a $2$-form on ${\Sigma} \times \Gamma$, with one leg along~$\Sigma$, one leg along~$\Gamma$. From~\eqref{shrod} we derive
\begin{gather}
{\rm d}_{\Sigma} \star \beta = 0 .\label{eq:betacons}
\end{gather}
Since $\psi$ is a Bloch function, its $\Gamma$-differential has the following monodromy properties
(here $q \in \Gamma$ is a point on the Fermi-curve, not to be confused with the vector ${\bf q} \in {\BC}^{N}$):
\begin{gather*}
{\rm d}{\psi} ( z+ {\omega}_{\a}, {\zb}+{\bar\omega}_{\a}, q ) = w_{\a}(q) ( {\rm d}\psi(z,\zb,q)+{\rm d}p_{\a}(q) \psi(z,\zb,q) ) ,
\end{gather*}
where ${\a} = x, y$,
\begin{gather*}
{\rm d} p_{\a}:={\rm d} \log w_{\a} ,
\end{gather*}
or, in terms of the currents $\beta$ and $j$:
\begin{gather}
T_{\alpha}^{*} ( \star {\beta} ) - \star {\beta} = {\rm d}p_{\alpha} \wedge \star j ,
\label{eq:deckbeta}
\end{gather}
where $T_{\alpha}$, $\alpha = x, y$, are the deck transformations~\eqref{eq:deck}.
Now, integrate \eqref{eq:betacons} over a fundamental domain in ${\tilde\Sigma}$ for ${\BZ}^{2}$, use Stokes theorem to write it as an integral over the boundary, and use~\eqref{eq:deckbeta}, to write
\begin{gather}\label{eq:abcycles}
{\rm d} p_{y} \oint_{A} \star j - {\rm d} p_{x} \oint_{B} \star j = 0.
\end{gather}
In components \eqref{eq:abcycles} reads
\[
{\rm d}p_{y} \langle j_{z} - j_{\zb} \rangle_{x} = {\rm d}p_{y} \langle {\tau} j_{z} - {\bar\tau} j_{\zb} \rangle_{y} ,
\]
or, equivalently
\begin{gather}\label{Differential0}
 ( {\tau}_{1} \langle j_{x} \rangle_{x} - \langle j_{y} \rangle_{x} ) {\rm d}p_{y} =
 ( {\tau}{\bar\tau} \langle j_{x} \rangle_{y} - {\tau}_{1} \langle j_{y} \rangle_{y} ) {\rm d}p_{x},
\end{gather}
where we denote by $\langle \dots \rangle_{\a}$ the averages
\[
\langle {\CalO} \rangle_{x} (y) = \int_{0}^{1} {\rm d}x {\CalO}(x,y) , \qquad \langle {\CalO} \rangle_{y} (x) = \int_{0}^{1} {\rm d}y {\CalO}(x,y).
\]
A priori, in the equation~\eqref{Differential0} the averages are taken at the specific values of $y$ and $x$, respectively, corresponding to the boundaries of the chosen fundamental region. However, the current conservation~\eqref{eq:jconservation} implies the contour integral $\oint_{\gamma} \star j$ does not change
under the homotopy of $\gamma$. Thus, the specific linear combinations of $\langle j_{\beta} \rangle_{\alpha}$ with ${\alpha}, {\beta} = x, y$ we see in~\eqref{Differential0} are independent of~$x$,~$y$.

The equation~\eqref{Differential0} implies that the zeroes of the meromorphic function $\langle {\tau}{\bar\tau} j_{x} - {\tau}_{1} j_{y} \rangle_{y}$ are the zeros of the differential ${\rm d}p_y$, while the zeros of
the meromorphic function ${\langle {\tau}_{1} j_{x} - j_{y} \rangle_x}$ are the zeros of the differential ${\rm d}p_x$. Hence, the differential
\begin{gather*}
{\rm d}\Omega:=\frac{-2{\ii} \tau_2 {\rm d}p_x}{\langle {\tau}_{1} j_{x} - j_{y} \rangle_x}=\frac{-2{\ii}\tau_2 {\rm d}p_y}{\langle {\tau}{\bar\tau} j_{x} - {\tau}_{1} j_{y} \rangle_y}
\end{gather*}
is holomorphic on $\G$ away of the marked points $P_\pm$, where it has simple poles with residues~$\mp 1$. Its zeros are the poles of the functions $\psi$ and $\psi^\s$. That completes the proof of~\eqref{dadmis}.
\begin{rem} We call the divisors $D$ satisfying~\eqref{dadmis} {\it admissible}. They are parameterized by points of the Prym $\P(\G)$ variety which is the subvariety of the Jacobian~$J(\G)$ that is odd with respect to involution
of~$J(\G)$ induced by the involution~$\s$ of the curve.
\end{rem}

\begin{rem} For singular spectral curves the differential ${\rm d}\Omega$ is a meromorphic section of the dualizing sheaf with simple poles at $P_\pm$, i.e.,
\begin{gather} {\rm d}\Omega\in H^{0} ( {\G} , \omega_{\G}+P_{+}+ P_{-} ). \label{eq:regdo}
\end{gather}
In more practical terms~\eqref{eq:regdo} means that the preimage of ${\rm d}\Omega$ on the normalization of a singular curve may have simple poles at the preimages of nodes with opposite residues.
\end{rem}

\subsection{The periodicity constraint} \label{ss:period}

The reconstruction of an algebraically integrable potential $u(z,\zb)$ from a smooth algebraic curve~$\G$ with involution having two fixed points~$P_\pm$ and an admissible divisor~$D$ gives in general~$u$ that is a quasi-periodic function
of its arguments. The curves corresponding to double periodic functions are
singled out as follows.

It follows from the non-degeneracy of the imaginary part of the matrix of $b$-periods of normalized holomorphic differentials that for a curve ${\G}$ with a fixed first jet $\big[k_\pm^{-1}\big]_1$ of local coordinates at the two marked points $P_{\pm}$ there exist unique meromorphic differentials ${\rm d}p_{\a}$, with ${\a}=x,y$ having a pole of order~$2$ at~$P_{+}$ of the form
\begin{gather*}
{\rm d}p_{x} = {\rm d}k_{+} \big( 1+O\big(k_{+}^{-2}\big) \big) , \qquad {\rm d}p_{y} = {\tau} {\rm d}k_{+} \big( 1+O\big(k_{+}^{-2}\big) \big) ,
\end{gather*}
and a pole of order $2$ at $P_{-}$ of the form
\begin{gather*}
{\rm d}p_{x} = {\rm d}k_{-} \big( 1+O\big(k_{-}^{-2}\big)\big) , \qquad {\rm d}p_{y} = {\bar\tau} {\rm d}k_{-} \big( 1+O\big(k_{-}^{-2}\big) \big) ,
\end{gather*}
holomorphic away from $P_\pm$, and such that their periods are \emph{imaginary}, i.e., $\operatorname{Re} \big( {\oint}_{c} {\rm d}p_\a\big)=0$ for any $1$-cycle $c$ on~$\G$.
The locus of spectral curves corresponding to the double periodic potentials is singled out by the constraint that the periods are integer multiple of $2\pi \ii$,
\begin{gather}\label{imp2}
n_{\a,c}:= \frac{1}{2\pi\ii} \oint_c {\rm d}p_{\a} \in \mathbb Z,\qquad \forall c\in H_1(\G, \mathbb Z).
\end{gather}
Indeed, if equations \eqref{imp2} hold then the functions $w_\a(p)=\int^p {\rm d}p_\a$ are single valued on $\G$. Then the uniqueness of the Baker--Akhiezer function implies
\begin{gather*}
{\psi} ( z+\omega_{\a},\zb + \bar \omega_{\a}; p t) = w_{\a} (p) {\psi} ( z,\zb; p ) , \qquad p\in \G,
\end{gather*}
since both sides of the equation have the same analytical properties on~$\G$.

Choosing a basis of cycles on $\G$ one can extend it to a~basis of cycle on the curves in a~neighborhood of~$G$. Then for fixed set of $n_{\a,c}$ equations~\eqref{imp2} identify the spectral curves of periodic potentials with the level set of~$4g$ real analytic functions on the moduli space of curves with fixed first jets of local coordinates near two marked points.

A smooth genus~$g$ algebraic curve with involution having $2n+2$ fixed points is uniquely defined by a factor-curve $\G_0=\G/\s$ and a choice of $2n+2$ points of it. Hence the space of such curves is of dimension $3g_0+2n-1$, where $g_0$ is the genus of $\G_0$. The Riemann--Hurwitz formula implies $g=2g_0+n$. If the first jets of local coordinates near two fixed points of involution are odd, i.e., $\s^* {\rm d}p_\a=-{\rm d}p_\a$. Hence, their periods over cycles that are even with respect~$\s$ vanishes. The latter implies that the locus of the algebraic data $\big\{\G,\s, P_\pm, \big[k_\pm^{-1}\big]_1\big\}$, with $\s$ having $2(n+1)$ fixed points, corresponding to periodic potentials is of dimension $g_0+1$.

Let $\mathcal S^{g_0,n}$ of curves $\G$ be the locus of algebraic geometrical data satisfying periodicity constraints for some fixed set $n=\{n_{\a,c}\}$. If we choose a basis of $a$ and $b$ cycles on $G_0$ with the canonical matrix of intersection, then the (local) coordinates on ${\mathcal S}^{g_0,n}$ can be defined similarly to those for the families of the Seiberg--Witten curves, namely
\begin{gather}\label{Aper}
A_{0} = \operatorname{res}_{P_+} p_x {\rm d}p_y, \qquad A_{i} = \oint_{\pi^*(a_i)} p_x {\rm d}p_y , \qquad i=1,\dots, g_0,
\end{gather}
where $\pi\colon {\G} \to {\G}_{0}$ is the projection.
Although the Abelian integral $p_x$ of ${\rm d}p_{x}$ is multi-valued the expressions \eqref{Aper} are well-defined. Indeed, a shift of $p_x$ by a constant does not change $A_0$ since ${\rm d}p_y$ has no residue. It does not change $A_i$ either, since ${\rm d}p_y$ is odd with respect to the involution~$\sigma$ while the cycle~$\pi^*(a_i)$ is even.

Note that in \cite{KN} it was shown that
\begin{gather*}
A_0=\int_\Sigma u(z,\zb) {\rm d}z\wedge {\rm d}\zb.
\end{gather*}

We conclude this section by identifying algebraic spectral curve with the Fermi curve. More precisely,
\begin{theo}[{\cite[Theorem 3.1]{kr-spec}}]\label{floque} Let $\G$ be a smooth spectral curve corresponding to a double periodic potential~$u$ then it is a normalization of the Bloch--Floquet locus $C_u$ defined in~\eqref{bloch}, i.e.,
there exits a map $\G\to C_u$ which is one-to-one outside the preimage of the set of singular points of~$C_u$.
\end{theo}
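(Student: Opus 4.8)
The plan is to realize the correspondence $\G\to C_u$ explicitly through the two‑point Baker--Akhiezer function and then to show it is a birational morphism of smooth compact curves, using the commuting ordinary differential operators of Lemma~\ref{lm:comring}.

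First I would construct the map. By the periodicity constraint~\eqref{imp2} the functions $w_\a(p)=\exp\big(\int^p{\rm d}p_\a\big)$, ${\a}=x,y$, are single valued and holomorphic on $\G\setminus\{P_\pm\}$, and, as recorded right after~\eqref{imp2}, the two‑point BA function $\psi(z,\zb;p)$ of the preceding theorem obeys $\psi(z+\omega_\a,\zb+\bar\omega_\a;p)=w_\a(p)\psi(z,\zb;p)$. For $p$ outside the pole divisor $D$ (and outside the finitely many points at which $\psi(\cdot;p)$ vanishes identically in $(z,\zb)$) the function $\psi(\cdot;p)$ is a nonzero double‑Bloch solution of $H\psi=0$, so $p\mapsto\big(w_x(p),w_y(p)\big)$ takes values in the Bloch--Floquet locus $C_u$ of~\eqref{bloch}. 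This gives a holomorphic map $\pi\colon\G\setminus\{P_\pm\}\to C_u$; near $P_\pm$ the functions $w_\a$ acquire the exponential‑type behaviour read off from Corollary~\ref{corbloch} (with ${\rm d}p_\a$ normalized precisely so that near $P_+$ one has ${\rm d}p_x={\rm d}(k_++\ell(k_+))$, ${\rm d}p_y={\rm d}(\tau k_++\bar\tau\ell(k_+))$), which is exactly the asymptotics compactifying the Fermi‑curve $\C_u$ by its two infinity points; hence $\pi$ extends to a holomorphic map $\pi\colon\G\to\C_u$ into the normalization of $C_u$.

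Next I would show $\pi$ is onto and generically one‑to‑one. The one ingredient I would borrow from~\cite{kr-spec} is that for $(w_x,w_y)$ in a dense open subset of $C_u$ the space of double‑Bloch solutions of $H\psi=0$ with these multipliers is exactly one‑dimensional. Granting this, note that the operators ${\tilde L}_n$ and $\bar L_m$ of Lemma~\ref{lm:comring} have doubly periodic coefficients, hence commute with the deck transformations $T_\a$, and, by the stationary Manakov relation $H{\tilde L}_n=B\,H$ for a differential operator $B$ (this is~\eqref{sb3} once ${\widetilde F}_n$ is seen to vanish, as in the proof of Lemma~\ref{lm:comring}, together with its $\zb$‑analogue), they carry $\ker H$ into itself. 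Therefore ${\tilde L}_n$ and $\bar L_m$ preserve each one‑dimensional double‑Bloch eigenspace and act there by scalars. Thus any generic double‑Bloch solution $\phi$ is a simultaneous eigenfunction, with constant eigenvalues, of the maximal commutative ring $\A_z\cong A(\G,P_+)$ of $z$‑differential operators; the eigenvalue assignment is a character of $\A_z$, i.e.\ a point $p\in\G$, and the uniqueness up to a constant of the eigenfunction with prescribed eigenvalues forces $\phi$ proportional to $\psi(\cdot;p)$, whence $(w_x,w_y)=\pi(p)$. Running this backwards, $\pi(p)=\pi(p')$ in the generic locus forces $\psi(\cdot;p)$ and $\psi(\cdot;p')$ to span the same line, and hence $p=p'$.

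Finally, $\pi\colon\G\to\C_u$ is a holomorphic map of smooth compact curves which is generically one‑to‑one and (since its image is closed and contains a dense subset) onto, hence of degree one, hence an isomorphism; composing with the tautological map $\C_u\to C_u$ exhibits $\G$ as a normalization of $C_u$, one‑to‑one precisely off the preimage of the singular locus of $C_u$. I expect the real obstacle to be the ingredient imported above: the genericity statement that the Bloch solution space is one‑dimensional on a dense open subset of $C_u$, which is the crux of the Fermi‑curve analysis of~\cite{kr-spec} --- presenting $C_u$ locally as the zero set of a single entire function built from the resolvent of $H$ and showing that its generic ``kernel'' is a line. The rest is routine bookkeeping: checking that $\pi$ still extends holomorphically across the finitely many $p$ where $\psi(\cdot;p)\equiv0$ or where $D$ meets the fibre, and verifying that the abelian‑integral normalization of ${\rm d}p_\a$ matches the monodromy exponents of Corollary~\ref{corbloch}.
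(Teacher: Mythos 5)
The paper contains no proof of Theorem~\ref{floque}: it is imported wholesale as Theorem~3.1 of \cite{kr-spec}, so there is nothing in the text to compare your argument against. Judged on its own, your sketch follows the standard route and is sound in outline. The map $p\mapsto\big(w_x(p),w_y(p)\big)$ is well defined on $\G\setminus\{P_\pm\}$ by \eqref{imp2}, lands in $C_u$ because $\psi(\cdot;p)$ is a double-Bloch solution of \eqref{bloch}, and extends over $P_\pm$ to the two infinity points compactifying the Fermi curve, consistently with Corollary~\ref{corbloch}. Your mechanism for the inverse correspondence is also the right one: the operators of Lemma~\ref{lm:comring} have double-periodic coefficients and preserve $\ker H$ via the stationary form of \eqref{sb3}, so on any one-dimensional Bloch eigenspace they act by scalars, the resulting character of the commutative ring is a point of $\G$, and Baker--Akhiezer uniqueness closes the loop in both directions.

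Two cautions about what you are deferring to \cite{kr-spec}. First, beyond the generic one-dimensionality of the Bloch eigenspace, your ``image is closed and contains a dense subset, hence onto, hence degree one'' step silently assumes that \emph{every} irreducible component of $C_u$ meets the locus where the Bloch space is a line and where your character argument applies; a priori $C_u$ could contain components invisible to $\G$, and ruling this out requires the global perturbative parametrization of the Bloch variety (the presentation of $C_u$ as the zero set of an entire function of $(w_x,w_y)$), not just a local genericity statement. Second, the very facts that $C_u$ is a one-dimensional analytic subvariety of $\BC^\times\times\BC^\times$ and that its normalization is a compact Riemann surface with exactly two points at infinity are part of the same analysis; without them ``normalization'' and the degree-one argument are not yet meaningful. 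You flag the first import honestly, but calling the rest ``routine bookkeeping'' understates it: these items are the analytic core of \cite{kr-spec}, and the algebraic identification you supply is the comparatively soft final step. Within that division of labor, however, your reconstruction is correct.
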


\section[w\_infty-harmonic maps to spheres]{$\boldsymbol{w_{\infty}}$-harmonic maps to spheres}\label{s:Winfty}

In this section we generalize the construction of Novikov and Veselov to the case of reducible spectral curves, and show that it provides solutions for the $O(2n+1)$-model. Moreover the periodicity constraint for the potential is effectively solved in terms of the spectral curves of the elliptic Calogero--Moser system.

\begin{rem} We present here only the basic case when each of the irreducible components~$\G_\pm$ is smooth. The generalization of the construction to the case of singular curves follows the standard line of arguments and will be presented elsewhere.
\end{rem}

Let $\Gamma_\pm$ be a smooth genus $g_\pm$ algebraic curve with the holomorphic involution
\begin{gather*}
{\sigma}\colon \ \Gamma_{\pm} \longmapsto \Gamma_{\pm} , 
\end{gather*}
with $2(n+1)$ fixed points
\begin{gather*}
\sigma (P_{\pm})=P_{\pm} , \qquad {\sigma}\big(p_{\pm}^{i}\big) = p_{\pm}^{i}.
\end{gather*}
Let us fix the ${\sigma}$-odd local parameters $k_{\pm}^{-1}$ in the neighborhoods of the marked points $P_\pm$,
\begin{gather*}
k_{\pm}(\sigma (p))=-k_{\pm}(p). 
\end{gather*}
The projection
\begin{gather*}
\pi \colon \ \Gamma_\pm \longmapsto {\Gamma}_{\pm}^0 = {\Gamma}_{\pm} / {\sigma} 
\end{gather*}
represents $\Gamma_\pm$ as a two-sheet covering of the quotient-curve $\Gamma_\pm^0$ with $2 (n+1)$ branch points~$P_{\pm}$,~$p_\pm^i$, the involution $\sigma$ permuting the sheets. From the Riemann--Hurwitz formula it follows that
the genus of $\G_\pm$ equals
\begin{gather*}
{g_\pm}=2g^0_\pm+n, 
\end{gather*}
where $g^0_\pm$ is the genus of $\Gamma^0_\pm$.

Let ${\rm d}{\Omega}_{\pm} (p)$ be a third kind meromorphic differential on $\Gamma^0_\pm$ with the divisor of poles at the branching locus ${\Gamma}_{\pm}^{\sigma}$ with residues $\mp 1$ at the marked points~$P_\pm$.
The differential ${\rm d}{\Omega}_{\pm}$ has
\[ {\#} {\Gamma}_{\pm}^{\sigma} + 2g^0_{\pm}-2 = g_{\pm} + n \]
zeros that we denote
by ${\gamma}_{s}^{0,\pm}$, $s=1, \dots, g_\pm+n$,
\begin{gather}
{\rm d}\Omega_\pm\big(\gamma_s^{0,\pm}\big)=0. \label{eq:diffdo1}
\end{gather}
For each zero ${\gamma}_{s}^{0, \pm}$ we choose one of its preimages on $\Gamma_{\pm}$, i.e., a point $\gamma_s^\pm$ on $\Gamma_\pm$ such that
\begin{gather*}
\pi \big( {\gamma}_s^{\pm} \big)=\gamma_s^{0,\pm}, \qquad s=1,\dots, { g}_{\pm} +n 
\end{gather*}
(there are $2^{g_\pm+n}$ such choices). Below $D_{\pm} = {\gamma}_1^{\pm}+ \dots +\gamma_{{g}_{\pm}+n}^\pm$ will be called the admissible divisor.

\begin{lem}\label{lm:main} The generic data consisting of a matrix $G\in O(2n+1,{\BC})$,
\begin{gather*}
G^{t}G = 1
\end{gather*}
and a pair $\big( {\Gamma_\pm}, {\sigma}, P_{\pm}, k_{\pm}, p_\pm^i, \gamma_s^{\pm} \big)$,
defines the unique pair of functions $\psi_{\pm}(z,\zb; p)$ on $\Gamma_\pm$ with the following analytic properties in the variables $p\in {\G}_{\pm}$, respectively:
\begin{enumerate}\itemsep=0pt
\item[$1^0$.] Outside $P_{\pm}$ the only singularities of the function $\psi_\pm$ are the poles at $\gamma_s^\pm$, $s=1,\dots, {g}_\pm+n$. The poles are simple if all of the ${\gamma}^{\pm}_{s}$'s are distinct.
\item[$2^0$.] In the neighborhoods of $P_{\pm}$ the function $\psi_\pm$ has an essential singularity of the form~\eqref{eq:psik1}
with
\begin{gather*}
\xi_0^\pm=1.
\end{gather*}

\item[$3^0$.] The gluing equations
\begin{gather}\label{gluing}
{\bf y}_{+} (z,\zb) = G {\bf y}_{-} (z,\zb),
\end{gather}
where ${\bf y}_\pm = \big( y_{\pm}^{i} \big)_{i=1}^{2n+1} \in {\BC}^{2n+1}$ are the vectors with the coordinates
\begin{gather}\label{y}
y_\pm^{i} = r_{\pm}^i\psi_{\pm} \big( z,{\zb}; p_{\pm}^{i} \big) , \qquad i=1,\dots, 2n+1,
\end{gather}
with
\begin{gather}\label{Rdef}
\big(r_{\pm}^i\big)^2=\mp \operatorname{res}_{p_{\pm^i}} {\rm d}\Omega_\pm
\end{gather}
hold.
\end{enumerate}
\end{lem}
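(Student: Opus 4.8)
The plan is to construct $\psi_\pm$ on each component $\Gamma_\pm$ separately by the standard Baker--Akhiezer machinery, and then use the gluing matrix $G$ to tie the two together, turning the gluing conditions into a finite linear problem whose genericity forces a unique solution.

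\medskip
\noindent\textbf{Step 1: Construction of $\psi_\pm$ on each smooth component.}
First I would forget about $G$ and the gluing, and build, for each sign, a two-point-type Baker--Akhiezer function on the single smooth curve $\Gamma_\pm$ of genus $g_\pm$. The data $\big(\Gamma_\pm, P_\pm, k_\pm, \gamma_1^\pm,\dots,\gamma_{g_\pm+n}^\pm\big)$ prescribes: an essential singularity ${\rm e}^{k_\pm z}$ (for $\psi_+$, in the variable $z$) or ${\rm e}^{k_\pm\zb}$ (for $\psi_-$, in $\zb$) at $P_\pm$ with the normalization $\xi_0^\pm=1$ of~\eqref{eq:psik1}, holomorphicity away from $P_\pm$, and at most simple poles at the divisor points $\gamma_s^\pm$. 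Since the number of poles, $g_\pm+n$, exceeds the genus $g_\pm$ by exactly $n$, a Riemann--Roch count shows that for fixed $(z,\zb)$ the space of such functions is, generically, $(n+1)$-dimensional: one gets $n+1$ linearly independent Baker--Akhiezer functions $\psi_\pm^{(0)},\dots,\psi_\pm^{(n)}$, where $\psi_\pm^{(0)}$ carries the exponential singularity and $\psi_\pm^{(1)},\dots,\psi_\pm^{(n)}$ are the ordinary (no essential singularity) ``extra'' sections. This is the $w_\infty$ analog of the construction used in Section~\ref{ss:curve}; the dependence of the $\xi_n^\pm(z,\zb)$ on $(z,\zb)$ is recurrently determined as in~\eqref{xis}, and the non-formal $\psi_\pm$ is recovered from its analytic data on $\Gamma_\pm$ by the usual theta-function formula, which makes the uniqueness (for fixed linear combination) manifest.

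\medskip
\noindent\textbf{Step 2: Evaluating at $p_\pm^i$ and assembling the vectors ${\bf y}_\pm$.}
Next I would evaluate the general element $\psi_\pm = \psi_\pm^{(0)} + \sum_{a=1}^n c_\pm^{(a)}(z,\zb)\,\psi_\pm^{(a)}$ at the $2n+1$ fixed points $p_\pm^i$ of $\sigma$, rescale by the constants $r_\pm^i$ from~\eqref{Rdef}, and form ${\bf y}_\pm\in{\BC}^{2n+1}$ as in~\eqref{y}. The key algebraic input is that $\sigma$-invariance of the construction (the local parameters $k_\pm^{-1}$ are $\sigma$-odd, $\Gamma_\pm^0=\Gamma_\pm/\sigma$, and ${\rm d}\Omega_\pm$ descends to the quotient) forces a quadratic relation among the values $\psi_\pm(p_\pm^i)$: using the differential ${\rm d}\Omega_\pm$ and the residue normalization~\eqref{Rdef}, summing residues of $\psi_\pm^{\sigma}\,\psi_\pm\,{\rm d}\Omega_\pm$ over $\Gamma_\pm$ gives $\sum_i (r_\pm^i)^2 \psi_\pm(p_\pm^i)^{\,\sigma}\psi_\pm(p_\pm^i) = \pm 2 k_\pm + (\text{lower order})$, i.e.\ in the appropriate conformal gauge $({\bf y}_\pm,{\bf y}_\pm)$ is constant on $\Sigma$ — this is precisely the constraint $({\bf q},{\bf q})=1$ being built in. (In the generic smooth situation the residue sum has no poles off the fixed locus, and the only poles are at the $p_\pm^i$ and at $P_\pm$.)

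\medskip
\noindent\textbf{Step 3: Solving the gluing condition and uniqueness.}
The gluing equations~\eqref{gluing}, ${\bf y}_+ = G{\bf y}_-$, are $2n+1$ scalar equations relating the $(n+1)+(n+1)=2n+2$ coefficient functions $\big(1,c_+^{(1)},\dots,c_+^{(n)}\big)$ and $\big(1,c_-^{(1)},\dots,c_-^{(n)}\big)$ — or, rather, $2n+2$ unknowns $c_\pm^{(a)}$ together with the requirement that the leading ($P_\pm$-normalizing) components match. Generically this linear system in the $2n$ unknowns $c_\pm^{(a)}$, $a=1,\dots,n$, has a unique solution once one fixes the overall normalization $\xi_0^\pm=1$; here genericity of $G\in O(2n+1,{\BC})$ and of the curve data is used to guarantee the relevant $(2n+1)\times(2n+1)$ matrix (whose columns are the rescaled $\psi_\pm^{(a)}(p_\pm^i)$ and the $G$-images) is invertible. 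Because $G$ preserves the quadratic form, $({\bf y}_+,{\bf y}_+)=({\bf y}_-,{\bf y}_-)$, so the two separately-constant quadratic forms from Step~2 are forced to agree, which is consistent and fixes the conformal gauge uniformly. The resulting $\psi_\pm$ are then unique by the uniqueness of the Baker--Akhiezer function on each smooth component.

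\medskip
\noindent\textbf{Main obstacle.}
The hard part will be Step~3: showing that the linear gluing system is nondegenerate for generic $G$ and generic $\big(\Gamma_\pm,\sigma,P_\pm,k_\pm,p_\pm^i,\gamma_s^\pm\big)$, and, hand in hand with it, that the $(z,\zb)$-dependence produced by the gluing is consistent — i.e.\ that the solution $c_\pm^{(a)}(z,\zb)$ obtained pointwise actually assembles into functions $\psi_\pm(z,\zb;p)$ with the claimed \emph{global} analytic properties on $\Gamma_\pm$ (poles only at $\gamma_s^\pm$, essential singularity only at $P_\pm$, $\xi_0^\pm=1$), rather than acquiring spurious $(z,\zb)$-dependent poles. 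One checks this by verifying that $\psi_+ - G\psi_-$, viewed as a section over the reducible curve $\Gamma_+\cup\Gamma_-$ (glued at the $p_\pm^i$), is a Baker--Akhiezer function with a strictly smaller effective divisor, hence vanishes identically; the dimension count $g_++g_-+2n+1 = \#\{\gamma_s^\pm\} + (2n+1)$ matching the number of gluing points is exactly what makes this rigidity argument close.
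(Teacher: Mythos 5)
Your strategy is the same as the paper's: Riemann--Roch gives an $(n+1)$-dimensional space of Baker--Akhiezer functions on each component (so $2n+2$ for the pair), the $2n+1$ gluing equations generically cut this down to a line, the normalization $\xi_0^+=1$ picks a point on that line, and the compatibility of the \emph{second} normalization $\xi_0^-=1$ is the real content; you correctly locate that compatibility in the residue theorem applied to $\psi_\pm^{\sigma}\psi_\pm\,{\rm d}\Omega_\pm$ together with the orthogonality of $G$. This is exactly the paper's argument.

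Two local slips should be repaired. First, in Step 2 the sum of residues of ${\rm d}\widetilde\Omega_\pm=\psi_\pm^{\sigma}\psi_\pm\,{\rm d}\Omega_\pm$ over the points $p_\pm^i$ cannot equal ``$\pm 2k_\pm+(\text{lower order})$'': a residue is a number (here a function of $(z,\zb)$ alone), not a series in $k_\pm$. The correct statement is that this sum equals minus the residue at $P_\pm$; since $\psi_\pm^{\sigma}\psi_\pm=(\xi_0^\pm)^2+O\big(k_\pm^{-1}\big)$ near $P_\pm$ and ${\rm d}\Omega_\pm$ has residue $\mp1$ there, one gets $(\xi_0^+)^2=-{\bf y}_+^t{\bf y}_+$ and $(\xi_0^-)^2=-{\bf y}_-^t{\bf y}_-$, and then ${\bf y}_+=G{\bf y}_-$ with $G^tG=1$ forces $(\xi_0^+)^2=(\xi_0^-)^2$, whence $\xi_0^-=1$ follows from $\xi_0^\pm(0,0)=1$. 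Second, the closing rigidity argument about ``$\psi_+-G\psi_-$ as a BA function with a smaller effective divisor'' does not parse: $\psi_+$ and $\psi_-$ live on different components and carry different exponentials (${\rm e}^{k_+z}$ versus ${\rm e}^{k_-\zb}$), so their difference is not a section of anything on $\Gamma_+\sqcup\Gamma_-$, and the accompanying dimension count is garbled. The paper needs only the generic nondegeneracy of the $(2n+1)$ linear conditions on the $(2n+2)$-dimensional space of pairs (an open, nonempty condition on the data) plus the residue identity above; no further rigidity statement is required.
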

\begin{proof} The first step of the proof is a simple counting of parameters and equations. According to~\cite{kr1} the space of functions $\psi_{\pm}$ satisfying \eqref{eq:psik1} is $n+1$-dimensional. Therefore, the $(2n+1)$ gluing equations \eqref{gluing} cut out a one-dimensional space of pairs ${\psi} = ( {\psi}_{+}, {\psi}_{-} )$.
The condition $\xi_{0}^{+}=1$ further normalizes $\psi$. It remains to show that the second normalization $\xi_{0}^{-}=1$ is satisfied automatically.

Consider the differentials ${\rm d}{\widetilde \Omega}_{\pm} = {\psi}_{\pm}^{\sigma} \psi_{\pm} {\rm d}{\Omega}_{\pm}$ on ${\Gamma}_{\pm}$, respectively. By definition of the admissible divisor ${\rm d}{\widetilde \Omega}_{\pm}$ is a meromorphic differential on $\Gamma_{\pm}$ with only the simple poles at~$P_\pm$,~$p_\pm^i$. Then by the residue theorem we have
\begin{gather}
\big(\xi_0^{+}\big)^2 = \sum_{i=1}^{2n+1} {\res}_{p^i_+}{\rm d}{\widetilde \Omega_{+}} = -{\bf y}_{+}^{t} {\bf y}_{+}
 = -{\bf y}_{-}^{t} {\bf y}_{-} = {\res}_{P_{-}}{\rm d}{\widetilde {\Omega}_{-}} = \big({\xi}_{0}^{-}\big)^2, \label{ressum1}
\end{gather}
where in the middle we used the equations~(\ref{y}), (\ref{Rdef}) and the assumption $G\in O(2n+1,{\BC})$.
Since at $z=0$, $\zb=0$ we have $\xi_0^{\pm}(0,0)=1$ the equation~\eqref{ressum1} implies the equality $\xi_{0}^{+}=\xi_{0}^{-}$. The lemma is proved.
\end{proof}

In what follows we call the pair of functions ${\psi}:= ( {\psi}_{+} , {\psi}_{-} )$ the BA function on $\Gamma:= \Gamma_{+}\bigsqcup\Gamma_{-}$. Given $\psi$ and a matrix $H\in O(2n+1,{\BC})$ introduce the vector ${\bf q}(z,\zb) \in {\BC}^{2n+1}$:
\begin{gather}\label{Xmay}
{\bf q}=H {\bf y}_{+}.
\end{gather}
\begin{theo}\label{thm:main}
The BA function $\psi(z,\zb; p)$ on~$\Gamma$ satisfies the equation
\begin{gather*}
 ( {\pa}_z{\p}_{\zb} - u(z,\zb) )\psi(z,{\zb}; p)=0, 
\end{gather*}
with the potential
\begin{gather*}
u(z,{\zb}) = {\pa}_{\zb}{\xi}_{1}^{+} = {\pa}_z {\xi}_{1}^{-}. 
\end{gather*}
Moreover, the $2n+1$-dimensional vector ${\bf q}(z,\zb)$ defined in~\eqref{Xmay} satisfies the equations:
\begin{gather}\label{sphere}
({\bf q}, {\bf q}) =1,\\
\label{conformal}
\big(\pa^{i}_z {\bf q}, {\bf q}\big)=0 ,\qquad i > 0.
\end{gather}
\end{theo}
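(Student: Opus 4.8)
The plan is to obtain all four conclusions --- the Schr\"odinger equation, the identity $u=\pa_{\zb}\xi_1^{+}=\pa_{z}\xi_1^{-}$, and the constraints \eqref{sphere}, \eqref{conformal} --- from just two tools: the uniqueness of the glued Baker--Akhiezer function established in the proof of Lemma~\ref{lm:main}, and the residue theorem applied on $\Gamma_{\pm}$ to meromorphic differentials manufactured from $\psi_{\pm}$, $\psi_{\pm}^{\sigma}$ and ${\rm d}\Omega_{\pm}$. To start, I would set $u:=\pa_{\zb}\xi_1^{+}$ and put $\mathcal D:=\pa_{z}\pa_{\zb}-u$; the first goal is $\mathcal D\psi=0$.

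To prove this, I would check that $\mathcal D\psi=(\mathcal D\psi_{+},\mathcal D\psi_{-})$ has exactly the analytic properties of a glued BA function. Since $\mathcal D$ differentiates only in $(z,\zb)$ and $G$ is a constant matrix, $\mathcal D\psi$ still obeys the gluing relations \eqref{gluing}; away from $P_{\pm}$ it has at worst the same simple poles at the $\gamma_{s}^{\pm}$; and near $P_{\pm}$, using $\pa_{z}e^{k_{-}\zb}=0=\pa_{\zb}e^{k_{+}z}$ and expanding \eqref{eq:psik1}, one finds $\mathcal D\psi_{\pm}$ has the same exponential types $e^{k_{+}z}$, resp.\ $e^{k_{-}\zb}$, with the leading coefficient of $\mathcal D\psi_{+}$ at $P_{+}$ equal to $\pa_{\zb}\xi_1^{+}-u=0$. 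Hence, at each fixed $(z,\zb)$, $\mathcal D\psi$ lies in the one-dimensional space of glued functions cut out in Lemma~\ref{lm:main}, so $\mathcal D\psi=c(z,\zb)\,\psi$; comparing leading terms at $P_{+}$ gives $c\cdot\xi_0^{+}=0$, whence $c\equiv0$ and $\mathcal D\psi=0$. Finally, reading off the leading coefficient of $\mathcal D\psi_{-}=0$ at $P_{-}$ yields $\pa_{z}\xi_1^{-}-u=0$, i.e.\ the second form of $u$.

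For \eqref{sphere} I would note $({\bf q},{\bf q})=(H{\bf y}_{+})^{t}(H{\bf y}_{+})={\bf y}_{+}^{t}{\bf y}_{+}$ since $H^{t}H=1$, and this is precisely the quantity computed in \eqref{ressum1} by applying the residue theorem to ${\rm d}\widetilde\Omega_{+}=\psi_{+}^{\sigma}\psi_{+}\,{\rm d}\Omega_{+}$ and using $G\in O(2n+1,\BC)$; it equals the $(z,\zb)$-independent constant $(\xi_0^{\pm})^{2}=1$. For \eqref{conformal} the key point is that $\pa_{z}$ annihilates the exponential $e^{k_{-}\zb}$, so that, because $\xi_0^{-}=1$ is constant, $\pa_{z}^{i}\psi_{-}=e^{k_{-}\zb}O(k_{-}^{-1})$ is subleading relative to $\psi_{-}$ at $P_{-}$. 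Using $H^{t}H=1$ together with ${\bf y}_{+}=G{\bf y}_{-}$, $G^{t}G=1$, I have $(\pa_{z}^{i}{\bf q},{\bf q})=(\pa_{z}^{i}{\bf y}_{-})^{t}{\bf y}_{-}$, which is, up to uniform nonzero constants coming from the $r_{-}^{j}$ of \eqref{y}, \eqref{Rdef}, the sum of the residues at the $p_{-}^{j}$ of the differential ${\rm d}\Xi_{i}:=\psi_{-}^{\sigma}(\pa_{z}^{i}\psi_{-})\,{\rm d}\Omega_{-}$ on $\Gamma_{-}$. This differential is regular at each $\gamma_{s}^{-}$, because the simple pole of $\pa_{z}^{i}\psi_{-}$ (resp.\ of $\psi_{-}^{\sigma}$ at $\sigma(\gamma_{s}^{-})$) is cancelled by the simple zero of $\pi^{*}{\rm d}\Omega_{-}$ there --- this is exactly where the admissible divisor being supported on preimages of the zeros of ${\rm d}\Omega_{-}$ enters --- and for $i>0$ it is regular at $P_{-}$ as well, since the two exponentials cancel and what survives is $O(k_{-}^{-1})\cdot O(k_{-}^{-1}{\rm d}k_{-})$, which has no pole. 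Thus the only poles of ${\rm d}\Xi_{i}$ are the simple poles at the $p_{-}^{j}$, and the residue theorem forces $(\pa_{z}^{i}{\bf q},{\bf q})=0$.

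The hard part will be the first step: verifying carefully that $\mathcal D\psi$ acquires no new singularities and genuinely lands in the one-dimensional glued space --- in particular that no leading coefficient at the poles or at $P_{-}$ spoils membership --- and, in the last step, that ${\rm d}\Xi_{i}$ (and ${\rm d}\widetilde\Omega_{\pm}$) carry no spurious poles, which rests entirely on the location of $D_{\pm}$ at the zeros of ${\rm d}\Omega_{\pm}$. The non-generic cases, where some $\gamma_{s}^{\pm}$ collide or meet a $\sigma(\gamma_{t}^{\pm})$, I would handle in the usual way by recasting everything in terms of sections of $\omega_{\Gamma_{\pm}}$ twisted by the relevant divisor and passing to the limit.
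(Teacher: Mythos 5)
Your proposal is correct and follows essentially the same route as the paper: the Schr\"odinger equation and the formula for $u$ via the uniqueness of the glued Baker--Akhiezer function (the step the paper dismisses as ``standard''), equation~\eqref{sphere} directly from~\eqref{ressum1}, and equation~\eqref{conformal} from the residue theorem applied to $\psi_{-}^{\sigma}\big(\pa_z^i\psi_{-}\big)\,{\rm d}\Omega_{-}$ together with the gluing conditions and the orthogonality of $H$ and $G$. The only difference is that you spell out the details the paper omits, and you do so correctly.
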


\begin{proof} The proof of the first statement of the theorem is standard and based on the uniqueness of the BA function. Equation~(\ref{sphere}) is just the first equation in~(\ref{ressum1}).

The differential $\big({\pa}^i_{z} {\psi}_{-}\big) {\psi}_{-}^{\sigma}{\rm d}\Omega_{-}$ has no residue at $P_-$ for $i>0$. Hence the sum of its residues at the points $p_{-}^i$, $i=1,\dots, 2n+1$ is zero. The latter is equivalent to the equation~\eqref{conformal} due to the gluing conditions~\eqref{gluing} and orthogonality of~$H$.
\end{proof}

\subsection{Real and regular solutions}

In general the potential $u(z, {\zb})$, and the functions $q^{i}(z, {\zb})$, given by the construction above, may have poles. The following theorem describes sufficient conditions for $q^{i}(z,\zb)$ to be real and regular.

\begin{theo}\label{real_case} Suppose the data in Lemma~{\rm \ref{lm:main}} satisfies the following conditions:
\begin{enumerate}\itemsep=0pt
\item[$(a)$] there is an anti-holomorphic bijection ${\tau}\colon {\Gamma}_{+}\longrightarrow {\Gamma}_{-}$ commuting with $\sigma$, such that
\begin{gather*}
{\tau} (P_{+}) = P_{-} , \qquad {\tau}\big( p_{+}^{i} \big) = p_{-}^{i},
\end{gather*}
\item[$(b)$] the matrix $G$, which is be definition orthogonal, is Hermitian, i.e.,
\begin{gather}\label{newconst}
{\bar G} = G^{-1} = G^t,
\end{gather}
\item[$(c)$] the admissible divisors are $\tau$-invariant, e.g., ${\tau} ( D_{+} ) = D_{-}$.
\end{enumerate}
Then the potential defined by the corresponding BA function is real, $u=\bar u$.

Moreover, if
\begin{gather}\label{GH}
G = H^{-1} {\bar H}
\end{gather}
for some $H\in O(2n+1,{\BC})$, then the vector ${\bf q}(z,\zb)$ given by~\eqref{Xmay} is real and therefore regular.
\end{theo}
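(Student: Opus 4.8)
The plan is to promote the anti-holomorphic involution $\tau$ of condition $(a)$ to a reality symmetry of the Baker--Akhiezer function $\psi=(\psi_+,\psi_-)$ of Lemma~\ref{lm:main}, and then read off what this forces on $u$ and $\mathbf q$. Throughout I work on the real slice $\zb=\overline z$, which is where the reality assertions live (for genuinely independent $z,\zb$ one inserts the swap $\psi_-(z,\zb;q)\rightsquigarrow\psi_-(\overline{\zb},\overline z;q)$ in the formulas below). \textbf{Step 1: an auxiliary BA function.} For $p\in\Gamma_+$ set $\widehat\psi_+(z,\zb;p):=\overline{\psi_-(z,\zb;\tau(p))}$ and, for $p\in\Gamma_-$, $\widehat\psi_-(z,\zb;p):=\overline{\psi_+(z,\zb;\tau^{-1}(p))}$. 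I claim $\widehat\psi=(\widehat\psi_+,\widehat\psi_-)$ again satisfies the three requirements of Lemma~\ref{lm:main} for the \emph{same} algebraic data and the \emph{same} matrix $G$. Since $\tau$ is anti-holomorphic and commutes with $\sigma$, the parameter $\tau^{*}k_-$ is again $\sigma$-odd, so the local parameters may be normalized with $\tau^{*}k_-=\overline{k_+}$; then on the real slice $\overline{e^{k_-(\tau p)\zb}}=e^{k_+(p)z}$, so property $2^{0}$ of Lemma~\ref{lm:main} holds for $\widehat\psi_\pm$ with $\widehat\xi_0^\pm=\overline{\xi_0^\mp}=1$. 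By condition $(c)$, $\tau(D_+)=D_-$, so away from $P_\pm$ the only singularities of $\widehat\psi_\pm$ are simple poles on $D_\pm$; this is property $1^{0}$.

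\textbf{Step 2: gluing and uniqueness.} With $\widehat{\mathbf y}_\pm:=(r_\pm^i\,\widehat\psi_\pm(z,\zb;p_\pm^i))_{i=1}^{2n+1}$ one gets, using $\tau(p_+^i)=p_-^i$ from $(a)$ and \eqref{y},
\[
\widehat y_+^i=r_+^i\,\overline{\psi_-(z,\zb;p_-^i)}=\frac{r_+^i}{\overline{r_-^i}}\,\overline{y_-^i}.
\]
The residue law under an anti-holomorphic pull-back, $\res_{p_+^i}\big(\overline{\tau^{*}\omega}\big)=\overline{\res_{p_-^i}\omega}$, together with the compatibility $\overline{\tau^{*}{\rm d}\Omega_-}={\rm d}\Omega_+$ (which $(a)$ and $(c)$ entail once the third-kind differentials are normalized compatibly) and \eqref{Rdef}, gives $(r_+^i)^2=\overline{(r_-^i)^2}$; fixing the square roots consistently we may take $r_+^i=\overline{r_-^i}$, so $\widehat{\mathbf y}_+=\overline{\mathbf y_-}$ and $\widehat{\mathbf y}_-=\overline{\mathbf y_+}$. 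Conjugating the gluing relation \eqref{gluing} yields $\overline{\mathbf y_+}=\overline G\,\overline{\mathbf y_-}$, i.e.\ $\widehat{\mathbf y}_-=\overline G\,\widehat{\mathbf y}_+$, and by the hermitian-orthogonality \eqref{newconst} we have $(\overline G)^{-1}=G$, so this is $\widehat{\mathbf y}_+=G\,\widehat{\mathbf y}_-$, i.e.\ property $3^{0}$. Hence $\widehat\psi$ meets all hypotheses of Lemma~\ref{lm:main}, and its uniqueness forces $\widehat\psi=\psi$, that is
\[
\psi_+(z,\zb;p)=\overline{\psi_-(z,\zb;\tau(p))},\qquad \psi_-(z,\zb;p)=\overline{\psi_+(z,\zb;\tau^{-1}(p))}.
\]

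\textbf{Step 3: reality of $u$ and $\mathbf q$.} Matching the expansions \eqref{eq:psik1} of $\psi_\pm=\widehat\psi_\pm$ at $P_\pm$ gives $\xi_1^+=\overline{\xi_1^-}$, while $\widehat\psi_\pm=\psi_\pm$ gives $\widehat{\mathbf y}_\pm=\mathbf y_\pm$, hence $\mathbf y_-=\overline{\mathbf y_+}$. By Theorem~\ref{thm:main}, $u=\p_{\zb}\xi_1^+=\p_z\xi_1^-$, so
\[
\overline u=\overline{\p_z\xi_1^-}=\p_{\zb}\overline{\xi_1^-}=\p_{\zb}\xi_1^+=u.
\]
For the last statement, \eqref{GH} gives $G^{-1}=\overline H^{-1}H$, whence, using \eqref{gluing}, $\overline{\mathbf y_+}=\mathbf y_-$ and \eqref{Xmay},
\[
\overline{\mathbf q}=\overline H\,\overline{\mathbf y_+}=\overline H\,\mathbf y_-=\overline H\,G^{-1}\mathbf y_+=\overline H\,\overline H^{-1}H\,\mathbf y_+=H\,\mathbf y_+=\mathbf q .
\]
Regularity follows from \eqref{sphere}: for real $q^i$, $\sum_i (q^i)^2=1$ forces $|q^i|\le 1$, and a bounded real-analytic function on $T^2$ has no poles, so $\mathbf q$ — and with it $u=-(\p_z\mathbf q,\p_{\zb}\mathbf q)$ of \eqref{u} — is regular. \textbf{Main obstacle.} The delicate point is the identity $r_+^i=\overline{r_-^i}$ in Step~2: it rests on choosing $\tau$, the $\sigma$-odd parameters $k_\pm$ and the differentials ${\rm d}\Omega_\pm$ \emph{compatibly}, so that $\overline{\tau^{*}{\rm d}\Omega_-}={\rm d}\Omega_+$ and the square-root branches in \eqref{Rdef} line up; if $(a)$--$(c)$ only yield $r_+^i=\pm\overline{r_-^i}$, one must check the residual signs are absorbable (into the labelling of the $p_\pm^i$ or a diagonal rotation) without disturbing the gluing.
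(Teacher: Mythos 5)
Your proposal follows essentially the same route as the paper: conjugate and pull back by $\tau$ to produce a second candidate BA pair, check it satisfies the hypotheses of Lemma~\ref{lm:main} (the key points being $r^i_\pm=\bar r^i_\mp$ and ${\bar G}=G^{-1}$ for the gluing), and invoke uniqueness to get ${\psi}_{\pm}(z,\zb;p)={\bar\psi}_{\mp}(z,\zb;\tau(p))$, from which reality of $u$ and of ${\bf q}$ follow; your extra care with the real slice $\zb=\bar z$ and the explicit regularity argument are welcome additions the paper leaves implicit. The only slip is the intermediate relation $\overline{\tau^{*}{\rm d}\Omega_-}={\rm d}\Omega_+$, which should carry a minus sign (the residues at $P_\pm$ are $\mp1$ and $\tau(P_+)=P_-$), precisely so that the $\mp$ in \eqref{Rdef} yields $\big(r^i_+\big)^2=\overline{\big(r^i_-\big)^2}$ as you state and as the paper records.
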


\begin{rem}In this case the curve ${\Gamma}_{-}$ can be seen as ${\Gamma}_{+}$
with the opposite complex structure, so that $\Gamma=\Gamma \cup \bar \Gamma$ has a real structure.
\end{rem}

\begin{proof} For a generic effective divisor $D_\pm$ of degree $g_\pm+n$ equation \eqref{eq:diffdo1} uniquely defines the corresponding meromorphic differential ${\rm d}\Omega_\pm$. Therefore, under the assumption $(c)$ the equation
\begin{gather*}
{\rm d}\Omega_+(p)=-\overline {{\rm d}\Omega}_-(\tau(p))
\end{gather*}
holds. The latter implies the equation $\big(r^i_\pm\big)^2=\overline{\big(r^i_\mp\big)}^2$ for $\big(r_\pm^i\big)^2$ defined in~\eqref{Rdef}. It is assumed that the square roots of the both sides of the equation are chosen consistently, i.e., that the equations
\begin{gather}\label{sqres}
r^i_\pm=\bar r^i_\mp
\end{gather}
hold.

The equations~\eqref{newconst}, \eqref{sqres} imply that the gluing equations \eqref{gluing} hold for the pair
\[
\big( {\bar\psi}_{-} ( z,{\zb} ; \tau(p) ), {\bar\psi}_{+}( z,{\zb} ; {\tau}(p))\big)
\]
 of BA functions. The uniqueness of BA function then implies
\begin{gather}\label{reality1}
{\psi}_{\pm} ( z,{\zb}; p ) = {\bar\psi}_{\mp} ( z,\zb; \tau(p) ) .
\end{gather}
The reality of $u$ is a direct corollary of \eqref{reality1}. The reality of $\bf q$ is an easy corollary of~\eqref{reality1} and~\eqref{GH}. The theorem is proved.
\end{proof}

\subsection{Periodicity constraint} The potential $u(z,\zb)$ defined by data in Lemma~\ref{lm:main} is periodic if and only if there are functions~$w_{\a,\pm}$ on~$\Gamma_\pm$ such that equations
\begin{gather}\label{blochpm}
{\psi}_{\pm} ( z+\omega_\a,\zb+\bar\omega_{\a} ; p ) = w_{\a}^{\pm}(p) {\psi}_{\pm} ( z,{\zb}; p )
\end{gather}
hold. The first part of the constraints that single out the corresponding algebraic-geometrical data are similar to that in Section~\ref{ss:period}.

Namely, denote by ${\rm d}p_\a^+$ the unique differentials on $\Gamma_{+}$ with a single pole at $P_+$ of the form
\begin{gather*}
{\rm d}p^{+}_{x} = {\rm d}k_{+} \big( 1 + O\big(k_{+}^{-2}\big) \big) ,\qquad {\rm d}p^{+}_{y} = \tau {\rm d}k_{+} \big( 1 + O\big(k_{+}^{-2}\big)\big)
\end{gather*}
and denote by ${\rm d}p_\a^{-}$ the unique differentials on $\Gamma_{-}$ with a single pole at $P_{-}$ of the form
\begin{gather*}
{\rm d}p_{x}^{-} = {\rm d}k_{-} \big( 1+O\big(k_{-}^{-2}\big)\big) , \qquad {\rm d}p_{y}^{-} = {\bar\tau} {\rm d}k_{-} \big( 1+O\big(k_{-}^{-2}\big)\big) ,
\end{gather*}
such that all their periods are {\it imaginary}.

The functions
\begin{gather}\label{abpm}
w_\a^{\pm}(p)=\exp\left(\int^p {\rm d}p_\a^\pm\right)
\end{gather}
are single-valued on $\Gamma_\pm$ iff the periods of $dp_{\a}^{\pm}$ are integral multiples of $2\pi\ii$
\begin{gather}\label{periodicity1}
\frac{1}{2\pi\ii} \oint_{c^\pm} {\rm d}p_\a^{\pm} = n_{\a,c^\pm}^{\pm}\in {\BZ} \qquad\forall {c^\pm} \in H_{1} ( {\Gamma_\pm}, {\BZ} ).
\end{gather}
At first glance the equations~\eqref{periodicity1} are identical to the equations~\eqref{imp2}.

\subsubsection{Enters elliptic Calogero--Moser system}
It turns out that in the case of reducible Fermi curves the constraints \eqref{periodicity1} are solved by the spectral curves of elliptic Calogero--Moser (eCM) system!

\begin{theo}[\cite{kr-grush}] The equations~\eqref{periodicity1} are satisfied iff ${\Gamma}_{\pm}$ is the normalization of the spectral curve ${\C}_{N_{\pm}}$ of $N_{\pm}$-particle eCM system.
\end{theo}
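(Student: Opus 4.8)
The plan is to recognise the periodicity constraint \eqref{periodicity1} as exactly the data that exhibits $\G_\pm$ as a branched cover of the base elliptic curve $E_\tau=\BC/(\BZ\oplus\tau\BZ)$ carrying the eigenvalue function of an elliptic Calogero--Moser configuration, and conversely to read the eCM structure back off such a cover. I would work with a single component, writing $\G:=\G_+$, $P:=P_+$, $k:=k_+$ and dropping the ``$+$'' superscripts.

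\emph{From \eqref{periodicity1} to eCM.} Since $dp_x$ and $\tau^{-1}dp_y$ have the same principal part at $P$ modulo $O(k^{-2})\,dk$ and are holomorphic elsewhere, the combination
\[
d\mathbf z:=\frac{1}{2\pi\ii}\big(dp_y-\tau\,dp_x\big)
\]
is a \emph{holomorphic} differential on $\G$; it is $\sigma$-odd because each $dp_\a$ is; and by \eqref{periodicity1} all its periods lie in the lattice $\BZ\oplus\tau\BZ$. Hence $\mathbf z(p):=\int_P^p d\mathbf z$ descends to a holomorphic map $\mathbf z\colon\G\to E_\tau$ with $\mathbf z(P)=0$ and $\mathbf z\circ\sigma=-\mathbf z$; set $N:=\deg\mathbf z$, so that Riemann--Hurwitz $2g_+-2=\sum_{p\in\G}\mathrm{ord}_p(d\mathbf z)$ counts the branch points of the resulting $N$-sheeted cover. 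By \eqref{periodicity1} the functions $w_x:=\exp\big(\int dp_x\big)$ and $w_y:=\exp\big(\int dp_y\big)$ are single-valued and non-vanishing on $\G\setminus\{P\}$, with an essential singularity ${\rm e}^{k}\big(1+O(k^{-1})\big)$, resp.\ ${\rm e}^{\tau k}\big(1+O(k^{-1})\big)$, at $P$; these are the Bloch multipliers of the wave function $\psi_+$, so the potential $u$ is doubly periodic in $z$. Writing $\psi_+$, for fixed $\zb$, as an elliptic-function multiple of ${\rm e}^{kz}$ with $N$ simple poles at points $x_1(\zb),\dots,x_N(\zb)\in E_\tau$ (the standard ansatz of Krichever's theory of elliptic solutions) and substituting into $H\psi_+=0$ produces the equations of motion of the $N$-particle elliptic Calogero--Moser system for the $x_j(\zb)$. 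Since $\psi_+$ is, by construction, the unique function on $\G$ with the prescribed essential singularity at $P$, admissible pole divisor $D_+$ and the above Bloch behaviour, $\G$ is the normalisation of the eCM spectral curve $\C_N$: $\mathbf z$ is Krichever's cover, and $dp_x$ is, up to the normalisation fixed by its principal part and imaginary periods, the differential of the eCM eigenvalue.

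\emph{From eCM to \eqref{periodicity1}.} Conversely, if $\G$ is the normalisation of $\C_N$ for an $N$-particle eCM configuration, then it carries the degree-$N$ projection $\pi_E\colon\G\to E_\tau$ (unramified at $P$ over $0\in E_\tau$), the involution $\sigma$ with $2(n+1)$ fixed points — the point $P$ together with the points over the nontrivial $2$-torsion of $E_\tau$ at which the eigenvalue vanishes — and an admissible divisor. The second-kind differential with a double pole only at $P$, the given principal part, and purely imaginary periods is unique; on $\C_N$ it is $d\log w_x=dp_x$, and likewise $d\log w_y=dp_y$, because the eCM Baker--Akhiezer function is genuinely Bloch in $z$: its multipliers $w_x,w_y$ are $\BC^\times$-valued functions on $\G$, equivalently the elliptic CM solution is a doubly periodic potential, whence $dp_x$ and $dp_y$ have periods in $2\pi\ii\BZ$. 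This is \eqref{periodicity1}.

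\emph{The main obstacle.} The serious direction is the first, and within it the delicate point is not the cover $\mathbf z$ — immediate from \eqref{periodicity1} — but the reconstruction of the \emph{full} Calogero--Moser structure: one must show that $\mathbf z\colon\G\to E_\tau$, together with $dp_x$ and the line bundle $\mathcal O(D_+)$, pushes forward to a Lax matrix on $E_\tau$ with a simple pole of the specific CM type at $0$, rather than a residue of another Hitchin-system type, and that the particle number $N$ is pinned to the genus data $g_+=2g_+^0+n$ through Riemann--Hurwitz and the count $2(n+1)$ of $\sigma$-fixed points. This is the content of \cite{kr-grush}; the remaining steps above amount to bookkeeping with second-kind differentials on curves with involution.
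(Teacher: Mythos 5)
The paper does not prove this statement at all: it is imported verbatim from~\cite{kr-grush}, so there is no internal argument to measure your proposal against, and the honest benchmark is whether your sketch would stand as a self-contained proof. Its opening move is correct and is indeed the mechanism behind the cited theorem: the combination $\frac{1}{2\pi\ii}\big({\rm d}p_y-\tau\,{\rm d}p_x\big)$ is holomorphic (the leading double poles at $P_\pm$ cancel and the $O\big(k_\pm^{-2}\big){\rm d}k_\pm$ tails are regular in the local parameter $k_\pm^{-1}$), the constraint \eqref{periodicity1} places its periods in $\BZ\oplus\tau\BZ$, and the resulting degree-$N$ map $\mathbf z\colon\G_\pm\to E=\BC/(\BZ\oplus\tau\BZ)$ is consistent with the projection $p\mapsto-\a$ of the eCM curve, since on $C_N$ the explicit multipliers give $p_y-\tau p_x=2\a\big(\zeta\big(\tfrac\tau2\big)-\tau\zeta\big(\tfrac12\big)\big)=-2\pi\ii\,\a$ by the Legendre relation. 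The converse direction as you state it is also essentially right.

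The forward direction, however, has two genuine gaps. First, the ``elliptic ansatz at fixed $\zb$'' is misapplied: the potential satisfies $u(z+\omega_\a,\zb+\bar\omega_\a)=u(z,\zb)$, i.e., it is periodic only under \emph{simultaneous} shifts of $z$ and $\zb$ along the real torus, and is not an elliptic function of $z$ at fixed $\zb$; correspondingly $\psi_+$ at fixed $\zb$ is not Bloch with respect to $\BZ\oplus\tau\BZ$ in the $z$-variable alone, so Krichever's pole ansatz for elliptic solutions cannot be invoked in the form you describe to produce moving poles $x_1(\zb),\dots,x_N(\zb)$ obeying eCM dynamics. Second, and decisively, exhibiting $\G_\pm$ as a degree-$N$ cover of $E$ is only the entry point: one must still show that the multivalued function $k:=p_x+\zeta(\mathbf z)-2\zeta\big(\tfrac12\big)\mathbf z$ descends to a single-valued meromorphic function with poles only over $\mathbf z=0$, and that its characteristic polynomial over $E$ has precisely the form \eqref{phdoker1}--\eqref{ph1}, equivalently that the branches over $\a=0$ carry the rank-one-plus-scalar residue structure $a_1=1-N$, $a_2=\dots=a_N=1$ that distinguishes eCM covers among all degree-$N$ covers of $E$ equipped with such a differential. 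You correctly identify this as ``the main obstacle'' but then defer it wholesale to~\cite{kr-grush}; since that identification \emph{is} the theorem, what you have written is an accurate road map with a correct first step, not a proof.
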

Recall that the $N$-particle eCM is the Hamiltonian system on
\begin{gather*}
{\X}_{N} = T^* \big(E^{N} \backslash {\rm diag}\big) \big\{ ({\rho}_{i}, z_{i})_{i=1}^{N} \,|\, {\rho}_{i} \in {\BC}, \,z_{i} \in E ,\, z_{i} \neq z_{j} ,\, i \neq j \big\},
\end{gather*}
$E = {\BC}/{\BZ}{\omega}_{x} \oplus {\BZ}{\omega}_{y}$, which is governed by the Hamiltonian
\begin{gather*}
H_{2} = \frac 12 \sum_{i=1}^{N} {\rho}_{i}^{2} + {\nu}^{2} \sum_{i < j} {\wp}(z_{i} - z_{j}).
\end{gather*}
This is an algebraic integrable system with the Lax operator~\cite{kr-cm}
\begin{gather}\label{CMLax}
L({\alpha}) = \left\Vert {\rho}_{i} {\delta}_{ij} + {\nu} \frac{{\sigma} ({\alpha} + z_{i}- z_{j})}{{\sigma} (z_{i}- z_{j}){\sigma} ({\alpha})} (1 - {\delta}_{ij}) \right\Vert_{i,j =1}^{N}.
\end{gather}
The flows of eCM linearize on the Jacobian of the spectral curve
\begin{gather}
C_{N} \subset M = \overline{{\BC} \times ( E \backslash \{ {\alpha} = 0 \} )} , \nonumber\\
C_{N} \colon \ \operatorname{Det} ( k - L({\alpha}) ) = k^N + \sum_{j=1}^{N} k^{N-j} c_{j}({\alpha}) = 0 .\label{characteristic}
\end{gather}
The coefficients $c_{j}({\alpha})$, $j = 1, \dots , N$ of the characteristic polynomial~\eqref{characteristic} are meromorphic functions on $E$ with poles at $\alpha = 0$ of order~$j$, so that $c_{1}({\alpha}) = H_{1}$, $c_{2}({\alpha}) = H_{2} - {\nu}^{2} \frac{N(N-1)}{2} {\wp}({\alpha})$, etc.

In what follows the value of $\nu$ is immaterial, so we set it to be equal to $\nu = 1$.
As shown in~\cite{kr-cm}, near $\alpha=0$ the polynomial $R(k,\alpha)$
admits a factorization of the form
\begin{gather*}
R(k,{\alpha}) = \prod_{i=1}^{N}\big( k+a_i\alpha^{-1}+h_i+O(\alpha) \big),
\end{gather*}
with $a_1=1-N$ and $a_i=1$ for $i>1$, for some $h_i\in {\BC}$. This implies that the closure of the affine curve defined by~\eqref{characteristic} is obtained by adding one point $( {\infty},0)$, at which $N-1$ branches are tangent to each other (corresponding to $a_{2}=\dots=a_{N}=1$), and one branch is transverse to them. Thus if we blow up the point $( {\infty}, 0)\in {\BP}^{1} \times E$, we would get a smooth point $P$ corresponding to the first branch, and a point~$p'$ contained in the remaining $N-1$ branches. Thus, generically, after the second blow up we get a compact curve whose projection onto the elliptic curve has~$N$ preimages. We call the marked point the preimage of~$P$ on the second blowup.

Due to the degenerate nature of the residue of $L({\alpha})$ at ${\alpha}=0$ there are only $N$ independent linear parameters in the coefficient functions. An explicit form of these parameters proposed in~\cite{pd} is based on an observation that a polynomial $R(k,\alpha)$ has a unique representation of the form
\begin{gather}\label{phdoker1}
 R(k,\alpha):=f(k+\zeta(\alpha),\alpha),
\end{gather}
where
\begin{gather}\label{ph}
f(p,\alpha)=\frac{1}{\sigma(\alpha)}\ \sigma\left(\alpha+\frac{\partial}{\partial p} \right)H(p)=
\frac{1}{\sigma(\alpha)}\ \sum_{n=0}^N \frac{1}{ n!} \partial_\alpha^{ n} \sigma(\alpha) \frac{\partial^n H}{\partial p^n} ,
\end{gather}
and $H$ is the monic degree $N$ polynomial
\begin{gather}\label{ph1}
 H(p) = p^{N} + \sum_{l=1}^{N} I_{l} p^{N-l},
\end{gather}
whose coefficients ${\bf I} = ( I_{1},\dots,I_{N} )\in {\B} = {\BC}^{N}$ are the integrals of motion of the $N$-particle eCM system.

For generic ${\bf I} \in {\B}$ the corresponding curve $C_{N}$ is smooth of genus $N$. For $\bf I$ belonging to certain loci in ${\B}$ the curves $C_N$ degenerate.
\begin{dfn} We call a smooth genus $g$ algebraic curve \emph{an $N$-particle eCM curve}
if it is a~normalization of a~curve defined by the equation~\eqref{characteristic}.
\end{dfn}

By this definition we always have $g\leq N$. As shown in~\cite{kr-grush},
for fixed $g$, the $N$-particle eCM curves become dense in the moduli space of all smooth genus~$g$ algebraic curves, as $N\to \infty$.

The number $N$ of eCM particles can be expressed through
the differentials ${\rm d}p_{x}$, ${\rm d}p_{y}$ on a~smooth genus~$g$ algebraic curve whose periods satisfy the periodicity constraints
(with general~$\nu$)~\eqref{periodicity1}:
\begin{gather*}
N=\left| \frac{1}{2{\pi} \ii \nu} \operatorname{res}_{P} (p_{x} {\rm d}p_{y} ) \right|.
\end{gather*}
For the eCM curves the multipliers \eqref{abpm} are quite explicit: for $p= (k, {\alpha})$, with ${\zeta} = {\sigma}^{\prime}/{\sigma}$
\begin{gather*}
w_x(p) = {\rm e}^{k-{\zeta}(\alpha) + 2 {\zeta}(\frac 12){\alpha}} , \qquad
w_y(p) = {\rm e}^{{\tau} ( k - {\zeta} (\alpha) ) + 2 {\zeta}(\frac {\tau}2){\alpha}}.
\end{gather*}

\subsubsection{Spectral curves of the turning points of the eCM system} Our next goal is to single out the eCM curves with involution. Notice, that since the Weierstrass $\sigma$-function is odd, the eCM curves defined by the equations~(\ref{phdoker1})--(\ref{ph1}) with $I_{\rm odd}=0$ are invariant under the involution
\begin{gather}\label{eq:k3}
{\tilde\sigma}\colon \ (k,{\alpha}) \mapsto (-k, -{\alpha}),
\end{gather}
which leaves the marked point $P$ fixed. The involution $\tilde\sigma$ descends to the involution of $E$ fixing ${\alpha}=0$ and the half-periods ${\alpha} = {\omega}_{j}$, $j = 1, 2,3$, where
\[
{\omega}_{1} = \frac 12 , \qquad {\omega}_{2} = \frac {\tau}2 , \qquad {\omega}_{3} = \frac {\tau +1}2.
\]
Hence $\tilde\sigma$ induces an involution of the fibers on the eCM curve over the half-periods of $E$.

Therefore, for even $N$ there is at least one additional fixed point of $\tilde\sigma$ among the preimages of $\alpha=0$, besides $P$. The same parity argument shows that for odd $N$ there are at least four fixed points of $\tilde\sigma$ - the marked point and one over each of the non-zero half-periods.

The dimension of the space ${\B}_{\rm tp} \subset {\B}$ of the eCM curves invariant under the symmetry~\eqref{eq:k3}
is~$[N/2]$. For generic values of $I_{2i}$, $i=1,\dots,[N/2],$ the corresponding curves are smooth of genus~$N$ on which the fixed points of $\tilde\sigma$ described above are the only fixed points, i.e., generically, $\tilde\sigma$ has $2$ and $4$ fixed points for~$N$ even and odd, respectively.

Now we are going to show that:
\begin{theo} A spectral curve $C_N$ of the eCM system admits a holomorphic involution $\tilde\sigma$ under which the marked point~$P$ is fixed, i.e., ${\tilde\sigma}(P)=P$, iff it corresponds to a \emph{turning point} $(0,z_i)_{i=1}^N \in {\X}_{N}$.
\end{theo}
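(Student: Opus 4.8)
The plan is to split the equivalence into two implications and, using the remark preceding the theorem, to reduce both to a property of the characteristic polynomial \eqref{characteristic}. First note that an eCM curve carries the involution \eqref{eq:k3} with $\tilde\sigma(P)=P$ if and only if $I_{\rm odd}=0$: the ``if'' direction is exactly the remark preceding the theorem, and for the ``only if'' one writes $R(k,\alpha)=f(k+\zeta(\alpha),\alpha)$ as in \eqref{phdoker1}--\eqref{ph1} and uses that $\sigma$, hence $\zeta$, is odd to obtain $R_{{\bf I}}(-k,-\alpha)=(-1)^{N}R_{{\bf I}^{-}}(k,\alpha)$, where ${\bf I}^{-}=\big((-1)^{l}I_{l}\big)_{l=1}^{N}$; since (generically) a reduced curve determines its monic defining polynomial, $\tilde\sigma$-invariance forces ${\bf I}={\bf I}^{-}$, i.e.\ $I_{\rm odd}=0$. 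It therefore remains to identify the locus $\B_{\rm tp}=\{I_{\rm odd}=0\}$ with the set of spectral curves of turning points $(0,z_i)_{i=1}^{N}\in\X_N$.

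For the easy inclusion, at a turning point the Lax operator \eqref{CMLax} (with $\nu=1$) reduces to the off-diagonal matrix $K(\alpha)_{ij}=\frac{\sigma(\alpha+z_i-z_j)}{\sigma(z_i-z_j)\sigma(\alpha)}(1-\delta_{ij})$, and oddness of $\sigma$ gives at once $K(-\alpha)=-K(\alpha)^{t}$. Hence $R(k,-\alpha)=\det(k-K(-\alpha))=\det(k+K(\alpha)^{t})=\det(k+K(\alpha))=(-1)^{N}R(-k,\alpha)$, so $R(-k,-\alpha)=(-1)^{N}R(k,\alpha)$ and the curve is $\tilde\sigma$-invariant. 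Since $k\alpha$ is $\tilde\sigma$-invariant, $\tilde\sigma$ preserves the limit of $k\alpha$ along each branch over $(\infty,0)$, hence fixes the unique branch with $a_{1}=1-N$, which is the one carrying $P$; thus $\tilde\sigma(P)=P$, and the spectral curve of any turning point lies in $\B_{\rm tp}$.

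For the converse I would take $C=C_{N}$ smooth of genus $N$ in $\B_{\rm tp}$ (the singular case going along the lines indicated at the start of the section) and reconstruct it as an eCM spectral curve: by the construction of \cite{kr-cm}, the data $(C,P,D)$ with $D$ a generic admissible divisor of degree $N$ produce a point $(\rho,z)\in\X_N$, an eigenfunction $\psi$ of its Lax operator $L(\alpha)$, and the dual (left-eigen)function $\psi^{+}$, whose pole divisor $D^{+}$ is determined by a linear equivalence $D+D^{+}\sim\K+(\text{the divisor of the eigenvector pairing, supported at }P\text{ and over }\alpha=0)$. I would then choose $D$ so that $\tilde\sigma(D)=D^{+}$; such divisors form a torsor over the Prym of $(C,\tilde\sigma)$, which is nonempty and positive-dimensional, so a generic one is admissible. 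For this $D$ the function $\psi\circ\tilde\sigma$ satisfies exactly the analytic conditions characterizing $\psi^{+}$, so by uniqueness of the Baker--Akhiezer function $\psi^{+}=\psi\circ\tilde\sigma$; letting $k$ range over a fibre over $\alpha$ this forces $L(\alpha)^{t}=-L(-\alpha)$. Writing $L(\alpha)=\mathrm{diag}(\rho_{i})+K(\alpha)$ with $K$ the off-diagonal part and using $K(-\alpha)=-K(\alpha)^{t}$ from the easy inclusion, we get $\mathrm{diag}(\rho_{i})=-\mathrm{diag}(\rho_{i})$, i.e.\ $\rho\equiv0$. Hence $(0,z)$ is a turning point with spectral curve $C$, the $z_i$ being distinct for generic $D$.

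The main obstacle is the middle of the last paragraph: pinning down precisely the linear-equivalence class $\K+(\cdots)$ cut out by the eigenvector pairing --- in particular its dependence on the parity of $N$, which controls whether $\tilde\sigma$ has two or four fixed points --- and verifying that the Prym torsor of divisors with $\tilde\sigma(D)=D^{+}$ genuinely meets the open admissibility locus, so that the reconstruction lands in $\X_N$. Once that is in place, the rest reduces to the two direct identities $K(-\alpha)=-K(\alpha)^{t}$, $L_{0}(-\alpha)=-L_{0}(\alpha)^{t}$ and the standard uniqueness of the Baker--Akhiezer function.
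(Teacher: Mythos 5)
Your reduction of the theorem to the identification $\B_{\rm tp}=\{I_{\rm odd}=0\}=\{\text{spectral curves of turning points}\}$, and your proof of the easy inclusion via $L(-{\alpha})=-L^{t}({\alpha})$, agree with the paper (this is exactly how the \emph{if} part is proved there). For the hard direction you take a genuinely different route. The paper never reconstructs the Lax matrix: it writes the Baker--Akhiezer function in theta functions \eqref{eq:baf1}, observes that the condition \eqref{Zodd} (your $D+\tilde\sigma(D)\sim \K+2P$, phrased as the existence of a differential ${\rm d}\Omega_*$ with only pole at $P$ and zero divisor $D+\tilde\sigma(D)$) makes $(\pa_z\psi(z,0;p))\,\psi(z,0;\tilde\sigma(p))\,{\rm d}\Omega_*$ holomorphic away from $P$, and computes the vanishing residue at $P$ to get \eqref{resPP}, hence the identity \eqref{newid}, i.e.\ $\pa_t u|_{t=0}=0$. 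Since the $t=t_2$ dynamics of the poles $z_i$ of the elliptic KP potential is the eCM dynamics, this forces all velocities, hence all $\rho_i$, to vanish. The paper's argument is thus purely local at $P$ and never needs the pole divisor $D^{+}$ of the dual eigenvector; yours is global and does.

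That is precisely where your proposal has a real gap, which you flag yourself. To run the uniqueness argument $\psi^{+}=\psi\circ\tilde\sigma$ you must (i) actually compute the linear equivalence class of $D+D^{+}$ for the eCM eigenvector/dual-eigenvector pair --- a nontrivial computation because of the degenerate pole of $L({\alpha})$ at ${\alpha}=0$ ($N-1$ mutually tangent branches plus one transverse one), and its answer is what decides whether divisors with $\tilde\sigma(D)=D^{+}$ exist at all; (ii) verify that the resulting torsor over the Prym meets the open locus of divisors producing distinct $z_i$; and (iii) match not only pole divisors but also the quasi-periodicity/essential-singularity data of $\psi$ and $\psi^{+}$ over ${\alpha}=0$ before invoking uniqueness of the Baker--Akhiezer function. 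Item (i) is exactly the information that the paper's local residue computation at $P$ is designed to avoid, so it cannot be waved away; until (i)--(iii) are supplied the converse is not proved. A smaller point: your opening step tacitly takes $\tilde\sigma$ to be the specific involution \eqref{eq:k3}, while the statement allows an arbitrary holomorphic involution fixing $P$; both you and the paper in fact use that the local coordinate is $\tilde\sigma$-odd, and this hypothesis deserves an explicit sentence.
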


The \emph{if} part of the statement of the theorem is obvious. Indeed, if ${\rho}_i=0$, for all $i=1,\dots, N,$ then the Lax operator \eqref{CMLax} obeys (cf.\ \cite[equations~(7.53) and Section~7.5]{nikvas}):
\begin{gather*}
L(-{\alpha}) = - L^{t}({\alpha})
\end{gather*}
implying that the corresponding spectral curve is invariant under the symmetry~\eqref{eq:k3}.
The \emph{only if} part of the theorem is a corollary of the following new identity for the Riemann theta-function corresponding to a smooth genus $g$ algebraic curve with at least one point fixed, which is an algebraic-geometric incarnation of the characterization of the CKP tau function in terms of turning points of the KP hierarchy (see \cite[Theorem~2.2]{kz}).

\begin{theo} Let $\G$ be a smooth genus~$g$ algebraic curve with involution $\tilde\sigma$ under which the marked point~$P$ is fixed, $P={\tilde\sigma}(P)$. Then for any point $Z_0$ of the Jacobian such that
\begin{gather}\label{Zodd}
Z_0+{\tilde\sigma}(Z_0)=K+2A(P)\in \operatorname{Jac}(\Gamma)
\end{gather}
$(K$ is the canonical class$)$ the identity
\begin{gather}\label{newid}
{\pa}_t {\pa}_z \ln \theta (Uz+Vt+Z_0|B)|_{t=0}=0
\end{gather}
holds.
Here $\theta (Z|B)$ is the Riemann theta-function defined by the matrix $B$ of $b$-periods of a basis of normalized holomorphic differentials on $\G$; the vectors $U$ and $V$ have the coordinates
\begin{gather*}
U^j=\frac 1{2\pi i}\oint_{b_j} {\rm d}\Omega_2,\qquad V^j=\frac 1{2\pi i}\oint_{b_j} {\rm d}\Omega_3,
\end{gather*}
where ${\rm d}\Omega_2$ and ${\rm d}\Omega_3$ are the normalized meromorphic differentials on $\Gamma$ with the only pole at $P$ of the second and the third order, respectively.
\end{theo}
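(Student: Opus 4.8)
The plan is to recognize $\theta(Uz+Vt+Z_0\,|\,B)$, up to an inessential exponential factor, as the $\tau$-function of the KP solution attached to the algebro-geometric data $(\Gamma,P,[k^{-1}]_1,Z_0)$ with $z$ and $t$ in the roles of the first two KP times (so that the essential singularity at $P$ is ${\rm e}^{kz+k^2t}$, and $z\leftrightarrow{\rm d}\Omega_2$, $t\leftrightarrow{\rm d}\Omega_3$ as in the statement), and then to deduce \eqref{newid} from the fact that under the hypothesis \eqref{Zodd} the corresponding Baker--Akhiezer function is \emph{even} in $t$.

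First I would set up the one-marked-point Baker--Akhiezer function $\psi(z,t;p)$, $p\in\Gamma$, with essential singularity ${\rm e}^{kz+k^2t}\big(1+\xi_1(z,t)k^{-1}+\cdots\big)$ at $P$ (leading coefficient $1$), holomorphic on $\Gamma\setminus P$ away from a generic effective divisor $D$ of degree $g$ whose class is $Z_0$ (up to the standard shift by the vector of Riemann constants). By the Its--Matveev formula $\psi$ is written through $\theta\big(A(p)+Uz+Vt+Z_0\,|\,B\big)$; expanding at $P$ and using the Riemann bilinear relations to identify the period vector $U$ with the vector of first Taylor coefficients of the normalized holomorphic differentials at $P$, one obtains $\partial_t\xi_1(z,t)=\partial_t\partial_z\ln\theta(Uz+Vt+Z_0\,|\,B)+b$, where $b$ is the constant term in the expansion of ${\rm d}\Omega_3$ in the local coordinate $k^{-1}$ at $P$. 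Choosing the homology basis adapted to $\tilde\sigma$, the differential ${\rm d}\Omega_3$ is $\tilde\sigma$-invariant (its prescribed singular part ${\rm d}(k^{2})$ is, and its $a$-periods vanish), and since $k^{-1}$ is $\tilde\sigma$-odd the parity of a $\tilde\sigma$-invariant differential forces $b=0$. Thus \eqref{newid} is equivalent to the single local statement $\partial_t\xi_1|_{t=0}=0$.

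Next I would bring in the dual Baker--Akhiezer function $\psi^{*}(z,t;p)$ defined, exactly as in \eqref{dualdef}, by $\operatorname{res}_P\big(\psi^{*}\partial_z^{s}\psi\,{\rm d}k/k\big)=-\delta_{s,0}$: it has essential singularity ${\rm e}^{-kz-k^{2}t}\big(1+\xi_1^{*}k^{-1}+\cdots\big)$ at $P$, pole divisor $D^{*}$ with $D+D^{*}\sim K+2P$, and, reading off the $s=1$ residue condition precisely as in \eqref{s1}, first coefficient $\xi_1^{*}=-\xi_1$. Now form $\widetilde\psi(z,t;p):=\psi(z,-t;\tilde\sigma(p))$. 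Because $\tilde\sigma$ fixes $P$ and $k(\tilde\sigma(p))=-k(p)$, the function $\widetilde\psi$ has essential singularity ${\rm e}^{-kz-k^{2}t}\big(1-\xi_1(z,-t)k^{-1}+\cdots\big)$ at $P$ and pole divisor $\tilde\sigma(D)$. The hypothesis \eqref{Zodd} is precisely the statement that $\tilde\sigma_{*}[D]=[K]+2A(P)-[D]=[D^{*}]$ (the Riemann-constant shifts relating $Z_0$ to $[D]$ are absorbed into the $+2A(P)$ because $\tilde\sigma(P)=P$), so by genericity—a generic degree-$g$ class contains a unique effective divisor—one gets $\tilde\sigma(D)=D^{*}$ as divisors. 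Hence $\widetilde\psi$ and $\psi^{*}$ have the same singularity type and the same pole divisor; by the uniqueness of the Baker--Akhiezer function they are proportional, and comparing leading coefficients ($1$ and $1$) gives $\widetilde\psi=\psi^{*}$. Comparing the $k^{-1}$-coefficients then yields $-\xi_1(z,-t)=\xi_1^{*}(z,t)=-\xi_1(z,t)$, i.e.\ $\xi_1$ is even in $t$; in particular $\partial_t\xi_1|_{t=0}=0$, which by the previous paragraph is \eqref{newid}.

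The conceptual heart of the argument—that the involution $\tilde\sigma$ together with the self-duality $\psi\mapsto\psi^{*}$ forces $\xi_1$ to be even in $t$—is short; it is the algebro-geometric shadow of the characterization of CKP $\tau$-functions by turning points in \cite{kz}. The two routine but delicate steps are the bookkeeping ones: verifying that \eqref{Zodd} matches the divisor identity $\tilde\sigma(D)=D^{*}$, which requires tracking how the vector of Riemann constants interacts with $\tilde\sigma$ and with the choice $\tilde\sigma(P)=P$; and the vanishing of the additive constant $b$, which is where the involution-adapted homology basis is used. I expect the first of these—pinning down the precise dictionary between the Jacobian condition \eqref{Zodd} and the relation between the pole divisors of $\psi$ and $\psi^{*}$—to be the main obstacle, since everything downstream rests on it.
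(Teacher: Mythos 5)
Your proof is correct and rests on the same geometric input as the paper's --- the equivalence of \eqref{Zodd} with the divisor condition $D+{\tilde\sigma}(D)\sim K+2P$, the associated differential ${\rm d}\Omega_*$ with a single double pole at $P$, and the composition of the Baker--Akhiezer function with ${\tilde\sigma}$ --- but the final mechanism is genuinely different. The paper works only at $t=0$: it observes that $(\partial_z\psi(z,0;p))\,\psi(z,0;{\tilde\sigma}(p))\,{\rm d}\Omega_*$ is meromorphic with its only pole at $P$, extracts from the vanishing of its residue the identity $2\xi_2-\xi_1^2+\partial_z\xi_1+c_1=0$ at $t=0$, and then differentiates in $z$ and feeds in the $s=1$ recursion $\partial_t\xi_1=\partial_z\bigl(2\xi_2+\partial_z\xi_1-\xi_1^2\bigr)$ to get $\partial_t\xi_1|_{t=0}=0$. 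You instead work at all $t$, identify $\psi(z,-t;{\tilde\sigma}(p))$ with the dual Baker--Akhiezer function $\psi^*(z,t;p)$ by uniqueness, and read off that $\xi_1$ is \emph{even} in $t$ --- a slightly stronger statement that makes the vanishing of $\partial_t\xi_1|_{t=0}$ immediate and dispenses with the residue identity and the KP recursion altogether. What your route buys is conceptual transparency (it is exactly the CKP/turning-point duality of Krichever--Zabrodin made algebro-geometric); what the paper's buys is self-containedness, since it never needs the existence and characterization of the global dual. You also explicitly dispose of the additive constant $b$ coming from the $k^{-1}$-coefficient of $\Omega_3$ by the parity of ${\rm d}\Omega_3$ under ${\tilde\sigma}$ in an adapted homology basis; the paper passes from $\partial_t\xi_1|_{t=0}=0$ to \eqref{newid} without comment, so this is a point in your favour rather than a gap.

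One small imprecision worth fixing: for the one-marked-point (KP) situation the dual cannot be normalized by $\operatorname{res}_P\bigl(\psi^*\partial_z^s\psi\bigr)\,{\rm d}k/k=-\delta_{s,0}$ as in \eqref{dualdef}, since that pairing would correspond to a differential with a \emph{simple} pole at $P$ and hence to $D+D^*\sim K+P$, which has the wrong (odd) degree. The correct pairing is $\operatorname{res}_P\bigl(\psi^*\partial_z^s\psi\bigr)\,{\rm d}\Omega_*=0$ with the double-pole differential ${\rm d}\Omega_*={\rm d}k\bigl(1+O\bigl(k^{-2}\bigr)\bigr)$; the relation $\xi_1^*=-\xi_1$ then comes from the $s=0$ condition rather than the $s=1$ one. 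This changes nothing downstream, since the conclusion $\xi_1^*=-\xi_1$ and the divisor relation $D+D^*\sim K+2P$ are exactly what your argument uses.
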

\begin{proof} Recall that a smooth genus $g$ algebraic curve $\G$ with fixed local coordinate in the neighborhood of a point $P\in \Gamma$, $k^{-1}(P)=0$ and a~generic effective degree~$g$ divisor $D={\gamma}_{1}+ \cdots +{\gamma}_{g}$ defines a unique function $\phi(z,t;p)$ with the following analytic properties with respect to $p\in \Gamma$ (with~$z$,~$t$ considered as fixed parameters):
\begin{enumerate}\itemsep=0pt
\item[$1^0$.] Outside $P$ the singularities of $\phi$ are poles at the divisor $D$.
\item[$2^0$.] In the neighborhood of $P$ the function $\phi$ has the form
\begin{gather*}
\phi(z,t; p)={\rm e}^{k z+k^2 t}\left(1+\sum_{s=1}^{\infty}\xi_{s}(z,t) k^{-s}\right) , \qquad k=k(p). 
\end{gather*}
\end{enumerate}
The explicit formula for $\phi$ in theta functions is \cite{kr1}:
\begin{gather}\label{eq:baf1}
\phi(z,t; p)=\frac{\theta({A}(p)+z U+tV+Z|B) \theta({Z|B})}
{\theta(z U+tV+{Z}|B) \theta({A}(p)+Z|B)} {\rm e}^{z {\Omega}_{2}(p)+t {\Omega}_{3}(p)}.
\end{gather}

\begin{rem} In~\cite{kr1} the basic BA function $\phi( {\bf t};p)$ was defined as the function of an infinite set of variables ${\bf t}=(t_1,t_2,\dots)$. Here we only use its dependence on the first two times $z=t_1$, $t=t_2$, while setting to zero the rest: $t_i=0$, $i>2$. The subspace of $Z$ for which~\eqref{Zodd} holds is one of the connected components of the odd part of the Jacobian. It is invariant under the flows corresponding to the odd times $t_{2j+1}$ of the KP hierarchy.
\end{rem}

The function $\phi$ satisfies the linear equation \cite{kr1}:
\begin{gather}\label{laxkp1}
\big( {\pa}_{t}-{\pa}^{2}_{z} +u (z,t) \big) \phi(z,t; p)=0
\end{gather}
with
\begin{gather*}
u=2\pa^2_z \ln \theta(Uz+Vt+Z|B)+2c,
\end{gather*}
where $c$ is a constant equal to the first coefficient of the Laurent expansion ${\rm d}\Omega_2={\rm d}k\big(1+ck^{-2}+\cdots\big)$.

The substitution of the equation~\eqref{eq:baf1} into the equation~\eqref{laxkp1} gives the sequence of equations
\begin{gather}\label{eqxi}
\pa_{t}\xi_{s}-2\pa_{z} \xi_{s+1}-\pa^2_z \xi_s+u\xi_s=0 , \qquad s = 0, 1 , \dots
\end{gather}
of which the first, $s=0$, gives an expression of $u$ in terms of $\xi_1$: $u=2\pa_z\xi_1$.

Suppose now that the curve $\Gamma$ admits an involution $\tilde\sigma$ under which $P$ is fixed and $k$ is odd, $k(p)=-k({\tilde\sigma}(p))$. Suppose also that the divisor $D+{\tilde\sigma}(D)$ is the zero-divisor of a meromorphic differential ${\rm d}\Omega_{*}$ with the only pole at~$P$. The latter is equivalent to~\eqref{Zodd}. Then the differential
\[
{\rm d}\widetilde \Omega_*:=(\pa_z{\psi}(z,0; p)) {\psi}(z,0; {\tilde\sigma}(p)){\rm d}\Omega_*
\]
is a meromorphic differential on $\Gamma$ with the only pole at~$P$. Hence, it has no residue at~$P$. Computing this residue in terms of the coefficients of the expansion~(\ref{eq:baf1}) we get
\begin{gather}\label{resPP}
2\xi_2(z,0)-\xi_1^2(z,0)+\pa_z\xi_1(z,0)+c_1=0,
\end{gather}
where $c_1$ is a constant defined by the Laurent expansion of ${\rm d}\Omega_*$ at $P$.

Taking the $z$-derivative of (\ref{resPP}) and using (\ref{eqxi}) with $s=1$ we get the equation
\begin{gather*}
0=\pa_{z} \big( 2\xi_2(z,0)-\xi^2_1(z,0)+\pa_z\xi_1(z,0) \big) = {\pa}_{t} {\xi}_{1} |_{t=0},
\end{gather*}
which is equivalent to the equation~\eqref{newid}. The theorem is proved.
\end{proof}

We are now ready to present the full set of constraints on the geometric data above which produce
the solutions of the double-periodic $O(2n+1)$-model.

\begin{theo} \label{prop} The potential $u(z,\zb)$ defined by the data in Lemma~{\rm \ref{lm:main}} is double-periodic if and only if:
\begin{enumerate}\itemsep=0pt
\item[$(i)$] ${\G}_{\pm}$ is a normalization of the spectral curve of \emph{a turning point} of the eCM system;
\item[$(ii)$] the matrix $G$ satisfies the equation
\begin{gather}\label{constG}
G=W_{\a}^{+} G W_{\a}^{-},
\end{gather}
where $W_\a^{\pm}$ are two diagonal matrices
\begin{gather}\label{Wconst}
W_{\a}^{\pm} = \operatorname{diag} \big( w_\a^{\pm}\big(p_{\pm}^{i}\big) \big).
\end{gather}
\end{enumerate}
\end{theo}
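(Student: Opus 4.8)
\emph{The plan.} It has already been recorded just above that the potential $u$ produced by Lemma~\ref{lm:main} is double-periodic precisely when the pair $\psi=(\psi_+,\psi_-)$ is Bloch, i.e.\ satisfies~\eqref{blochpm} for some multipliers $w_\a^\pm$ that are honest (single-valued) functions on $\Gamma_\pm$. So the whole task is to decide when such multipliers can exist, and this splits into a condition on each curve $\Gamma_\pm$ in isolation (which will produce $(i)$) and a condition on how the two pieces are glued by $G$ (which will produce $(ii)$). I would prove both implications in that order.

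\emph{Necessity.} Assume~\eqref{blochpm}. For generic fixed $(z,\zb)$ the quotient $w_\a^\pm(p)=\psi_\pm(z+\omega_\a,\zb+\bar\omega_\a;p)/\psi_\pm(z,\zb;p)$ is $(z,\zb)$-independent, whereas the zeros and poles of numerator and denominator (living on $D_\pm$ and on the moving zero divisor of $\psi_\pm$) can only cancel; hence $w_\a^\pm$ is holomorphic and nowhere vanishing off $P_\pm$, with an essential singularity there whose leading exponent is read off from~\eqref{eq:psik1} ($\mathrm e^{k_\pm}$ for $\a=x$, and $\mathrm e^{\tau k_+}$ resp. $\mathrm e^{\bar\tau k_-}$ for $\a=y$). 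Therefore $\mathrm d\log w_\a^\pm$ is a meromorphic differential with a single double pole at $P_\pm$, no residue, the prescribed leading coefficient and periods in $2\pi\ii\BZ$; by the uniqueness of $\mathrm dp_\a^\pm$ this forces $\mathrm d\log w_\a^\pm=\mathrm dp_\a^\pm$, which is exactly the integrality constraint~\eqref{periodicity1}. By \cite{kr-grush} this means $\Gamma_\pm$ is the normalization of an $N_\pm$-particle eCM spectral curve; since the data of Lemma~\ref{lm:main} already equip $\Gamma_\pm$ with the involution $\sigma$ fixing $P_\pm$ (with $\sigma$-odd coordinate $k_\pm$), the induced involution of that eCM curve fixes its marked point, and the theorem characterising eCM curves with a $P$-fixing involution then identifies it as the spectral curve of a turning point, which is $(i)$. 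For $(ii)$, I would evaluate~\eqref{blochpm} at the gluing points $p=p_\pm^i$, multiply by $r_\pm^i$ and use~\eqref{y}, \eqref{Wconst} to get $\mathbf y_\pm(z+\omega_\a,\zb+\bar\omega_\a)=W_\a^\pm\,\mathbf y_\pm(z,\zb)$; substituting into~\eqref{gluing} written both at $(z,\zb)$ and at the shifted argument gives $(W_\a^+G-GW_\a^-)\,\mathbf y_-(z,\zb)=0$ identically, so by genericity (the curve $(z,\zb)\mapsto\mathbf y_-(z,\zb)$ is not contained in a hyperplane) $W_\a^+G=GW_\a^-$. Finally, $\mathrm dp_\a^\pm$ is $\sigma$-odd (again by its uniqueness, since $-\sigma^*\mathrm dp_\a^\pm$ has the same defining data), and $p_\pm^i$ is $\sigma$-fixed, so $w_\a^\pm(p_\pm^i)^2=1$; thus $(W_\a^\pm)^{-1}=W_\a^\pm$ and $W_\a^+G=GW_\a^-$ rewrites as~\eqref{constG}.

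\emph{Sufficiency.} Conversely, assume $(i)$ and $(ii)$. Then \cite{kr-grush} gives~\eqref{periodicity1}, and the oddness of $\mathrm dp_\a^\pm$ follows from its uniqueness as before; I would fix the additive constant in the Abelian integral $p_\a^\pm$ so that it has no constant term near $P_\pm$, set $w_\a^\pm=\exp\!\big(\int^p\mathrm dp_\a^\pm\big)$ as in~\eqref{abpm} --- a single-valued function, holomorphic and non-vanishing off $P_\pm$, with the expected leading essential singularity and with $w_\a^\pm(p_\pm^i)=\pm1$ --- and define $\tilde\psi_\pm(z,\zb;p):=\psi_\pm(z+\omega_\a,\zb+\bar\omega_\a;p)/w_\a^\pm(p)$. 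I would then verify that $\tilde\psi_\pm$ satisfies exactly the analytic requirements $1^0$--$2^0$ of Lemma~\ref{lm:main}: poles only on $D_\pm$ (as $w_\a^\pm$ has no zeros off $P_\pm$), and the normalized essential singularity~\eqref{eq:psik1} with $\tilde\xi_0^\pm=1$ (the shift of the exponents $k_\pm z$, $k_\pm\zb$ under $z\mapsto z+\omega_\a$, $\zb\mapsto\zb+\bar\omega_\a$ is exactly cancelled by the leading behaviour of $w_\a^\pm$, and the residue identity~\eqref{ressum1} re-confirms $\tilde\xi_0^+=\tilde\xi_0^-$). Its gluing vectors are $\tilde{\mathbf y}_\pm=(W_\a^\pm)^{-1}\mathbf y_\pm(z+\omega_\a,\zb+\bar\omega_\a)$, so~\eqref{gluing} at the shifted argument yields $\tilde{\mathbf y}_+=(W_\a^+)^{-1}GW_\a^-\,\tilde{\mathbf y}_-=W_\a^+GW_\a^-\,\tilde{\mathbf y}_-$, which by~\eqref{constG} equals $G\,\tilde{\mathbf y}_-$. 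Hence $(\tilde\psi_+,\tilde\psi_-)$ meets every defining condition of Lemma~\ref{lm:main} with the same matrix $G$, so $\tilde\psi_\pm=\psi_\pm$ by the uniqueness there; this is the Bloch property~\eqref{blochpm}, and therefore $u$ is double-periodic.

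\emph{Where the difficulty lies.} I expect the real work to be careful bookkeeping rather than any deep new idea: making the essential-singularity matching at $P_\pm$ precise so that no stray constant survives in $\tilde\xi_0^\pm$ (which is why the integral $p_\a^\pm$ must be normalized with no constant term at $P_\pm$); confirming the $\sigma$-oddness of $\mathrm dp_\a^\pm$, hence $w_\a^\pm(p_\pm^i)=\pm1$, which is exactly the input that turns $W_\a^+G=GW_\a^-$ into the symmetric form~\eqref{constG}; and justifying the passage from the vector identity $(W_\a^+G-GW_\a^-)\mathbf y_-\equiv0$ to the matrix identity, i.e.\ the non-degeneracy of the map $(z,\zb)\mapsto\mathbf y_-(z,\zb)$, which is where the genericity hypothesis of the theorem is genuinely used.
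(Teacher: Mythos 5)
Your proposal is correct and follows essentially the same route as the paper, whose own proof is only the two-sentence remark that part $(i)$ ``was proved above'' (via the periodicity constraints~\eqref{periodicity1}, the eCM identification of \cite{kr-grush}, and the turning-point/involution theorem) and that part $(ii)$ follows from compatibility of the monodromy equations~\eqref{blochpm} with the gluing conditions~\eqref{gluing}. Your write-up simply fills in the details the paper leaves implicit, including the correct observation that $w_\a^\pm\big(p_\pm^i\big)=\pm1$ (noted in the paper's remark after the theorem) is what turns $W_\a^+G=GW_\a^-$ into the stated form~\eqref{constG}.
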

The part $(i)$ of the theorem statement was proved above. The part~$(ii)$ easily follow from the compatibility of the monodromy equations~\eqref{blochpm} and the gluing conditions~\eqref{gluing}.

\begin{rem} Note, that under the involution ${\tilde\sigma}$ the functions $w_{\a}^\pm$ transform to $\big(w_{\a}^\pm\big)^{-1}$. Therefore, at the branch points $w_{\a}^{\pm}$ equal $1$ or~$-1$. Thus, if the condition~$(i)$ is satisfied then the corresponding potential $u(z, {\zb})$ is double periodic with periods~$2$ and~$2\tau$.
\end{rem}

\subsection[Example: O(3) sigma model]{Example: $\boldsymbol{O(3)}$ sigma model}

Let $\G$ be the spectral curve corresponding to a turning point of $N$-particle eCM system.
As we mentioned above, for $N=2{\ell}$ and generic values of the $I_{\rm even}$ (and $I_{\rm odd}=0$) the curve~${\G}$ is smooth of genus $g=N$ and has only two fixed points of the involution \eqref{eq:k3}.
Our construction of solutions for the $O(3)$ sigma model requires the curves with involution having at least $4$ fixed points.

Our next goal is to identify the singular $2{\ell}$-particle eCM curves whose normalization has four fixed points of the involution~\eqref{eq:k3}. Consider the function $F(p,{\alpha}):={\sigma}(z)f(p,{\alpha})$ with $f$ defined by the equation~\eqref{ph}. For even~$N$ it is odd with respect to the involution~\eqref{eq:k3}. The equations
\begin{gather}\label{Fconstr}
\pa_\alpha F(0,0)=0,\qquad \pa_p F (0,0)=0
\end{gather}
cut out an $\left( {\ell}-2 \right)$-dimensional linear subspace in the space of parameters $I_{\rm even}$. If the equations~\eqref{Fconstr} hold, then the expansion of $F$ near the point $(p, {\alpha} ) =(0,0)$ has the form:
\begin{gather*}
F=b_1 p^3+b_2p^2 \alpha+b_3p \alpha^2+b_4 \alpha^3+\cdots,
\end{gather*}
where $b_i=b_i(I)$ are some linear functions of the parameters $I_{\rm even}$. For generic values of these parameters the equation
\[ b_1 g^3+b_2g^2+b_3 g +b_4=0\]
has three distinct roots $g_i$, $i=1,2,3$. That implies that the equation $F(p,\alpha)=0$ has three distinct solutions of the form
\begin{gather*}
p=g_i\alpha +\sum_{s=1}^{\infty}g_{i,s}\alpha^{(2s+1)}, \qquad i=1,2,3.
\end{gather*}
In other words: under the constraints \eqref{Fconstr} the equation $R(k,\alpha)$ has three smooth branches given by
\[
k=-\zeta (\alpha)+p(\alpha),
\]
which represent the expansion of $k$ near three points on the normalization of the singular spectral curve. Since $g_i$ are distinct the branches are invariant under the involution. The normalization of the singular spectral curve has genus $(2\ell-3)$. The corresponding quotient-curve is of genus~$(\ell-2)$.

By Theorems~\ref{thm:main} and~\ref{prop} a pair of curves with involution having four fixed points on each of them, a pair of admissible divisors and an orthogonal $3\times 3$ matrix~$G$ gives a solution to~$O(3)$ sigma-model: the equations~\eqref{constG} are trivially
satisfied, since the matrices $W_{\a}^{\pm}$ are unit matrices as all the branch points are among the preimages of one point $\alpha=0$.
Our next goal is to present this solution to~$O(3)$ model explicitly in theta functions.

On $\G$ the corresponding BA function has $g+1$ poles at the points $\g_1,\dots,\g_{g+1}$ of the admissible divisor~$D$, i.e., the divisor such that~$D$ and~$\sigma (D)$ are preimages of the zeros of a differential ${\rm d}\Omega$ on the factor-curve $\Gamma/\sigma$ with simple poles at the marked point $P$ and three other points~$p_1$,~$p_2$,~$p_3$.

The space of the BA function, i.e., the functions having exponential singularity at $P$ with $g+1$ poles is two-dimensional. Hence on $\G$ the Baker--Akhiezer function in Lemma \ref{lm:main} is a linear combination
\begin{gather}\label{BA0}
{\psi}(z,{\zb}; p) = {\phi}_{1}(z,p) + c(z,\zb) {\phi}_{2}(z,p)
\end{gather}
of two basic functions. The first one equals
\begin{gather}
\phi_{1} = {\rm e}^{z {\Omega}_{2}(p)} \frac{{\theta}(A(p)-A(p_3)-Z) \theta(A(\gamma_1)+Z) \theta(A(\gamma_2)+Z)}
{{\theta}(A(p_3)+Z) \theta(A(p)-A(\gamma_1)-Z) {\theta}(A(p)-A(\gamma_{2})-Z)}\nonumber \\
\hphantom{\phi_{1} =}{} \times \frac{\theta(A(p)+A(p_3)-A(\gamma_1)-A(\gamma_2)+ z U -Z)}
{\theta(A(p_3)-A(\gamma_1)-A(\gamma_2)+ zU -Z)} ,\label{BAapr1}
\end{gather}
with
\begin{gather*}
Z=\sum_{s=3}^{g+1} A({\gamma}_{s})-{\CalK}.
\end{gather*}
${\phi}_{1}$ is the unique function with exponential singularity at the marked point $P$ with poles at $D$ and zero at $p_3$, normalized so that its regular factor at~$P$ equals~$1$.

The function ${\phi}_2$ is defined by the similar conditions: the poles are at~$D$, its regular factor at~$P$ equals~$0$ and ${\phi}_{2} (z,p_3)=1$:
\begin{gather}
\phi_2 = {\rm e}^{z ( {\Omega}_{2}(p)-{\Omega}_{2}(p_3) )}
\frac{\theta(A(p)-Z) \theta(A(p_3)-A(\gamma_1)-Z) \theta(A(p_3)-A(\gamma_2)-Z)}
{\theta(A(p_3)-Z) \theta(A(p)-A({\gamma}_{1})-Z) \theta({A}(p)-A({\gamma}_{2})-Z)} \nonumber\\
\hphantom{\phi_2 =}{} \times \frac{{\theta}(A(p)-A({\gamma}_{1})-A({\gamma}_{2})+ zU-Z)}
{{\theta}(A(p_3)-A(\gamma_1)-A(\gamma_2)+ zU -Z)} .\label{BAapr2}
\end{gather}
Denote by
\begin{gather}
f_1(z)=\phi_1(z,p_1),\qquad f_2(z)=\phi_1(z,p_2),\nonumber\\
g_1(z)=\phi_2(z,p_1),\qquad g_2(z)=\phi_2(z,p_2)\label{fg-functions}
\end{gather}
the values of the basic functions at the points $p_1$ and $p_2$
(their values at $p_3$ are $0$ and $1$, respectively).
The vanishing of the sum of the residues of matrix $\Vert {\phi}_{a}\phi_{b}^{\tilde\sigma} {\rm d}\Omega \Vert$, for $a,b=1,2$ implies
\begin{gather}\label{fg1}
r^2_1 f_1^2 + r^2_2 f_2^2 = 1 , \\
\label{fg2}
r^2_1g_1^2+r^2_2g_2^2 = - r^2_{3} ,\\
\label{fg3}
r^2_1f_1g_1+r^2_2f_2g_2 = 0 ,
\end{gather}
where{\samepage
\begin{gather*}
r_i^2=-\operatorname{res}_{p_i} {\rm d}\Omega , \qquad i=1,2,3 .
\end{gather*}
Also from the residue theorem we have $1=r_1^2+r_2^2+r_3^2$.}

The equations~(\ref{fg1})--(\ref{fg3}) imply
\begin{gather}\label{g-to-f}
g_1= \frac{h}{r^2_1} f_2 ,\qquad g_2=-\frac{h}{r^2_2} f_1, \qquad h=\sqrt{-1} r_{1}r_{2}r_3.
\end{gather}
From \eqref{g-to-f} it follows that the values ${\psi}_j = {\psi} ( z,\zb; p_{j} )$, for $j=1,2,3$ are equal to
\begin{gather}
{\psi}_{1} = f_{1} + c \frac{h}{r^2_1} f_2 , \qquad {\psi}_{2} = f_{2} - c \frac{h}{r^2_2} f_1 , \qquad {\psi}_{3} = c .
\label{psi123}
\end{gather}
It is easy to verify, using (\ref{psi123}), (\ref{eq:ccb}), that the vector
${\bf y} = ( y_{j} = r_{j} {\psi}_{j} )_{j=1}^{3}$, obeys, first
\begin{gather*}
 y_{1}^{2} + y_{2}^{2} + y_{3}^{2} = 1 ,
 \end{gather*}
and second
\begin{gather}\label{x-w}
v := \frac{y_{1} + {\ii} y_{2}}{1+y_{3}}={r_1} f_1 (z) + \ii r_2 f_2 (z).
\end{gather}
The functions $f_1(z)$ and $f_2(z)$ are meromorphic functions of $z$, so is $v=v(z)$.

\begin{rem}It is easy to verify that the ansatz
\begin{gather*}
q^{1} + {\ii} q^{2} = \frac{2 v}{1+v{\bar v}} , \qquad q^{1} - {\ii} q^{2} = \frac{2 {\bar v}}{1+v{\bar v}} , \qquad
q^{3} = \frac{1- v{\bar v}}{1+v{\bar v}}
\end{gather*}
with ${\pa}_{\zb} v = {\pa}_{z} {\bar v} = 0$, obeys \eqref{conformal}, i.e., provides a $w_{\infty}$-harmonic map.
\end{rem}
The function $c(z,{\zb})$ in the equation~\eqref{BA0} is found from the equation~\eqref{gluing}. Explicitly, this is the condition that the values ${\psi}_j = {\psi} ( z,{\zb}; \omega_{j} )$, for $j=1,2,3$ are equal to the corresponding values of the BA function on the curve $\bar \G$. In the real case the conditions have the form
\begin{gather*}
{\bf y}=G{\bar{\bf y}}
\end{gather*}
with matrix $G$ such that
\begin{gather}\label{Greal}
G^{-1}=G^t=\bar G=1.
\end{gather}
If $G=1$, then the gluing condition for $\psi_3$ is just the equation $c=\bar c$, while the reality equations for $y_1$ and $y_2$ give
\begin{gather}
c = - \frac{r^2_1}{h} \frac{f_{1} - {\bar f}_{1}}{f_{2} + {\bar f}_{2}} = \frac{r^2_2}{h}
\frac{f_{2} - {\bar f}_{2}}{ f_{1} + {\bar f}_{1}}.\label{eq:ccb}
\end{gather}
\begin{rem}
We stress that the formulae above hold for any solutions of the $O(3)$-model given by the Theorem~\ref{thm:main}. In general they are quasi-periodic.
\end{rem}
If the periodicity constraints are satisfied, then the functions $\psi_j$ are periodic or anti-periodic. For the curves constructed at the beginning of this section, i.e., the curves which are the normalizations of the singular spectral curves defined by the equation~\eqref{Fconstr} the functions~${\psi}_j$ are elliptic functions. Hence the function $v(z)$ is an elliptic function. In other words we have produced an instanton family of the solutions of the~$O(3)$ model. Let us compute its instanton charge~-- the degree of the corresponding map~$\Sigma\to S^2$.

From the equations~(\ref{BAapr1}), (\ref{BAapr2}) we see that the poles of the functions $f_1(z)$, $f_2(z)$ are the zeros of the function
\[ {\theta}(z U+Z_{0})=0, \qquad Z_0 = A(p_3)-A(\g_1)-A(\g_2)-Z.\]
These are exactly the coordinates $z_i$ of the turning point. Hence, $f_1$ and $f_2$ have $2\ell$ common poles.

Notice that the equation~\eqref{fg1} implies
\begin{gather}\label{ww}
v^{-1}(z)={r_1} f_1 (z) -\ii r_2 f_2 (z).
\end{gather}
Hence, the function $v(z)$ has poles at exactly half of poles of the functions $f_i$. The other half are zeros of~$v(z)$. Hence we get that the instanton number of solutions defined by spectral curve corresponding to the turning points of $2\ell$-particle eCM curves satisfying the constraints~\eqref{Fconstr} is equal to $\ell$ and $v(z)$ has the form
\begin{gather}
v(z) = A \prod_{i=1}^{\ell} \frac{{\sigma}(z - a_{i})}{{\sigma}(z - b_{i})},\label{eq:w-z}
\end{gather}
where the parameters $a_i, b_i \in E$ obey
\[
\sum_{i=1}^{\ell} ( a_{i} - b_{i} ) = 0.
\]
These parameters are determined by the curve $\G$ and the vector $Z$ in~(\ref{BAapr1}),~(\ref{BAapr2}).
\begin{rem}
The NV flows change the parameters $A$, $a_i$, $b_i$ in a complicated manner. In terms of~$\G$,~$Z$ these flows preserve~$\G$ and change $Z$ linearly.
\end{rem}

Let us count the number of parameters in our construction. First, we have $\ell-2$ complex parameters for the curves satisfying~\eqref{Fconstr}. The
space of differentials ${\rm d}\Omega$ on a factor curve with the simple poles at the branch points defining the admissible divisors is of dimension~$\ell$.
Note, that the corresponding potential $u(z,\zb)$ depends only on the equivalence class of the admissible divisor. Indeed, if $\tilde D$ is equivalent to~$D$, then the relation between the BA functions is just
${\tilde \psi}(z,{\zb} ; p) = {\chi}(p) {\psi}(z,\zb; p)$, where ${\chi}(p)$ is the unique function with poles at~$\tilde D$, zeros at~$D$ and normalized by the equation ${\chi}(P)=1$. Notice that the corresponding vectors $\tilde {\bf y}(z,\zb)$ and ${\bf y}(z,\zb)$ defined in~\eqref{y} are equal, thanks to the equation \[ {\rm d}\tilde\Omega=ff^{\tilde\sigma}{\rm d}\Omega. \]
Thus, overall we have $2\ell-3$ complex parameters for the curve and the equivalence class of the admissible divisor. In addition, we have $3$ real parameters for the matrix $G$ satisfying~\eqref{Greal}, i.e., the space of constructed potentials is of real dimension $4\ell-3$.

Recall the well-known formula
\begin{gather*}
u(z,\zb)=\pa_z\pa_{\zb}\ln ( 1+ v{\bar v} )
\end{gather*}
for the $\ell$-instanton potentials, where $v(z)$ is of the form~\eqref{eq:w-z}. Although $v(z)$ depends on
$2\ell$ complex parameters, the space of the corresponding potentials $u(z, {\zb})$ is of real dimension $4\ell-3$, since $u$ is invariant
under the ${\rm SU}(2)/\{ \pm 1 \} \approx {\rm SO}(3)$ fractional linear transformations{\samepage
\begin{gather}\label{frac_linear}
{\tilde v}(z)=\frac{c v(z)+d}{-\bar d v(z)+\bar c} , \qquad |c|^2+|d|^2=1.
\end{gather}
Hence, we reproduce all instanton potentials.}

For $G$ fixed, the solution of the $O(3)$-model is defined by a matrix $H$ obeying~\eqref{GH}. The ambiguity in the choice of $H$ is the transformation, equivalent to~\eqref{frac_linear}:
\begin{gather*}
H'=h H,\qquad h\in O(3,{\BR}) .
\end{gather*}

\subsection{Other components}
Consider now the $(2\ell-1)$-dimensional locus of turning points for which the spectral curve passes through the point $k=0$, $\alpha=\omega_i$ for some $i=1,2,3$. The space of the corresponding spectral curves is of dimension $\ell-1$ and in terms of the parameters $I_{\rm even}$ is defined by the equation
\begin{gather}\label{cmsing}
f({\zeta}({\omega}_{i}),\omega_{i})=0,
\end{gather}
where the function $f$ is defined by the equation~\eqref{ph}.

Note that by the parity arguments the number of branches of a spectral curve passing through the point $\left( k=0,{\alpha}={\omega}_{j}\right)$ is even, i.e for generic values of~$I_{\rm even}$ satisfying~\eqref{cmsing} the corresponding curve is singular and has one node. Let $\Gamma$ be a normalization of that curve. It is a smooth curve of genus $2\ell-1$ with involution $\sigma$ having four fixed points. Two, including the marked point $P$ over $\alpha=0$ and two over $\alpha=\omega_j$ that are preimages of the node. In that case the monodromy matrices~$W_\a^\pm$ in the equation~\eqref{Wconst} are not-trivial, cf.~\eqref{mon13}. This implies that the matrix $G$ of the Lemma~\ref{lm:main} is diagonal.

The corresponding solutions of $O(3)$-model are given by the same formulae as above: $y_j=r_j\psi_j$ with $\psi_j$ defined in the equations~\eqref{psi123} and (\ref{BA0})--(\ref{BAapr2}). The parameters of the solutions are: $\ell-1$ complex parameters for the spectral curve, $\ell-1$ complex parameters for the equivalence class of an admissible divisor. Overall we have $4\ell-4$ real parameters.

The twisting parameters of the solutions depend on the choice of~$\omega_j$. For example, for $j=1$ the functions $f_1(z)$, $f_2(z)$ defined in the equation~\eqref{fg-functions} have the following monodromy properties
\begin{gather}\label{mon13}
\begin{split}
&f_1(z+\omega_1)=f_1(z), \qquad f_1(z+\omega_2)=-f_1(z), \\
&f_2(z+\omega_1)=f_2(z), \qquad f_2(z+\omega_2)=-f_2(z).
\end{split}
\end{gather}
Hence, the function $v(z)$ defined in \eqref{x-w} has the following monodromy
properties:
\begin{gather*}
v(z+\omega_1) = v(z) , \qquad v(z+\omega_2)=-v(z).
\end{gather*}
The functions $f_1(z)$, $f_2(z)$ has $N=2\ell$ poles in the fundamental domain with periods~$\omega_1$,~$\omega_2$. From the equation~\eqref{ww} it follows that $w(z)$ has poles only at half of them. Therefore, it is of the form
\begin{gather*}
v(z)=C \prod\limits_{i=1}^{\ell} \frac{\sigma (z-c_i)}{\sigma(z-d_i)}
\end{gather*}
for some $c_i$, $d_i$ satisfying the equation
\begin{gather*}
2\sum_{i=1}^\ell (c_i-d_i)=\omega_2.
\end{gather*}

\subsection[O(2n+1) sigma model]{$\boldsymbol{O(2n+1)}$ sigma model}

We conclude the section by describing the curves giving the solutions to $O(2n+1)$-model for an arbitrary $n$. Recall that the spectral $2\ell$-particle eCM curve with involution is defined by the equation, cf.~\eqref{ph}:
\[
f ( k+\zeta (\alpha), \alpha ) = 0 , \]
with $I_{\rm odd}=0$. For generic values of the parameters $I_{\rm even}$ it is a smooth curve of genus $2\ell$ with two fixed points of involution over $\a=0$. Suppose now that these parameters satisfy the equations
\begin{gather}\label{last}
\pa_\alpha^i\pa_p^j ( {\sigma} ({\alpha}) f(p,\alpha) ) = 0 , \qquad i+j\leq 2n .
\end{gather}
Of course, in order for \eqref{last} to have solutions we should take $\ell$ sufficiently large.

Then the point $(k=0,{\alpha}=0)$ is a singular point of the spectral curve through which pass $2n+1$ branches. For generic $I_{\rm even}$ the system~\eqref{last} of $(n+1)(2n+1)$ linear equations this is the only singular point of the spectral curve, thus the genus of the normalization of the spectral curve equals $2\ell-2n^2-n-1$.

\begin{rem}
As in the case of the quantum mechanical models \cite{Nekrasov:2018pqq} whose classical limit is an algebraic integrable system, our solutions have a mysterious connection to ${\CalN}=2$ supersymmetric gauge theories in four dimensions. The spectral curves are the Seiberg--Witten curves, the differential $p_{x} {\rm d}p_{y}$ is essentially the Seiberg--Witten differential. In particular, the periodic solutions of the~$O(N)$ model corresponding to reducible Fermi curves are connected to the ${\CalN}=2^{*}$ theory~\cite{Donagi:1995cf}. The origin and the implications of this connection remain to be seen.
\end{rem}

\pdfbookmark[1]{References}{ref}
\LastPageEnding

\end{document}